\documentclass[letterpaper,aps,pra,twocolumn,tightenlines,preprintnumbers,nofootinbib,showkeys,superscriptaddress,longbibliography]{revtex4-2}

\usepackage[utf8]{inputenc}
\usepackage{XCharter}
\usepackage{mathptmx}
\usepackage[T1]{fontenc}
\usepackage{fullpage}
\usepackage{graphicx}
\usepackage[dvipsnames]{xcolor}
\usepackage{hyperref}
\hypersetup{colorlinks=true,citecolor=Red,linkcolor=NavyBlue,urlcolor=Brown}
\usepackage[caption=false]{subfig}
 
\usepackage{natbib}
\usepackage{relsize}
\usepackage[left=1.65cm,right=1.65cm,top=2cm,bottom=2cm]{geometry}

\usepackage{array}
\usepackage{longtable}
\usepackage{dcolumn}
\usepackage{bm}
\usepackage{tabularx}
\usepackage{amsmath}  
\usepackage{amsfonts}
\usepackage{amssymb}
\usepackage{float}
\usepackage{amsthm}
\usepackage{thmtools}
\usepackage{thm-restate}
\usepackage{braket}
\usepackage{fullpage}
\usepackage[linesnumbered, ruled, vlined]{algorithm2e}
\usepackage{nicefrac}
\usepackage{bbm}
\usepackage{commath}
\usepackage{mathtools}
\usepackage{varwidth}
\usepackage[most]{tcolorbox}
\usepackage[thinc]{esdiff}

\def \eps {\varepsilon}

\def\Pr{\mathrm{Pr}}

\newcommand{\paren}[1]{\left( #1 \right)}
\def\Tr{\mathrm{Tr}}

\newcommand{\polylog}[1]{\mathrm{polylog}\paren{#1}}

\newcommand{\bigO}{\mathcal{O}}
\newcommand{\cmnt}[1]{}
\newtheorem{theorem}{Theorem} 
\newtheorem{lemma}{Lemma}

\linespread{1.05}

\begin{document}

\title{Simulating quantum collision models with Hamiltonian simulations using early fault-tolerant quantum computers}

\author{Kushagra Garg}  
\email{kushagra.garg@research.iiit.ac.in}
\affiliation{Center for Computational Natural Sciences and Bioinformatics, International Institute of Information Technology, Hyderabad 500 032, India}
\affiliation{Center for Quantum Science and Technology, International Institute of Information Technology, Hyderabad 500 032, India}

\author{Zeeshan Ahmed}  
\email{zeeshan.ahmed@research.iiit.ac.in}
\affiliation{Center for Quantum Science and Technology, International Institute of Information Technology, Hyderabad 500 032, India}
\affiliation{Center for Security, Theory and Algorithmic Research, International Institute of Information Technology, Hyderabad 500 032, India}

\author{Subhadip Mitra}
\email{subhadip.mitra@iiit.ac.in}
\affiliation{Center for Computational Natural Sciences and Bioinformatics, International Institute of Information Technology, Hyderabad 500 032, India}
\affiliation{Center for Quantum Science and Technology, International Institute of Information Technology, Hyderabad 500 032, India}

\author{Shantanav Chakraborty}  
\email{shchakra@iiit.ac.in}
\affiliation{Center for Quantum Science and Technology, International Institute of Information Technology, Hyderabad 500 032, India}
\affiliation{Center for Security, Theory and Algorithmic Research, International Institute of Information Technology, Hyderabad 500 032, India}

\date{\today}

\begin{abstract}
We develop randomized quantum algorithms to simulate quantum collision models, also known as repeated interaction schemes, which provide a rich framework to model various open-system dynamics. The underlying technique involves composing time evolutions of the total (system, bath, and interaction) Hamiltonian and intermittent tracing out of the environment degrees of freedom. This results in a unified framework where any near-term Hamiltonian simulation algorithm can be incorporated to implement an arbitrary number of such collisions on early fault-tolerant quantum computers: we do not assume access to specialized oracles such as block encodings and minimize the number of ancilla qubits needed. In particular, using the correspondence between Lindbladian evolution and completely positive trace-preserving maps arising out of memoryless collisions, we provide an end-to-end quantum algorithm for simulating Lindbladian dynamics. For a system of $n$-qubits, we exhaustively compare the circuit depth needed to estimate the expectation value of an observable with respect to the reduced state of the system after time $t$ while employing different near-term Hamiltonian simulation techniques, requiring at most $n+2$ qubits in all. We compare the CNOT gate counts of the various approaches for estimating the Transverse Field Magnetization of a $10$-qubit XX-Heisenberg spin chain under amplitude damping. Finally, we also develop a framework to efficiently simulate an arbitrary number of memory-retaining collisions, i.e.,  where environments interact, leading to non-Markovian dynamics. Overall, our methods can leverage quantum collision models for both Markovian and non-Markovian dynamics on early fault-tolerant quantum computers, shedding light on the advantages and limitations of simulating open systems dynamics using this framework.
\end{abstract}
\maketitle


\section{\label{sec: Introduction}Introduction}

Simulating the dynamics of closed quantum systems, also known as Hamiltonian simulation, is widely considered to be one of the foremost applications of a quantum computer~\cite{lloyd1996universal, berry2014exponential, berry2015simulating, campbell2019random,low2019hamiltonian}. A natural extension is the problem of simulating open systems dynamics, wherein the underlying quantum system interacts with an environment, resulting in non-unitary dynamics. Over the years, several techniques have been put forth that capture the effects of environmental interactions on the underlying quantum system~\cite{breuer2002theory, rivas2012open}. Quantum master equations derived in a wide variety of settings accurately describe the system dynamics under both coherent evolution and environmental interactions. Of these, the most widely analyzed is the Gorini-Kossakowski-Sudarshan-Lindblad (GKSL) master equation (generally referred to as a Lindblad master equation), which provides a general model to describe Markovian quantum dynamics, i.e., when the system and the environment remain uncorrelated throughout the evolution~\cite{Gorini_1976, Lindblad1976}. This equation generates dynamics of completely positive trace-preserving (CPTP) maps, capturing effects such as spontaneous emission, dephasing, and dissipation~\cite{breuer2002theory, rivas2012open}. Besides being fundamental to our understanding of environmental effects on quantum systems, Lindbladian dynamics has found diverse applications in quantum computation, ranging from quantum error mitigation strategies~\cite{endo2018practical,kandala2019error} to recent quantum algorithms for preparing ground~\cite{ding2024single} and thermal states of Hamiltonians~\cite{chen2023quantum}. Thus, the problem of simulating open systems dynamics is of considerable importance. 

One would expect quantum computers to provide a significant advantage over their classical counterparts for this problem, owing to the favorable scaling of the Hilbert space dimension with the number of qubits. However, since most current methods require significant resources in the form of access to specialized oracles such as block-encodings~\cite{low2019hamiltonian, chakraborty2019power}, several ancilla qubits, and sophistical controlled operations~\cite{cleve2017efficient, childs2017efficient,li2023simulating, patel2023wave1, patel2023wave2}, they are ill-suited for the early fault-tolerant era, where quantum computers have limited ancilla space, and complicated control logic is absent~\cite{katabarwa2024early}. This naturally leads us to the question of whether open systems dynamics can be efficiently simulated on early fault-tolerant quantum computers. 

We consider quantum collision models, also known as repeated interaction schemes, which have emerged as a powerful framework to describe the dynamics of open quantum systems~\cite{bruneau2014repeated, ciccarello2022collisionreview, Ciccarello2013PRA, cattaneo2021collision, strasberg2017quantum}. In such models, the environment comprises several individual subsystems, each interacting with the system for a fixed time interval, one after another. The underlying idea is that environmental effects can arise from the repeated sequential interactions between the underlying system and the individual environment subsystems (referred to as \textit{sub-environments} throughout this article). Despite their simplicity, collision models can effectively describe a variety of physical quantum systems, since it is possible to engineer a broad range of dynamics through a sequence of relatively simple discrete interactions in this framework. For instance, the basic framework of a microscopic maser can be described using collision models \cite{filipowicz1986maser}. In quantum thermodynamics, collision models have been employed in the study of quantum batteries \cite{barra2019dissipative, seah2021battery}, Landauer's principle \cite{lorenzo2015landauer}, and quantum thermalization \cite{Scarani2002, Manatuly2019, strasberg2017quantum, Rodrigues2019, Leitch2022, Grimmer2018, Hammam_2022}. Other applications include the study of quantum optical systems \cite{ciccarello2017optics, Grimsmo2015, Whalen2017, Pichler2016, Fischer2018, Fischer_2018_JOPC, ciluffo2020collisional}, modelling of continuous measurements \cite{Gross_2018}, and quantum metrology or thermometry \cite{seah2019collisional, shu2020surpassing}. Indeed, Markovian dynamics naturally emerge from memory-less collisions, i.e., when the environment comprises many non-interacting sub-environments interacting exactly once with the system. On the other hand, in scenarios where an interaction between the system and an environment is followed by an interaction between two or more sub-environments, memory effects arise, leading to non-Markovian dynamics~\cite{Pellegrini2009-eh, Man2019, Lorenzo2017-je, Lorenzo2016, Ciccarello2013PRA}.    

There are clear advantages in simulating open systems dynamics using quantum collision models. The interaction between the system and an environment subsystem is simply the time evolution of the corresponding total Hamiltonian (sum of the system, sub-environment, and interaction Hamiltonian), and repeated interactions boil down to implementing a composition of Hamiltonian evolutions. Thus, these models provide an avenue to simulate various open quantum systems dynamics using only Hamiltonian simulation algorithms (which have seen remarkable progress) as core subroutines. Indeed, this includes both state-of-the-art Hamiltonian simulation algorithms with near-optimal complexities~\cite{berry2015hamiltonian, berry2015simulating, low2017optimal, low2019hamiltonian} as well as methods suitable for near-term implementation, such as qDRIFT and (low-order) Trotter methods, that are simple and provide better performance in practice~\cite{lloyd1996universal, childs2018toward, campbell2019random, childs2021theory}. Ref.~\cite{ding2024simulating} has recently explored the possibility of simulating Lindbladian dynamics using Hamiltonian simulations. It unravels Lindblad dynamics as a stochastic Schr\"{o}dinger equation and constructs an effective Hamiltonian from the Kraus operator representation obtained from the discretization of these unraveling equations. While this approach achieves near-optimal query complexity, it still requires several ancilla qubits and block-encoding access to the effective Hamiltonian to achieve optimal complexity.

In this paper, we build on the results of Ref.~\cite{pocrnic2023quantum}, where the authors provide explicit bounds for approximating Lindbladian dynamics through Markovian maps from quantum collision models. They compute the query complexity of this problem using Hamiltonian simulation by qubitization~\cite{low2019hamiltonian} and higher-order Trotter methods~\cite{childs2021theory}. In contrast, we explore end-to-end complexities of simulating open systems dynamics using quantum collision models restricted by a lack of ancilla qubits and no access to specialized oracles such as block encodings. We also consider near-term simulation methods such as low- and high-order Trotter methods~\cite{childs2021theory}, qDRIFT~\cite{campbell2019random}, and, in particular, incorporate the randomized Hamiltonian simulation technique of Refs.~\cite{wan2022randomized, wang2023qubit, chakraborty2023implementing}, which is a particular instance of a Linear Combination of Unitaries (LCU) that can be implemented efficiently on early fault-tolerant quantum computers.

We first consider a quantum collision model that naturally gives rise to Markovian dynamics. We assume the environment comprises a discrete set of subsystems, each interacting with the system for a time $\Delta t$, one by one, before being traced out. This process is repeated $K$ times, leading to a Markovian $K$-collision map. Given the total Hamiltonian $\overline{H}$ for each collision (the system, the corresponding sub-environment, and their interaction), we develop a randomized algorithm that outputs an $\varepsilon$-additive estimate of the expectation value of any observable $O$ with respect to the reduced state of the system after undergoing $K$ memoryless collisions (i.e., a Markovian $K$-collision map). This is achieved by composing a Hamiltonian simulation that implements $e^{-i\overline{H}\Delta t}$ to a precision of $\eps'\approx\frac{\eps}{K\|O\|}$, $K$ times. Using different near-term Hamiltonian simulation techniques, we exhaustively compare the end-to-end cost of estimating the desired expectation value.

Subsequently, we develop a randomized quantum algorithm for simulating Lindbladian dynamics, which can be shown as a particular instance of the  Markovian $K$-collision map. Our procedure (i) does not require access to block encodings of $H$, and (ii) is qubit efficient since it only employs near-term Hamiltonian simulation procedures. We exhaustively compare the cost (circuit depth, qubit count, and classical repetitions) of simulating Lindbladian dynamics using different Hamiltonian simulation techniques. For instance, we find that in terms of the evolution time and precision, a Hamiltonian simulation by the Single-Ancilla LCU (SA-LCU) method requires a circuit depth of $\widetilde{\bigO}(t^3/\eps)$.\footnote{Throughout this article, we follow standard complexity-theoretic notations. See Sec.~\ref{sec: notations}} This method outperforms both the first-order Trotter method and qDRIFT by a factor of $1/\eps$. On the other hand, second-order Trotterization has a circuit depth of $\widetilde{\bigO}(t^{9/4}/\eps^{5/4})$: compared to the Hamiltonian simulation by SA-LCU, it has a better dependence on $t$ but a worse dependence on the precision. Interestingly, for $2k$-order, with $k\to \infty$, the circuit depth is better only by a factor of~$t/\eps$ compared to first-order Trotter and qDRIFT. In fact, the circuit depth of higher-order Trotter is $\widetilde{\bigO}(t^2/\eps)$, which is optimal for simulating Linbladian dynamics using quantum collision models~\cite{cleve2017efficient} up to log factors. 

We also numerically benchmark the performance of simulating Lindbladian dynamics using collision models. For this, we consider a $10$-qubit transverse-field Ising Hamiltonian under amplitude damping. Our goal is to estimate the average transverse magnetization of the reduced quantum state of the system up to an additive accuracy of $\varepsilon$. If we demand high precision, say $\varepsilon=10^{-4}$, the Hamiltonian simulation by SA-LCU has a CNOT gate count (per coherent run) $200$ times lower than that of the second-order Trotter method and $2000$ times lower than qDRIFT. On the other hand, for low precision and long evolution time $t$, the higher-order Trotter methods outperform other near-term techniques.

Finally, we also explore the possibility of simulating non-Markovian collisions on early fault-tolerant quantum computers. We extend the notion of Markovian $K$-collision maps to include interactions between the environment subsystems. We consider the framework of Ciccarello et al.~\cite{Ciccarello2013PRA}, where a collision between the system and an environment subsystem is followed by a collision between two consecutive environment subsystems. More precisely, the $j$-${\mathrm{th}}$ iteration comprises the following two interactions: first, the system interacts with the sub-environment $j$ and then there is an interaction between the sub-environments $j$ and $j+1$. We simulate a total of $K$ iterations (non-Markovian $K$-collision map) of these interaction sequences using near-term Hamiltonian simulation procedures, requiring circuit depths that scale similarly to the Markovian case. Thus, our overall framework allows for the possibility of simulating a broad range of open systems dynamics through quantum collision models.

The remainder of this paper is organized as follows. Sec.~\ref{sec: notations}, we formalize the notations we use throughout the article. We present a framework for simulating memoryless collisions in Sec.~\ref{sec:Modified Collision Model}, and develop a randomized quantum algorithm to simulate a Markovian $K$-collision map. Sec.~\ref{sec:Lindbladian Dynamics Simulation} applies this to simulate Lindbladian dynamics and comprises detailed comparisons of the complexities. We also benchmark the gate counts for the estimation of the overall magnetization of the transverse-field Ising model under amplitude damping. In Sec.~\ref{sec:Non-markovian dynamics}, we extend our approach to simulate non-Markovian collisions and define a non-Markovian $K$-collision map. We conclude by summarizing our results and discussing possible open problems in Sec.~\ref{sec:Discussion and Outlook}.

\section{Notations}
\label{sec: notations}
We use $g(n)= \bigO(f(n))$  to imply that $g$ is upper bounded by $f$, i.e., there exist constants $k_1$ and $k_2$ such that $\forall n>k_1$, $g(n)\leq k_2\cdot f(n)$. We also follow the standard convention of using \textit{tilde} ($\sim$) to hide polylogarithmic factors. For instance, $\widetilde{\bigO}(f(n))=\bigO(f(n)\mathrm{polylog}(f(n)))$. The trace of an operator $A$ is denoted by $\Tr[A]$, while the expectation value of the operator will be denoted by $\mathbb{E}[A]$. The probability of an event $X$ occurring will be denoted by $\Pr[X]$. 

We use operator as well as superoperator norms. The Schatten $p$-norm of the operator $X$ is defined as 
\begin{equation*}
\norm{X}_p =\bigg(\sum_{j} \sigma^{p}_j(X)\bigg)^{1/p},    
\end{equation*}
where $\sigma_j(X)$ is the $j$-${\mathrm{th}}$ singular value of $X$. So if $\sigma_{\max}(X)$ denotes the maximum singular value of $X$, we have 
\begin{equation*}
\lim_{p\rightarrow\infty} \norm{X}_p = \sigma_{\max} \cdot \lim_{p\rightarrow\infty} \bigg(\sum_{j} \dfrac{\sigma^{p}_j(X)}{\sigma^{p}_{\max}(X)}\bigg)^{1/p}=\ \sigma_{\max},
\end{equation*}
which is the spectral norm of the operator $X$. We will denote this as simply $\norm{X}$. For instance, for any density operator $\rho$, we have $\norm{\rho}_1=1$, while for any unitary $U$, the spectral norm $\norm{U}=1$. For a superoperator $\mathcal{M}$ which maps operators to operators, we define the induced $1$-norm of $\mathcal{M}$ as follows: 
\begin{equation*}
\|\mathcal{M}\|_{1 \to 1} = \sup_{\rho \neq 0} \frac{\|\mathcal{M}[\rho]\|_1}{\|\rho\|_1}.
\end{equation*}

We provide a comprehensive list of all mathematical symbols used in this paper, along with their definitions in Appendix~\ref{sec:symbols} (see Table \ref{table:notation-table}).

\section{Quantum Collision Models}\label{sec:Modified Collision Model}
Collision models are a versatile framework for simulating the dynamics of open quantum systems, where the system interacts with its environment through a sequence of discrete interactions or `collisions.' Unlike continuous-time approaches, collision models treat the environment as a set of discrete, independent subsystems (referred to as sub-environments throughout this article) that interact with the system one at a time. This setup provides valuable physical insights and a constructive approach to simulating complex open systems dynamics. For instance, repeated interactions between the system and each environment subsystem may be as follows: each sub-environment (initially uncorrelated with both the system and other sub-environments) interacts with the system for a short duration before being traced out, dissipating information and energy while resulting in a CPTP map. The overall dynamics of these collisions is described by a composition of the CPTP maps, allowing for the modeling of Markovian (Lindbladian) dynamics. In this section, we discuss the possibility of simulating an arbitrary number of such collisions using early fault-tolerant quantum computers. 

\subsection{Markovian $K$-collision map}
\label{subsec:K-collision-approx}
Let us consider a quantum system in an $n$-qubit Hilbert space $\mathcal{H}_S$ that is coupled to a quantum environment belonging to the space $\mathcal{H}_E$, formed out of the tensor product of $m$ environment subsystems: $\mathcal{H}_E = \mathcal{H}_{E_1} \otimes \mathcal{H}_{E_1} \dots \otimes \mathcal{H}_{E_{m}}$. We denote the system Hamiltonian by $H_S$. The environment is a discrete sum of environment subsystems, with the Hamiltonian of the sub-environment $j$ denoted by $H_{E_j}$. The $j$-${\mathrm{th}}$ collision corresponding to the interaction between the system $S$ and the $j$-th sub-environment, $E_j$, is denoted by the interaction Hamiltonian $H_{I_j}$. Thus, the total Hamiltonian is given by 
\begin{equation}
\label{eq:total-Hamiltonian}
H=H_S+\sum_{j=1}^{m}\left(H_{I_j}+H_{E_j}\right).\end{equation}
We assume that the total Hamiltonian corresponding to the $j$-${\mathrm{th}}$ collision can be expressed as a Linear combination of Pauli operators, i.e., 
\begin{align}
\label{eq:j-th collision Hamiltonian}
    H_j &=  H_S + H_{E_j} + H_{I_j} = \sum_{i=1}^{L_j}h_{i,j}P_{i,j},
\end{align}

where $P_{i, j} \in \pm\{I, \sigma^x, \sigma^y, \sigma^z\}^{\otimes~n}$ and, without loss of generality, each $h_{i,j} \in \mathbb{R}^{+}$. Also, let $L=\max_j L_j$. Finally, for $1\leq j \leq K$, we denote the normalized Hamiltonians $\overline{H}_j = H_j/\beta_j$, where $\beta_j= \sum_{i=1}^{L_j} h_{i,j}$. Note that this is also without loss of generality, as we can always rescale $\Delta t$ by multiplying it with $\beta_j$.

Now, we are in a position to describe the dynamics induced by the $j$-${\mathrm{th}}$ collision: the system, in state $\rho_S$, interacts with the $j$-${\mathrm{th}}$ sub-environment, initialized in the state $\rho_{E_j}$, for a duration of $\Delta t$. The dynamics induced by the $j$-th collision is then given by the time evolution operator
\begin{equation}
\label{eq: Collision unitary}
    U_j = e^{-i\beta_j \overline{H}_j\Delta t}
\end{equation}
Following this, the sub-environment is traced out. Formally, we define a Markovian collision map: 
\begin{restatable}[Markovian collision map]{definition}{}
\label{def: collision map}
Let $j\in [1, K]$, $\rho_{E_{j}}$ represent the initial state of the $j$-th sub-environment, and $U_j$ be the unitary representing the interaction between the system and the $j$-th sub-environment. The $j$-th collision map $\Phi_j$ is defined as:
    \begin{equation}
        \Phi_j[.] \equiv \Tr_{E_{j}}\left[U_j\left(. \otimes \rho_{E_{j}}\right)U_j^{\dagger}\right]      
    \end{equation}
Here, $\Tr_{E_j}$ denotes the partial trace over the $j$-th sub-environment, resulting in a reduced state for the system.
\end{restatable}
\noindent
For multiple collisions, this process is repeated. The resulting iterative sequence for $K$ collisions can be described by composing $K$ such collision maps. Formally, we define a Markovian $K$-collision map as follows:

\begin{restatable}[Markovian $K$-collision map]{definition}{}
\label{def: K-collision map}
Let $\Phi_1$, $\Phi_2$, $\dots$,  $\Phi_K$ be the collision maps from Definition \ref{def: collision map}. The $K$-collision map $\mathcal{M}_K$ is the composition of these maps, defined as:  
    \begin{align}
        \mathcal{M}_{K}[\cdot] \equiv  \left[\bigcirc_{j=1}^{K} \Phi_{j} \left[ \cdot\right]\right].  
    \end{align}    
\end{restatable}
\noindent
Thus, the $K$ collisions correspond to composing the collision map (evolving under $H$ for time $t=\Delta t$ followed by tracing out the corresponding sub-environment) $K$ times. The Hamiltonian simulation is the key subroutine in simulating a Markovian $K$-collision map on a quantum computer. Our goal is to develop a randomized quantum algorithm that estimates the expectation value of an observable $O$ for a system (initialized in $\rho_S$) after $K$ collisions, up to an additive precision $\eps$. That is, the algorithm outputs $\mu$ such that
\begin{equation}
\label{eq: error between experiment and the collision map}
\left|\mu -\Tr\left[O\mathcal{M}_K\left[\rho_{S}\right]\right]\right| \leq \eps.
\end{equation}
In what follows, we estimate the precision required of any Hamiltonian simulation procedure to estimate $\mu$ with $\varepsilon$-additive accuracy. Let us denote by $\widetilde{U}_j$, the circuit corresponding to the Hamiltonian simulation procedure implementing $e^{-i\Delta t\overline{H}_j}$ to some accuracy (to be determined later). Then, it is possible to define approximate versions of the collision maps from Definitions \ref{def: collision map} and \ref{def: K-collision map}, respectively, where the exact time evolution operator $U_j$ is now replaced by $\bar{U}_j$: 
\begin{equation}
\label{eq: approximate collision map}
        \widetilde{\Phi}_j[\cdot] \equiv \Tr_{E_{j}}\left[\widetilde{U}_j\left(\cdot \otimes \rho_{E_{j}}\right)\widetilde{U}_j^{\dagger}\right],
    \end{equation}
and subsequently, the approximate $K$-Collision map is defined as   
\begin{align}
\label{eq:approx-K-collision-map}
        \widetilde{\mathcal{M}}_{K}[\cdot] \equiv  \left[\bigcirc_{j=1}^{K} \widetilde{\Phi}_{j} \left[ \cdot \right]\right].
\end{align}    

We now prove that if the Hamiltonian simulation procedures in the construction of the approximate Markovian $K$-collision map are implemented with a precision of $\varepsilon/(3 K \norm{O})$, then the expectation value of observable $O$ with respect to the state of the system with respect to the map $\widetilde{M}_K$ is $\varepsilon$-close to the desired expectation value. We do so via the following lemma: 
\begin{restatable}[Bounds on the Markovian collision map]{lemma}{}\label{lemma: approximate K-collision map distance bound}
    Let $O$ be an observable and $\widetilde{\mathcal{M}}_K$ represent the approximate Markovian $K$-collision map in Eq.~\eqref{eq:approx-K-collision-map}, 
    where
    \begin{equation}
        \max_{1\leq j \leq K}\norm{U_j - \widetilde{U}_j} \leq \frac{\eps}{3K\norm{O}}.
    \end{equation}
    Then, the expectation value of $O$ over a state evolved under the approximate map satisfies,
    \begin{equation}
        \left|\Tr\left[O\mathcal{M}_K[\rho]\right] - \Tr\left[O \widetilde{\mathcal{M}}_K[\rho]\right]\right| \leq \eps.
    \end{equation}
\end{restatable}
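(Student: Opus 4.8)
The plan is to reduce the observable-expectation bound to a statement about the distance between the two superoperators $\mathcal{M}_K$ and $\widetilde{\mathcal{M}}_K$ in the induced $1 \to 1$ norm, and then bound that superoperator distance by a telescoping argument. Concretely, I would first note that by H\"older's inequality for Schatten norms,
\begin{equation*}
\left|\Tr\left[O\left(\mathcal{M}_K[\rho] - \widetilde{\mathcal{M}}_K[\rho]\right)\right]\right| \leq \norm{O}\cdot\norm{\mathcal{M}_K[\rho] - \widetilde{\mathcal{M}}_K[\rho]}_1 \leq \norm{O}\cdot\norm{\mathcal{M}_K - \widetilde{\mathcal{M}}_K}_{1\to 1},
\end{equation*}
using that $\norm{\rho}_1 = 1$ for the density operator. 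This isolates the core task: showing $\norm{\mathcal{M}_K - \widetilde{\mathcal{M}}_K}_{1\to 1} \leq \varepsilon/\norm{O}$, which combined with the above yields the claim.

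Next, I would establish a per-collision bound on $\norm{\Phi_j - \widetilde{\Phi}_j}_{1\to 1}$. Each map is a partial trace of a conjugation by $U_j$ (respectively $\widetilde{U}_j$) after tensoring with the fixed sub-environment state $\rho_{E_j}$. The key estimate is that conjugation maps are close in the induced $1$-norm when the unitaries are close: for any state $\sigma$, writing $U_j \sigma U_j^\dagger - \widetilde{U}_j \sigma \widetilde{U}_j^\dagger = U_j\sigma(U_j - \widetilde{U}_j)^\dagger + (U_j - \widetilde{U}_j)\sigma \widetilde{U}_j^\dagger$, the triangle inequality and submultiplicativity of the trace norm under multiplication by operators of bounded spectral norm give a factor of $2\norm{U_j - \widetilde{U}_j}$. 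Since tensoring with $\rho_{E_j}$ and taking partial trace are both contractive (non-expanding) in the trace norm, I would conclude
\begin{equation*}
\norm{\Phi_j - \widetilde{\Phi}_j}_{1\to 1} \leq 2\,\norm{U_j - \widetilde{U}_j} \leq \frac{2\varepsilon}{3K\norm{O}}.
\end{equation*}

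Then I would use the standard telescoping/hybrid argument for compositions of channels. Writing $\mathcal{M}_K - \widetilde{\mathcal{M}}_K$ as a telescoping sum over the $K$ positions where one swaps an exact $\Phi_j$ for an approximate $\widetilde{\Phi}_j$, and using subadditivity of the induced norm under composition together with the fact that each $\Phi_j$ and $\widetilde{\Phi}_j$ is a CPTP map (hence has induced $1\to 1$ norm exactly $1$, so it does not amplify errors), I get
\begin{equation*}
\norm{\mathcal{M}_K - \widetilde{\mathcal{M}}_K}_{1\to 1} \leq \sum_{j=1}^{K}\norm{\Phi_j - \widetilde{\Phi}_j}_{1\to 1} \leq K\cdot\frac{2\varepsilon}{3K\norm{O}} = \frac{2\varepsilon}{3\norm{O}} \leq \frac{\varepsilon}{\norm{O}}.
\end{equation*}
Substituting into the H\"older bound closes the proof.

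The main obstacle I anticipate is not any single inequality but getting the constants and the contractivity claims exactly right: one must verify carefully that $\widetilde{\Phi}_j$ is genuinely trace-preserving (it is, being a partial trace of a unitary conjugation, even for the approximate unitary $\widetilde{U}_j$) so that the telescoping step incurs no blow-up, and one must confirm the factor of $2$ in the conjugation bound rather than an unnecessarily loose constant. The $1/3$ slack in the hypothesis $\varepsilon/(3K\norm{O})$ comfortably absorbs the factor of $2$, which suggests the authors may be reserving the remaining budget for additional errors (e.g.\ the Monte Carlo sampling error in the actual estimator $\mu$) accounted for elsewhere; for this lemma alone the factor $2/3 \leq 1$ suffices.
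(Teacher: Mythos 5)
Your proof follows essentially the same route as the paper's: a per-collision bound on the conjugated states, contractivity of the partial trace, a telescoping/hybrid bound over the composition of channels (the paper's Lemma~\ref{thm: Bounds on the distance between the composition of maps}), and H\"{o}lder's inequality (Lemma~\ref{thm:holder}) to convert trace distance into an expectation-value bound. Applying H\"{o}lder first and phrasing the telescoping in the induced $1\to 1$ norm rather than on states is a cosmetic reordering, not a different argument.

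One caveat deserves attention, and it explains the constant you found puzzling. Your factor of $2$ in $\norm{\Phi_j - \widetilde{\Phi}_j}_{1\to 1} \leq 2\norm{U_j - \widetilde{U}_j}$, and your parenthetical claim that $\widetilde{\Phi}_j$ is ``genuinely trace-preserving,'' both presuppose that $\widetilde{U}_j$ is exactly unitary. That holds for the Trotter and qDRIFT instantiations, but not for the paper's principal application: in Theorem~\ref{thm: Algorithm 1 proof} the lemma is invoked with $\widetilde{U}_j = \sum_k \alpha_{jk} W_{jk}$, the truncated-Taylor LCU of Lemma~\ref{lemma: lcu of time evolution}, which is a sum of unitaries but not itself unitary. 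The paper's factor of $3$ (via Theorem~\ref{thm:distance-expectation}) is designed exactly for this case: with $\norm{U_j} = 1$ and $\gamma = \norm{U_j - \widetilde{U}_j} \leq 1$ one gets $\norm{U_j \rho U_j^{\dagger} - \widetilde{U}_j \rho \widetilde{U}_j^{\dagger}}_1 \leq 2\norm{U_j}\gamma + \gamma^2 \leq 3\gamma$, which is precisely your bound once $\norm{\widetilde{U}_j} \leq 1 + \gamma$ is handled honestly. So the $1/3$ slack in the hypothesis is not, as you guessed, budget reserved for the Monte Carlo sampling error (that is handled separately in Theorem~\ref{thm: Algorithm 1 proof} via $\eps' = \eps/(6K\norm{O})$ and a triangle inequality); it is there to absorb the non-unitarity of $\widetilde{U}_j$. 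Relatedly, your telescoping step asserts that each $\widetilde{\Phi}_j$ has induced $1\to 1$ norm exactly $1$; for non-unitary $\widetilde{U}_j$ this holds only up to $\bigO(\gamma)$ corrections. This is a gloss the paper's own composition lemma shares (it assumes both families of maps are CPTP), so your argument is no worse off than the paper's on that point, but the unitarity claim as you stated it is false in the very regime where the lemma is actually used, and your proof should either restrict to unitary circuits or run the estimate with $\norm{\widetilde{U}_j} \leq 1 + \gamma$ as the paper effectively does.
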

\begin{proof}
We consider the error between the operations performed on the state $\rho$ under $U_j$ and $\widetilde{U}_j$. More precisely, let $\norm{U_j-\widetilde{U}_j}\leq \xi_j$, where we denote the maximum error in any of the Hamiltonian simulation procedures in the definition of the approximate $K$-collision map by $\xi_{\max}$, i.e., $\xi_{\max}=\max_{1\leq j \leq K} \xi_{j}$. Then, using Theorem \ref{thm:distance-expectation}, we have
    \begin{align}
        &\big\| U_j \rho  U_j^{\dagger} - \widetilde{U}_j \rho \widetilde{U}_j^{\dagger}\big\|_1 
        \leq 3 \xi_j,
    \end{align}
for any quantum state $\rho$. Since partial trace is a CPTP operation, it is contractive under the trace norm. Therefore, we obtain,
    \begin{align}
        &\bigg\| \Tr_{E_j} \left[U_j \rho U_j^{\dagger} \right]- \Tr_{E_j} \left[ \widetilde{U}_j \rho \widetilde{U}_j^{\dagger} \right] \bigg\|_1 \leq 3 \xi_j.
    \end{align}
This inequality implies that the trace distance between the exact map $\Phi_j[\rho]$ and the approximate map $\widetilde{\Phi}_j[\rho]$ is bounded by $3 \xi_j$:
    \begin{align}
        \left\| \Phi_j[\rho] - \widetilde{\Phi}_j[\rho] \right\|_1 \leq 3 \xi_j.
    \end{align}
So,
    \begin{align}
        \max_j \norm{\Phi_j[\rho] - \widetilde{\Phi}_j[\rho]}_1 \leq 3 \xi_{\max}.
    \end{align}
Now, from Lemma \ref{thm: Bounds on the distance between the composition of maps}, we obtain,
    \begin{align}
        \left\| \bigcirc_{j=1}^{K} \Phi_j[\rho] - \bigcirc_{j=1}^{K} \widetilde{\Phi}_j[\rho] \right\|_1 \leq 3 K \xi_{\max}.
    \end{align}
Thus, we have
    \begin{equation}
        \| \mathcal{M}_K[\rho] - \widetilde{\mathcal{M}}_K[\rho] \|_1 \leq 3 K \xi_{\max}.
    \end{equation}
Finally, we apply Theorem \ref{thm:distance-expectation} once again to bound the difference in the expectation values of the observable $O$ under $\mathcal{M}_K$ and $\widetilde{\mathcal{M}}_K$:
   \begin{align}
        \left| \Tr\left[O \mathcal{M}_K[\rho] \right] -\Tr\left[O\widetilde{\mathcal{M}}_K[\rho] \right] \right| \leq 3 K \|O\| \xi_{\max}.
    \end{align}
Therefore, by choosing $\xi_{\max} = \varepsilon/(3K\norm{O})$, we can ensure that
    \begin{equation}
        \left| \Tr\left[O \mathcal{M}_K[\rho] \right] - \Tr\left[O \widetilde{\mathcal{M}}_K[\rho] \right] \right| \leq \eps.
    \end{equation}
\end{proof}

Now, we can compare the complexities of using different Hamiltonian simulation algorithms to implement a Markovian $K$-collision map on a quantum computer. As mentioned earlier, we will primarily focus on near-term Hamiltonian simulation techniques, i.e., ones that do not require multiple ancilla qubits. We will consider Trotter methods, qDRIFT, and also the SA-LCU method of Ref.~\cite{chakraborty2023implementing}. We need to estimate the cost of composing $K$ Hamiltonian simulation algorithms, each implementing $e^{-i\overline{H}\Delta t}$ to a precision $\varepsilon/(3K\norm{O})$, from Lemma~\ref{lemma: approximate K-collision map distance bound}. So, if a Hamiltonian simulation algorithm requires a circuit depth of
$$
\tau\left(\Delta t,~\eps/(K\|O\|)\right),
$$
to implement $U_j$, up to an additive precision of $\bigO(\eps/(K\|O\|))$, the $K$-collision map can be implemented within a circuit depth of
$$
\tau_d=\bigO\left(K \tau\left(\Delta t,~\eps/(K\|O\|)\right) +K\tau_{\rho_E}\right),
$$
where $\tau_{\rho_E}$ is the maximum of the circuit depths of the unitaries preparing the sub-environments, i.e., if for $j\in [1,K]$, if $\tau_{E_j}$ is the circuit depth of preparing $\rho_{E_j}$, then $\tau_{\rho_E}=\max_{j\in [1, K]}\tau_{\rho_{E_j}}$. While any near-term Hamiltonian simulation procedure can be used to implement a Markovian $K$-collision map, the SA-LCU algorithm~\cite{chakraborty2023implementing} demands some attention. This method, described in the next section, expresses each $U_j$ as a linear combination of strings of Pauli operators (say $\widetilde{U}_j$) with the total weight of the coefficients $\alpha=\bigO(1)$. Now, composing individual Hamiltonian simulations for implementing the Markovian $K$-collision map can potentially lead to an exponential scaling $\alpha^K$, detrimentally affecting the circuit depth. However, below, we show that it is possible to implement the Markovian $K$-collision map while bypassing this exponential scaling. 

\begin{figure*}[ht]
\centering{
    \scalebox{0.7}{
    \includegraphics[width=1\textwidth]{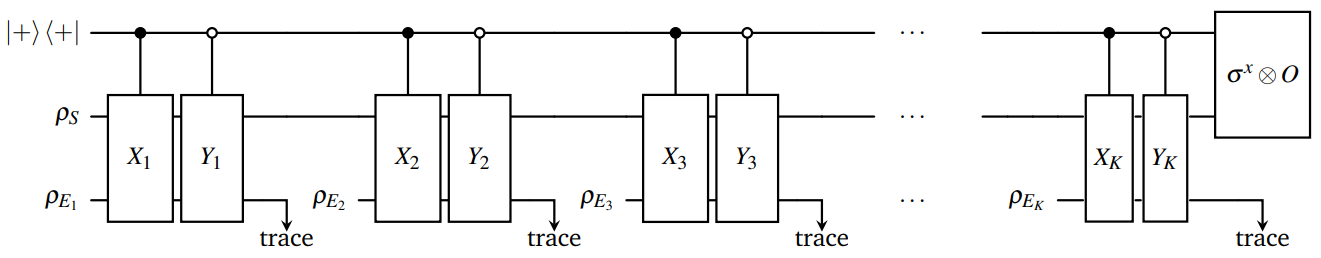}         
    }}
    \caption{
The quantum circuit for simulating a $K$-collision map using Hamiltonian simulation by SA-LCU. The algorithm applies controlled and anti-controlled sampled unitaries ($X_j$ and $Y_j$, respectively) for the interaction between the system and each sub-environment, following which the latter is traced out. This sequence is repeated $K$ times, corresponding to the $K$ collisions. At the end of the process, the ancilla qubit and the system are measured. Notably, only a single environment register suffices, as it can be reused following the tracing out of the previous environment subsystem.
}\label{fig: single ancilla collision model circuit}
\end{figure*}

\subsection{Using Hamiltonian simulation by Single-Ancilla LCU to simulate the Markovian $\mathbf{K}$-collision map}
\label{subsec:ham-sim-single-ancilla-k-collision}
Given a Hamiltonian $H=\sum_{\ell=1}^{L}p_{\ell} P_{\ell}$, where $P_{\ell}$ is a string of Pauli operators and $\sum_{\ell}p_{\ell}$, there is a way to express $U=e^{-i\tau H}$ as a linear combination of Clifford gates and Pauli rotations \cite{wan2022randomized, wang2023qubit, chakraborty2023implementing}. This involves writing down the Taylor series expansion of the time evolution operator $U$ and truncating after some $q$ terms to obtain $\widetilde{U}$, which is now an LCU, expressed as $\widetilde{U}=\sum_{j}\alpha_j W_j$, such that $\norm{U-\widetilde{U}}\leq \varepsilon$ for $q=\bigO(\log(r)/\log\log(r/\varepsilon))$. The parameter $r$ is crucially chosen so that the total weight of the LCU coefficients of $\widetilde{U}$, given by $\alpha=\sum_{j}|\alpha_i|\leq e^{-\tau^2/r}$, converges. Each $W_i$ is a sequence of $q$ Clifford operators and a single Pauli rotation, repeated $r$ times. We formally state this in the following Lemma.
\begin{restatable}[LCU decomposition of time-evolution operator \cite{wan2022randomized, wang2023qubit, chakraborty2023implementing}]{lemma}{}\label{lemma: lcu of time evolution}
    Let $H = \sum_{\ell=1}^L p_{\ell} P_{\ell}$ be a Hermitian operator expressed as a convex combination of strings of Pauli operators $P_{\ell}$. Then, we can construct an Unitary operator $\widetilde{U}$ represented as an LCU that approximates the time evolution operator $U = e^{-i\tau H}$ satisfying: 
    \begin{equation}
        \norm{U - \widetilde{U}}\leq \eps,
    \end{equation}
    where 
    \begin{equation}\label{eq: index set LCU decomposition of time evolution operator}
        \widetilde{U} = \sum_{i} \alpha_i W_i,
    \end{equation}
    and
    \begin{equation}
        \alpha = \sum_i |\alpha_{i}| \leq e^{\tau^2/r}.
    \end{equation}
    Here, each $W_j$ is a sequence of $q$ Clifford operators followed by a single Pauli rotation, repeated $r$ times. The parameter $r$ is chosen such that $\tau/r<1$ (to ensure convergence of the series) and the parameter $q$ controls the precision of the approximation (the order at which the Taylor series is truncated), which we choose to be
    \begin{equation}
        q = \bigO\left(\dfrac{\log(r/\eps)}{\log\log(r/\eps)}\right).
    \end{equation}
\end{restatable}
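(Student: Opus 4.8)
The plan is to reduce the full-time evolution to a product of $r$ short-time evolutions and to build the LCU segment by segment. Writing $U = e^{-i\tau H} = \big(e^{-i\theta H}\big)^r$ with $\theta = \tau/r < 1$, it suffices to approximate each factor $e^{-i\theta H}$ by an LCU and multiply the $r$ approximations together; the global operator $\widetilde U$ is then the product of the $r$ per-segment LCUs, which is precisely why each $W_i$ inherits the ``repeated $r$ times'' structure. For a single segment I would begin with the Taylor series $e^{-i\theta H} = \sum_{k\ge 0}\frac{(-i\theta)^k}{k!}H^k$ and truncate at order $q$, then substitute the convex decomposition $H=\sum_\ell p_\ell P_\ell$ so that each power $H^k$ expands as a weighted sum of products $P_{\ell_1}\cdots P_{\ell_k}$. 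Each such product is, up to a phase in $\{\pm1,\pm i\}$, a single Pauli string; realizing it together with the rotation structure of the segment through Clifford conjugation is what produces the stated ``$q$ Clifford operators followed by a single Pauli rotation'' form of each $W_i$.

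Two quantities then have to be controlled: the approximation error and the total weight $\alpha$. For the error, since $\|H\|\le\sum_\ell p_\ell=1$, the tail of the truncated series obeys $\big\|e^{-i\theta H}-V_q\big\|\le\sum_{k>q}\frac{\theta^k}{k!}\le\frac{\theta^{q+1}}{(q+1)!}\cdot\frac{1}{1-\theta}$ for $\theta<1$, where $V_q$ denotes the order-$q$ truncation. Propagating this per-segment error through the telescoping product of $r$ near-unit-norm factors multiplies it by at most a factor $\bigO(r)$, giving a total error $\bigO\!\big(r\,\theta^{q+1}/(q+1)!\big)$. Demanding that this be at most $\eps$ and using $(q+1)!\ge((q+1)/e)^{q+1}$ gives, after inversion, the stated truncation order $q=\bigO\!\big(\log(r/\eps)/\log\log(r/\eps)\big)$.

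The main obstacle is the weight bound $\alpha\le e^{\tau^2/r}$. A naive accounting assigns each segment the coefficient weight $\sum_{k=0}^q \theta^k/k!\le e^{\theta}$, which over $r$ segments only yields $\alpha\le e^{\tau}$; this is too lossy. To recover the claimed $e^{\tau^2/r}$ one must exploit that the identity ($k=0$) term already reproduces $e^{-i\theta H}$ to zeroth order and that the genuinely weight-increasing corrections enter only at second order in $\theta$. Concretely, the first-order piece can be absorbed into unit-weight combinations of Pauli rotations; for instance, using the identity $\sum_\ell p_\ell e^{-i\theta P_\ell}=\cos\theta\,I-i\sin\theta\,H$, whose LCU weight is exactly $\sum_\ell p_\ell=1$, the residual weight overhead per segment is only $1+\bigO(\theta^2)\le e^{\bigO(\theta^2)}$. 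Raising this to the $r$-th power gives $\alpha\le e^{\bigO(\theta^2 r)}=e^{\bigO(\tau^2/r)}$, matching the claim. I expect this second-order bookkeeping, rather than the (routine) truncation estimate, to be the delicate step, since it is what distinguishes the construction from a plain truncated-Taylor LCU and is responsible for the favorable $\tau^2/r$ scaling that ultimately makes the composition over $K$ collisions tractable.
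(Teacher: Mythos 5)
Your proposal is correct and follows the paper's skeleton exactly at the top level — splitting $U=(e^{-i\theta H})^r$ with $\theta=\tau/r<1$, truncating each segment's Taylor series at order $q=\bigO(\log(r/\eps)/\log\log(r/\eps))$ so the per-segment error is $\bigO(\eps/r)$, expanding powers of $H$ into signed Pauli strings, and composing the $r$ per-segment LCUs — but the step you rightly flag as the delicate one, the weight bound, is executed by a genuinely different mechanism than in the paper's Appendix. The paper pairs each even-order Taylor term with the following odd-order term, writing $\widetilde S_r=\sum_{k\,\mathrm{even}}\frac{1}{k!}(-i\theta H)^k\bigl(I-\tfrac{i\theta H}{k+1}\bigr)$, and converts each trailing factor $I-\tfrac{i\theta P_m}{k+1}$ into a single Pauli rotation times the normalization $\sqrt{1+(\theta/(k+1))^2}$; the weight penalty per pair is thus $\sqrt{1+(\theta/(k+1))^2}$ rather than $1+\theta/(k+1)$, and summing over even $k$ gives per-segment weight at most $\sum_{k}\theta^{2k}/k!=e^{\theta^2}$, hence $\alpha\le e^{\tau^2/r}$ with constant exactly $1$ in the exponent, with every sampled unitary automatically in the stated ``$q$ Cliffords followed by one Pauli rotation'' form. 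Your alternative — subtracting the exact unit-weight mixture $\sum_\ell p_\ell e^{-i\theta P_\ell}=\cos\theta\,I-i\sin\theta\,H$ (valid since $P_\ell^2=I$ and $\sum_\ell p_\ell=1$) and bounding the remainder $\sum_{k\ge2,\,\mathrm{even}}\frac{(-i\theta)^k}{k!}(H^k-I)+\sum_{k\ge3,\,\mathrm{odd}}\frac{(-i\theta)^k}{k!}(H^k-H)$, where each difference has LCU weight at most $2$ and the total is at most $2(e^\theta-1-\theta)=\bigO(\theta^2)$ — is equally sound and arguably more transparent, since the unit-weight piece is precisely a qDRIFT-style rotation mixture reproducing $e^{-i\theta H}$ through first order. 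What your route buys is conceptual clarity; what it gives up is the sharp constant (you obtain $\alpha\le e^{c\tau^2/r}$ with $c>1$, which suffices for every use in the paper, where $r$ is anyway chosen as $\bigO(\beta^2\Delta t^2 K)$ up to constants) and the uniform gate structure of the $W_i$: your remainder terms are bare Pauli products with no rotation, so the lemma's stated form holds only after padding with a trivial rotation. To make the sketch fully rigorous you need only write out the remainder LCU above explicitly and note that truncating it at order $q$ changes the error by at most twice the Taylor tail you already bounded; your error propagation through the $r$-fold product and the inversion via Stirling to get $q$ match the paper's argument.
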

\noindent
We present a derivation of the LCU decomposition in Appendix~\ref{appendix: LCU decomposition of Unitaries} for completeness. Clearly, for $r=\bigO(\tau^2)$, the total weight $\alpha=\bigO(1)$. Interestingly, the LCU $\widetilde{U}$ can be implemented using only a single ancilla qubit~\cite{chakraborty2023implementing} such that the expectation value of any observable $O$ with respect to the time-evolved state can be estimated up to $\varepsilon$-additive accuracy by running a simple quantum circuit of depth $\bigO(rq)=\widetilde{\bigO}(\tau^2)$, a total of $T=\bigO( \|O\|^2/\varepsilon^2)$ times.

Despite its near-term applicability, there are some issues with using this method as a subroutine to implement a Markovian $K$-collision map. The choice of $r=\bigO(\tau^2)$ only ensures that $\alpha=\bigO(1)$, and composing $K$ such Hamiltonian simulation procedures results in the total weight of $\alpha^K$, which grows exponentially with $K$. This affects the complexity of implementing the $K$-collision map. However, we show that this can be avoided by choosing a different value of the parameter $r$ when considering the composition of SA-LCU Hamiltonian simulation algorithms. This leads to a procedure that can estimate the expectation value of $O$ with respect to a state that has undergone the $K$ memoryless collisions up to an $\varepsilon$-additive accuracy while using the SA-LCU Hamiltonian simulation algorithm as a subroutine.  
%
Each run of the algorithm involves running the quantum circuit shown in Fig.~\ref{fig: single ancilla collision model circuit}. The outcome of this circuit is a random variable that, in expectation value, estimates the desired quantity. The overall algorithm (outlined in Algorithm \ref{algo: collision model}) involves running this circuit a total of $T$ times and estimating the sample average of the outcomes. 

We require three registers: the system register, the environment register, and a single qubit ancilla register. The system register is initialized in the state $\rho_S$; the environment registers in the state $\rho_{E_1}$, and the ancilla qubit in the state $\ket{+}$. Let us first discuss the implementation of the approximate collision map $\widetilde{\Phi}_j$, corresponding to the $j$-${\mathrm{th}}$ collision. We recall that the (normalized) total Hamiltonian for the $j$-${\mathrm{th}}$ collision $H_j$ ($\|\overline{H}_j \| = 1$) can be expressed as a convex combination of strings of Pauli operators, i.e., $\overline{H}_j = \sum_{k=1}^L p_{jk} P_{jk}$. Then, from Lemma~\ref{lemma: lcu of time evolution}, we pick the parameter $r_j$ (to be determined later) such that for any $j\in [1,K]$,
\begin{equation}
\label{eq: LCU decomposition of Unitary}
    \widetilde{U}_j = \sum_{k} \alpha_{jk} W_{jk},
\end{equation}
and
$$
\norm{U_j-\widetilde{U}_j}\leq \dfrac{\varepsilon}{6 K\norm{O}},
$$
where $U_j=e^{-i\beta_j\Delta t\overline{H}_j}$, here we have scaled the simulation time. Note that from Lemma~\ref{lemma: lcu of time evolution}, each $W_{ji}$ is a string of $q$ Pauli operators and a single Pauli rotation, repeated $r_j$ times, where
\begin{equation}
\label{eq:truncation-taylor-series-collision-map}
    q=\bigO\left(\dfrac{\log\left(r_j K\norm{O}/\varepsilon\right)}{\log\log\left(r_j K\norm{O}/\varepsilon\right)}\right),
\end{equation}
and $\alpha^{(j)}=\sum_{k}\alpha_{jk}\leq e^{-\beta^2_j\Delta t^2/r^2_j}$.

\begin{algorithm}[t]
\SetAlgoCaptionLayout{bottom}
\caption{Algorithm to estimate the expectation value of an observable $O$ with respect to a quantum state evolved under a $K$-collision map.}\label{algo: collision model}
\KwIn{Initial system state in $\rho_S$, sub-environment states $\rho_{E_1}, \dots, \rho_{E_K}$, observable $O$, unitaries $\widetilde{U}_1$, \dots. $\widetilde{U}_K$, and precision $\eps'$, where the LCU decomposition of each $\widetilde{U}_j=\sum_{k}\alpha_{jk} W_{jk}$, such that $\forall j\in [1, K]$, $\norm{\widetilde{U}_j-e^{-i\Delta t \beta_j\overline{H}_j}}\leq \eps'$.~\\~\\}
    \begin{itemize}
        \item[1.~] Initialize the system and the ancilla in the state $\rho_S$ and $\ket{+}$, respectively.
        \item[2.~] For each collision step, from $j = 1$ to $K$:
            \begin{itemize}
                \item[a.~] Initialize the environment register in state $\rho_{E_j}$.
                \item[b.~] Draw two i.i.d.~samples $X_j$ and $Y_j$ from the ensemble 
                    \begin{equation*}
                        \mathcal{D}_{j}=\left\{ W_{jk}, \frac{\alpha_{jk}}{\alpha^{(j)}}\right\},
                    \end{equation*} 
                    where $\alpha^{(j)} = \sum_k|\alpha_{jk}|$
                \item[c.~] Apply the controlled unitary $X_j^{(c)}$ and the anti-controlled unitary $Y_j^{(a)}$ to the system 
                \item[d.~] Perform a partial trace over the environment register. 
            \end{itemize}
      \item [3.~]  Measure the joint ancilla and system state on the observable $(\sigma^x \otimes O)$ and record the measurement outcome as $\mu_i$.
      \item [4.~] Repeat Steps 1 to 3 a total of $T$ times.  
      \item [5.~] Compute the final estimate $\mu$ as:
    \begin{equation*}
        \mu = \dfrac{\zeta^2}{T}\sum_{j=1}^{T} \mu_j,
    \end{equation*}
      where $\zeta = \prod_{j=1}^K \alpha^{(j)}$.
      \end{itemize}
  \KwOut{Estimated expectation value $\mu$}  
\end{algorithm} 

Next, we draw two independent and identically distributed (i.i.d.) samples, $X_j$ and $Y_j$, from the ensemble
    \begin{equation}
    \label{eq: probablity emsemble}
    \mathcal{D}_{j}=\left\{ W_{jk}, \frac{\alpha_{jk}}{\alpha^{(j)}}\right\}.
    \end{equation}
Note that $\mathbb{E}[X_j]=\mathbb{E}[Y_j]=\widetilde{U}_j/\alpha^{(j)}$. Then, we coherently apply the controlled version $X_j^{(c)}$ and the anti-controlled version $Y_j^{(a)}$ of these sampled unitaries: 
\begin{align}
    \label{eq: control and anti-control notation}
    X_j^{(c)} &= \ket{0}\bra{0} \otimes \mathbb{I} + \ket{1}\bra{1} \otimes X_j,\\
    Y_j^{(a)} &= \ket{0}\bra{0} \otimes Y_j + \ket{1}\bra{1} \otimes \mathbb{I}.  
\end{align}
After applying $X_j^{(c)}$ and $Y_j^{(a)}$, we trace out the environment register, which can be reused for the next collision. 

We repeat these steps for $j=1$ to $K$ (as outlined in step 2 of Algorithm \ref{algo: collision model}) to implement the approximate Markovian $K$-collision map, $\widetilde{\mathcal{M}}_K$. Then, we measure the observable $\sigma^x\otimes O$ in the ancilla and the system register. This corresponds to a single run of the algorithm. The outcome of the $j$-${\mathrm{th}}$ run (for any $j\in [1,T]$) is a random variable $\mu_j$ whose expectation value is given as, 
$$
\mathbb{E}[\mu_j]=\Tr[O~\mathcal{M}_K[\rho_S]]/\zeta^2,
$$
where $\zeta = \prod_{j=1}^K \alpha^{(j)}$. Overall, by taking some $T$ repetitions of this procedure, we collect random variables $\{\mu_{j}\}_{j=1}^{T}$, such that 
$$
\mu=\dfrac{\zeta^2}{T}\sum_{j=1}^{T}\mu_j,
$$
approximates the desired expectation value within $\varepsilon$-additive accuracy with a success probability of at least $1-\delta$. We prove the validity of Algorithm~\ref{algo: collision model} with the following Theorem.

%
\begin{restatable}[]{theorem}{}
\label{thm: Algorithm 1 proof}
Let $\varepsilon, \delta \in (0,1)$. Then, for $\eps'=\varepsilon/(6K\norm{O})$, Algorithm \ref{algo: collision model} outputs $\mu$ with at least $1 - \delta$ probability such that
$$
\left|\mu-\Tr[O~\mathcal{M}_K[\rho_S]]\right|\leq \varepsilon,
$$
using $T$ repetitions of the circuit shown in Figure \ref{fig: single ancilla collision model circuit} where
\begin{align}
\label{eq: number of classical repetitions}
       T = \bigO\left( \frac{\|O\|^2 \log(1/\delta)}{\eps^2} \right).
\end{align} 
Each such coherent run has a circuit depth of 
\begin{equation}
\label{eq: circuit depth per run}
    \tau_d = \bigO\left(\beta ^2K^2\Delta t^2 \dfrac{\log(\beta  K \norm{O}\Delta t/\eps)}{\log\log(\beta  K \norm{O}\Delta t/\eps)}+K\tau_{\rho_E}\right).
\end{equation}
Here, $\beta = \max_j \beta_j$ and $\tau_{\rho_E}=\max_{j}\tau_{\rho_{E_j}}$, where $\tau_{\rho_{E_j}}$ is the circuit depth of the unitary preparing the sub-environment in the state $\rho_{E_j}$.
\end{restatable}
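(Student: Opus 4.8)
The plan is to establish the theorem in three parts: (i) that the random variable produced by a single coherent run is an \emph{unbiased} estimator of $\Tr[O\,\widetilde{\mathcal{M}}_K[\rho_S]]/\zeta^2$, (ii) that the approximate map $\widetilde{\mathcal{M}}_K$ is close enough to the exact map $\mathcal{M}_K$ in expectation value, and (iii) that a Hoeffding-type concentration bound controls the number of repetitions $T$. Throughout, the decisive idea is a non-standard choice of the Taylor-truncation parameter $r_j$ in Lemma~\ref{lemma: lcu of time evolution} that keeps the aggregate LCU weight $\zeta=\prod_{j=1}^K\alpha^{(j)}$ bounded by a constant even after composing $K$ collisions.

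For the unbiasedness, I would track the joint ancilla--system state in the ancilla computational basis, writing it as $\sigma=\sum_{a,b\in\{0,1\}}\ket{a}\bra{b}\otimes\rho_{ab}$. The combined controlled/anti-controlled gate at step $j$ is $Y_j^{(a)}X_j^{(c)}=\ket{0}\bra{0}\otimes Y_j+\ket{1}\bra{1}\otimes X_j$, so conjugating and tracing out $E_j$ sends $\rho_{ab}\mapsto \Tr_{E_j}[V_a(\rho_{ab}\otimes\rho_{E_j})V_b^{\dagger}]$ with $V_0=Y_j$ and $V_1=X_j$. The key observation is that for the off-diagonal blocks ($a\neq b$) the two factors are \emph{independent} samples, so taking the expectation over $\mathcal{D}_j$ and using $\mathbb{E}[X_j]=\mathbb{E}[Y_j]=\widetilde{U}_j/\alpha^{(j)}$ yields exactly $(\alpha^{(j)})^{-2}\,\widetilde{\Phi}_j[\rho_{ab}]$; that is, the off-diagonal block evolves precisely under the approximate collision map, rescaled by $(\alpha^{(j)})^{-2}$. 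Since the final measurement of $\sigma^x\otimes O$ reads only the off-diagonal blocks, $\Tr[(\sigma^x\otimes O)\sigma]=\Tr[O\rho_{01}]+\Tr[O\rho_{10}]$, and the initial state $\ket{+}\bra{+}\otimes\rho_S$ has $\rho_{01}=\rho_{10}=\tfrac12\rho_S$, composing over all $K$ steps gives $\mathbb{E}[\mu_j]=\zeta^{-2}\Tr[O\,\widetilde{\mathcal{M}}_K[\rho_S]]$; crucially, the sample-dependent diagonal blocks (where $V_a=V_b$ and the expectation does \emph{not} factor into $\widetilde{U}_j$) never contaminate the estimate. Multiplying by $\zeta^2$ as in Step~5 removes the rescaling.

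I would then fix the scaling and conclude. Choosing $\norm{U_j-\widetilde{U}_j}\leq \eps/(6K\norm{O})$ lets me invoke Lemma~\ref{lemma: approximate K-collision map distance bound} with precision $\eps/2$ to obtain $|\Tr[O\,\widetilde{\mathcal{M}}_K[\rho_S]]-\Tr[O\,\mathcal{M}_K[\rho_S]]|\leq \eps/2$, so $\mathbb{E}[\mu]$ is $\eps/2$-close to the target. The measurement outcome of $\sigma^x\otimes O$ lies in $[-\norm{O},\norm{O}]$, so each scaled sample $\zeta^2\mu_j$ lies in an interval of width $2\zeta^2\norm{O}$; Hoeffding's inequality then gives $\Pr[|\mu-\mathbb{E}[\mu]|\geq \eps/2]\leq 2\exp(-T\eps^2/(8\zeta^4\norm{O}^2))$, and the triangle inequality closes the $\eps$ bound with failure probability $\delta$ once $T=\bigO(\zeta^4\norm{O}^2\log(1/\delta)/\eps^2)$.

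The main obstacle, and the part demanding real care, is controlling $\zeta$. Keeping $r_j=\bigO(\beta_j^2\Delta t^2)$ as in the single-shot SA-LCU would make each $\alpha^{(j)}=\bigO(1)$ a constant strictly above $1$, so $\zeta=\prod_j\alpha^{(j)}$ blows up like $\alpha^{K}$ and destroys both the sample bound and the depth. I would instead oversample the truncation, taking $r_j=\Theta(K\beta_j^2\Delta t^2)$ so that $\alpha^{(j)}\leq e^{\beta_j^2\Delta t^2/r_j}=1+\bigO(1/K)$ and hence $\zeta=\bigO(1)$; this collapses the sample complexity to $T=\bigO(\norm{O}^2\log(1/\delta)/\eps^2)$, the claimed Eq.~\eqref{eq: number of classical repetitions}. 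The price is paid in depth: each $W_{jk}$ is $r_j$ repetitions of $q$ Cliffords and one Pauli rotation, so one collision costs $\bigO(r_j q)=\bigO(K\beta^2\Delta t^2\, q)$, and summing over the $K$ collisions together with the $K$ sub-environment preparations yields the stated depth $\tau_d=\bigO(\beta^2K^2\Delta t^2\,\tfrac{\log(\beta K\norm{O}\Delta t/\eps)}{\log\log(\beta K\norm{O}\Delta t/\eps)}+K\tau_{\rho_E})$, using the truncation order $q$ from Eq.~\eqref{eq:truncation-taylor-series-collision-map}.
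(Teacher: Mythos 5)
Your proposal is correct and follows essentially the same route as the paper's proof: tracking the ancilla-basis blocks through the $K$ controlled/anti-controlled sampled collisions (the paper's $\Phi_j^{(PQ)}$ notation is exactly your $V_a(\cdot)V_b^{\dagger}$ bookkeeping), killing the diagonal blocks via the $\sigma^x$ measurement, splitting the error as $\eps/2 + \eps/2$ via Lemma~\ref{lemma: approximate K-collision map distance bound} with $\eps'=\eps/(6K\norm{O})$, applying Hoeffding with range $\zeta^2\norm{O}$, and — the same key move — oversampling $r_j=\Theta(K\beta_j^2\Delta t^2)$ so that $\zeta=\bigO(1)$, which yields the stated $T$ and depth. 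Your explicit remark that the diagonal blocks carry non-factoring expectations (since $V_a=V_b$ reuses one sample) but never reach the estimator is a nice clarification of a point the paper leaves implicit, but it is not a different argument.
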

\begin{proof}
Following Algorithm \ref{algo: collision model}, we initialize the ancilla and system registers in the state 
    \begin{equation}
    \rho_0 = \ket{+}\bra{+} \otimes \rho_S
    \end{equation}
and the environment register in the state $\rho_{E_1}$. As shown in Fig.~\ref{fig: single ancilla collision model circuit}, we implement $X^{(c)}_1$ and $Y^{(a)}_1$ by sampling $X_1$ and $Y_1$ from $\mathcal{D}_1$, followed by tracing out of the environment $E_1$. The resulting state is
        \begin{align}
            \rho_1 &= \Tr_{E_1}\left[Y_1^{(a)}X_1^{(c)} (\rho_0 \otimes \rho_{E_1}) X_1^{(c)^{\dagger}}Y_1^{(a)^{\dagger}}\right]
        \end{align}
For simplicity, let us define the map $\Phi^{(PQ)}_j$ as
\begin{equation}
    \Phi^{(PQ)}_j\left[.\right] = \Tr_{E_j} \left[ P \left( . \otimes \rho_{E_{j}} \right) Q^{\dagger} \right],
\end{equation}
which represents applying the operator $P$ from the left and $Q^{\dag}$ from the right, followed by the tracing out of the environment register $E_j$. We can write $\rho_1$ in this notation as follows:
\begin{align}
    \rho_1 = &\frac{1}{2} \bigg[
    |0\rangle\langle0| \otimes \Phi^{(YY)}_1[\rho_S ]
    \nonumber + |0\rangle\langle1| \otimes \Phi^{(YX)}_1[\rho_S ] \nonumber \\
    &+ |1\rangle\langle0| \otimes \Phi^{(XY)}_1[\rho_S ]
    \nonumber + |1\rangle\langle1| \otimes \Phi^{(XX)}_1[\rho_S ]
    \bigg].
    \label{eq: expanded form of rho 1}
\end{align}

Next, we prepare the environment register in the state $\rho_{E_2}$, and apply the next set of unitaries (obtained by sampling from $\mathcal{D}_2$), followed by the tracing out the environment $E_2$. It is easy to see that this sequence leave us with a composition of the maps, resulting in the state:
    \begin{align}
    \rho_2 =&\ \frac{1}{2} 
    \bigg[
        |0\rangle\langle0| \otimes \Phi_{2}^{(YY)} \Phi_{1}^{(YY)} \left[\rho_S \right] \nonumber \\ 
        &\ + |0\rangle\langle1| \otimes \Phi_{2}^{(YX)} \Phi_{1}^{(YX)} \left[\rho_S \right] \nonumber\\
        &\ +|1\rangle\langle0| \otimes\Phi_{2}^{(XY)}\Phi_{1}^{(XY)} \left[\rho_S \right]\nonumber\\ 
        &\ + |1\rangle\langle1| \otimes \Phi_{2}^{(XX)}\Phi_{1}^{(XX)} \left[\rho_S \right] \bigg].
    \end{align}
Continuing in this way $K$ times, we obtain by induction, 
    \begin{align}
    \label{eq:rho-K-one-run}
        \rho_{K} 
        =&\ \frac{1}{2} 
        \bigg[
            |0\rangle\langle0| \otimes \bigcirc_{j=1}^{K} \Phi_{j}^{(YY)} \left[\rho_S \right] \nonumber\\ &\ + 
            |0\rangle\langle1| \otimes \bigcirc_{j=1}^{K} \Phi_{j}^{(YX)} \left[\rho_S \right] \nonumber \\&\ +  
            |1\rangle\langle0| \otimes \bigcirc_{j=1}^{K} \Phi_{j}^{(XY)} \left[\rho_S \right] \nonumber \\ &\ +
            |1\rangle\langle1| \otimes \bigcirc_{j=1}^{K} \Phi_{j}^{(XX)} \left[\rho_S \right]
        \bigg].
    \end{align}
Finally, we measure the ancilla and system registers with the observable $\sigma^x \otimes O$. This constitutes one run of Algorithm \ref{algo: collision model}.

Let us now look at the outcome of any such run. On measuring the ancilla in $\sigma^x$, the first and the last terms of Eq.~\eqref{eq:rho-K-one-run} disappear, and so the output of the $k$-${\mathrm{th}}$ run becomes,
    \begin{equation}
        \mu_k = \frac{1}{2} \Tr \Bigg[
            O~\bigg[\bigcirc_{j=1}^{K} \Phi_{j}^{(YX)} \big[\rho_S \big] + 
            \bigcirc_{j=1}^{K} \Phi_{j}^{(XY)} \big[\rho_S \big]
            \bigg]
        \Bigg].
    \end{equation}
Then, by the linearity of expectation, we have 
        \begin{equation}
        \mathbb{E}[\mu_k]=\frac{1}{\zeta^2}\Tr \left[ \bigcirc_{j=1}^{K} \widetilde{\Phi}_{j}  \left[ \rho_S \right] \right]= \frac{1}{\zeta^2} \Tr\left[O 
                \widetilde{\mathcal{M}}_K \left[ \rho_S \right]\right],
        \end{equation}
where $\zeta$ is as defined in Algorithm \ref{algo: collision model}. Thus, the outcome of each run is a random variable whose expectation value gives an estimate of the desired quantity (up to a multiplicative factor of $1/\zeta^2$). 

We observe that the positive operator-valued measurement of the state at the end yields some eigenvalue of $O$ in the range $[-\|O\|,\|O\|]$. So, outcome $\mu_k$ satisfies,
    \begin{align}
        -{\norm{O}\zeta^2 \leq \zeta^2\mu_k\leq \norm{O}\zeta^2}.
    \end{align}
Thus, after $T$ runs, we have a set of random variables $\{\mu_k\}_{k=1}^T$. By using Hoeffding's inequality, we can ensure 
$$
\mu=\dfrac{\zeta^2}{T}\sum_{k=1}^{T}\mu_k,
$$
is close to its expectation value. Indeed, \begin{align*}
        \Pr\Bigg[\bigg|\mu - \Tr\big[O~\widetilde{\mathcal{M}}_K[ \rho_S ] \big]\bigg| \geq \eps/2\Bigg] \leq 2\exp\left[-\dfrac{T\eps^2}{8 \zeta^4\norm{O}^2}\right].
        \end{align*}       
Thus,
        \begin{equation} \label{eq:error in expectation and imprecise map}
        \left|\mu - \Tr\left[O~\widetilde{\mathcal{M}}_K[\rho_S ] \right]\right| 
        \leq \eps/2,
        \end{equation}
with at least $1-\delta$ probability for 
\begin{equation*}
        \label {eq:number-of-repititions}
        T\geq \dfrac{8\norm{O}^2\ln(2/\delta)\zeta^4}{\eps^2}.
\end{equation*}
Now, for any $j\in [1,K]$, we get
$$
\norm{U_j-\widetilde{U}_j}\leq \eps'= \dfrac{\varepsilon}{6K\norm{O}}
$$
from the statement of Lemma~\ref{lemma: lcu of time evolution}. Then, using Lemma~\ref{lemma: approximate K-collision map distance bound} and the triangle inequality, we obtain
\begin{align} 
\label{eq:error in expectation and true map}
        &\hspace{-1pt}\bigg|\mu - \Tr\big[O~\mathcal{M}_K [\rho_S] \big] \bigg|\nonumber\\
        &\leq \bigg|\mu - \Tr\big[O~\widetilde{\mathcal{M}}_K[\rho_S ] \big]\bigg|
        + \bigg|\Tr\big[O~\widetilde{\mathcal{M}}_K[\rho_S ] \big]
        - \Tr\big[O~\mathcal{M}_K[\rho_S ] \big]\bigg|\nonumber\\
        &\leq \eps/2+\eps/2=\eps.
        \end{align}
To estimate the circuit depth of Algorithm \ref{algo: collision model} and the number of classical repetitions $T$, we need to find $\zeta$, which 
crucially depends on the choices of the parameters $r_j$. 
For the $j$-${\mathrm{th}}$ collision, Lemma \ref{lemma: lcu of time evolution} shows how we can consider the LCU decomposition $\widetilde{U}_j$ approximating $U_j=e^{-i\Delta t\beta_j\overline{H}}$ with sufficient accuracy. If $\beta=\max_{j}\beta_j$ and $r=\min_j r_j$, then the sum of LCU coefficients $\alpha^{(j)} = \sum_k |\alpha_{jk}|$ satisfies,
        \begin{align}
        \alpha^{(j)} \leq&\ e^{(\beta_{j} \Delta t)^2 / r_j} \leq e^{\left(\beta \Delta t\right)^2 / r},\end{align}
with 
        \begin{align}     
          \zeta \leq&\ \left(e^{\left(\beta  \Delta t\right)^2 / r}\right)^K.  
        \end{align}
Thus we ensure that $\zeta = \bigO(1)$ by choosing $r=\bigO(\beta ^2 \Delta t^2 K)$. 
Consequently, the number of classical repetitions needed is
    \begin{equation}
        T =  \bigO\left( \dfrac{\norm{O}^2\log(1/\delta)}{\eps^2}\right).
    \end{equation}

Now, for the circuit depth of each coherent run, the quantum circuit in Fig.~\ref{fig: single ancilla collision model circuit} consists of $2K$ unitaries of the form $X^{(c)}_j$ and $Y^{(a)}_j$. Each of these unitaries comprises $qr$ Pauli operators and $r$ controlled single-qubit rotations, where $q$ is the truncation parameter of the Taylor series. Thus, the overall circuit depth will be $\tau_d = \bigO(K(qr+r)) = \bigO(Kqr)$. For the above choice of $r$, Eq.~\eqref{eq:truncation-taylor-series-collision-map} gives
    $$
    q = \bigO\left(\dfrac{\log(\beta  K \norm{O}\Delta t/\eps)}{\log\log(\beta  K \norm{O}\Delta t/\eps)}\right),
    $$
which gives the overall circuit depth per coherent run as
    \begin{equation}
        \tau_d = \bigO\left(\beta ^2K^2\Delta t^2 \dfrac{\log(\beta  K \norm{O}\Delta t/\eps)}{\log\log(\beta  K \norm{O}\Delta t/\eps)} + K\tau_{\rho_E}\right).
    \end{equation}
Here, the additive term $K \tau_{\rho_E}$ appears because, in each run of the circuit, the sub-environment needs to be prepared $K$ times, each time requiring a circuit depth of at most $\tau_{\rho_E}$. This completes the proof.
\end{proof}

\begin{table*}[htbp]
\caption{Comparison of the costs of estimating the expectation value of an observable $O$ with respect to a quantum state that has undergone the $K$-collision map of Definition~\ref{def: K-collision map} using different near-term Hamiltonian simulation algorithms. The goal of the algorithm is to output the desired expectation value within an additive accuracy of $\eps$ with a constant success probability. Here, the ancilla qubits indicate the number of additional qubits (other than the system and environment qubits) required. We assume that for any of the $K$ collisions, the total Hamiltonian is a linear combination of at most $L$ strings of $n$-qubit Pauli operators, with the total weight of the coefficients upper bounded by $\beta$. Each collision corresponds to evolving according to the corresponding (total) Hamiltonian for a time $\Delta t$. Also, $\tau_{\rho_E}=\max_{j\in [1,K]}\tau_{\rho_{E_j}}$, where $\tau_{\rho_{E_j}}$ is the circuit depth of the unitary preparing the sub-environment $E_j$ in the state $\rho_{E_j}$.}
\label{table: collision model complexity}
{\centering
%
\renewcommand\baselinestretch{3}\selectfont
\begin{tabular*}{0.9\textwidth}{l@{\extracolsep{\fill}} ccc}
\hline
Algorithm & No. of ancilla qubits & Circuit depth per coherent run & Classical repetitions \\
\hline\hline
1st-order Trotter & 0 & $\bigO\left( \dfrac{\beta^2K^2 \|O\|L\Delta t^2}{\epsilon}+K\tau_{\rho_E} \right)$ & $\bigO\left(\dfrac{\|O\|^2}{\epsilon^2}\right)$ \\

qDRIFT & 0 & $\bigO\left( \dfrac{ \beta^2 K^2 \|O\|  \Delta t^2 }{\eps}+K\tau_{\rho_E}\right)$ & $\bigO\left(\dfrac{\|O\|^2}{\eps^2}\right)$\\

$2k$-order Trotter  & 0 & $\bigO\left( L(K\beta \Delta t)^{1+\frac{1}{2k}}\left(\dfrac{ \|O\|}{\eps}\right)^{\frac{1}{2k}}+K\tau_{\rho_E}\right)$ & $\bigO\left(\dfrac{\|O\|^2}{\eps^2}\right)$ \\

Single-Ancilla LCU & 1 & $\bigO\left(\beta ^2K^2\Delta t^2 \dfrac{\log(\beta  K \norm{\bigO}\Delta t/\eps)}{\log\log(\beta  K \norm{O}\Delta t/\eps)}+K\tau_{\rho_E}\right)$ & $\bigO\left(\dfrac{\|O\|^2}{\eps^2}\right)$ \\[1ex]
\hline
\end{tabular*}
}
\end{table*}

\subsection{Comparing the complexity of implementing a Markovian $K$-collision map using various near-term Hamiltonian procedures}
\label{subsec:complexity-comparison-k-collision}

We now compare the complexity of other near-term Hamiltonian simulation algorithms to output an $\varepsilon$-additive accurate estimate of the expectation value $\Tr[O\,\mathcal{M}_K[\rho_S]]$. Primarily, we will compare the circuit depth, the number of ancilla qubits, and the number of classical repetitions needed. At the onset of the early fault-tolerant era, it is better to have multiple independent runs of a short-depth quantum circuit than a single run of a very deep quantum circuit. Thus, it is standard to separately analyze the cost of each run and the number of classical repetitions separately. The total complexity is, of course, the product of the circuit depth per coherent run and the total number of classical repetitions. 

First, we observe that most Hamiltonian simulation algorithms can be incorporated into Algorithm \ref{algo: collision model}. Steps 2a and 2b, which are essentially implementing the operator $U_j=e^{-i\Delta t \beta_j\overline{H}_j}$ in the Hamiltonian simulation by SA-LCU, can be replaced with other near-term techniques such as qDRIFT or Trotterization. In Step 3, a direct measurement of $O$ on the system register would suffice for these two methods, as they do not require any ancilla registers. So, Lemma~\ref{lemma: approximate K-collision map distance bound} can also be modified to incorporate different procedures:  any Hamiltonian simulation procedure needs to be implemented with precision $\bigO(\varepsilon/(K\|O\|))$, in order to output an $\varepsilon$-accurate estimate of the expectation value of $O$. This circuit depth is essentially the cost of composing the underlying Hamiltonian simulation algorithm $K$ times.

On the other hand, the expectation value of $O$ can either be measured incoherently or coherently. The incoherent approach involves simply measuring $O$ with respect to the prepared state, requiring $\bigO(\|O\|^2/\varepsilon^2)$ classical repetitions. It is also possible to use quantum amplitude estimation~\cite{aaronson2020quantum, grinko2021iterative} to estimate this quantity in cost scaling as $1/\varepsilon$ coherently. However, this requires access to a block encoding~\cite{chakraborty2019power, low2019hamiltonian} of the observable $O$, adding to the number of ancilla qubits required. More precisely, given an $(\alpha_O, a_O, 0)$ block-encoding of $O$, we can coherently estimate the desired expectation value using amplitude estimation for $\bigO\left(\alpha_O(\tau_d + \tau_O)/\eps\right)$ cost. Here, $\tau_d$ is the circuit depth of the composition of $K$ Hamiltonian simulation algorithms, and $\tau_O$ is the circuit depth of implementing the block encoding of $O$. So, along with the number of ancilla qubits, the circuit depth also increases substantially. Thus, amplitude estimation is not a technique that can be deployed in early fault-tolerant quantum computers. Consequently, we restrict ourselves to estimating the cost using the incoherent approach. Table~\ref{table: collision model complexity} summarizes the complexities associated with different near-term Hamiltonian simulation methods.

Let us begin by considering the circuit depth of the First-order Trotter method~\cite{lloyd1996universal, childs2021theory}. We note that no ancilla qubit (other than the system and environment registers) is needed. The worst-case circuit depth of simulating the $j$-${\mathrm{th}}$ collision map scales with the number of terms in the corresponding total Hamiltonian ($L_j$) as $\bigO(KL\beta_j^2\Delta t ^2\|O\|/\varepsilon)$. Then, by composing $K$ such collision maps and using the upper bounds $\beta$ and $L$, we obtain the circuit depth for each coherent run as
$$
\tau_d=\bigO\left(\dfrac{K^2 L\beta^2\Delta t ^2\|O\|}{\varepsilon}+K\tau_{\rho_E}\right).
$$
In order to estimate the expectation value of observable $O$ with a success probability of at least $1-\delta$, the number of independent runs required is $\bigO(\|O\|^2\log(1/\delta)/\varepsilon^2)$.

The randomized Hamiltonian simulation approach, qDRIFT \cite{campbell2019random}, also requires no ancilla qubits. Moreover, the circuit depth does not depend on the number of terms in the Pauli decomposition of the underlying Hamiltonian. The overall circuit depth to simulate the Markovian $K$-collision map is given by
$$
\tau_d=\left(\dfrac{K^2 \beta^2\Delta t ^2\|O\|}{\varepsilon}+K\tau_{\rho_E}\right).
$$
Thus, compared to 
SA-LCU, both First-order Trotter and qDRIFT require an exponentially worse circuit depth in terms of $1/\varepsilon$.

We now move on to the complexity of implementing the Markovian $K$-collision map by using higher-order Trotter methods~\cite{childs2021theory}. For any positive number $k$, the circuit depth of the $2k$-order Trotter method for implementing $e^{-i\Delta t \overline{H}_j}$ to within an accuracy of $\bigO(\eps/K\|O\|)$ is 
 $$
 \bigO\left(5^{k-1}  L_j \left(\beta_j \Delta t\right)^{1+\frac{1}{2k}}\cdot \left(\dfrac{K\|O\|}{\eps}\right)^{\frac{1}{2k}}\right).
 $$
Then, composing the previously mentioned simulation procedure a total of $K$ times requires a circuit depth of 
\begin{align}
    \tau_d&=\bigO\left(5^{k-1}  K L (\beta \Delta t)^{1+\frac{1}{2k}}\cdot \left(\dfrac{K\|O\|}{\eps}\right)^{\frac{1}{2k}}+K\tau_{\rho_E}\right) \nonumber\\
    &=\bigO\left(5^{k-1}  L (K\beta \Delta t)^{1+\frac{1}{2k}}\cdot \left(\dfrac{\|O\|}{\eps}\right)^{\frac{1}{2k}}+K\tau_{\rho_E}\right).
\end{align}
Note that the pre-factor ($(\beta L)^{1+1/(2k)}$) in the complexity of Trotter-based methods scales with the norm of the sum of the nested commutators of the local Pauli terms in the description of the Hamiltonian. In certain cases, the pre-factor scaling is better than the worst-case bounds we consider here. We refer the readers to Ref.~\cite{childs2021theory} for more details. Although higher-order Trotter methods do not require any ancilla qubits, the exponential scaling in the pre-factor makes it difficult to implement these methods for high $k$ values. Typically, in practice, $k=1$ (the second-order method) and $k=2$ (the fourth-order method) are implemented.

Finally, the state-of-the-art Hamiltonian simulation method, qubitization, requires access to a block encoding of the underlying Hamiltonian \cite{low2019hamiltonian}. In the case of simulating each collision $U_j=e^{-i\Delta t \beta_j\overline{H}_j}$, a block encoding to $H_j$ is needed, which requires $\bigO(\log L_j)$ ancilla qubits. However, since we have to implement a composition of these individual collision maps, overall, we need $\bigO(\log L)$ ancilla qubits to implement each of these maps. Additionally, we would require some sophisticated controlled logic in executing this~\cite{pocrnic2023quantum}. Although the circuit depth
$$
\tau_d=\bigO\left(KL\beta \Delta t + K\log(K\|O\|/\eps)+K\tau_{\rho_E}\right),
$$
has a better dependence on $K$ and $\Delta t$, as compared to near-term Hamiltonian simulation methods, implementing a $K$-collision map using qubitization is beyond the reach of early fault-tolerant quantum computers.

Overall, we have developed a general framework to simulate $K$ memoryless collisions on a quantum computer using various near-term Hamiltonian simulation procedures. In the next section, we use such collisions to simulate Lindbladian dynamics. The framework introduced here can also be adapted to incorporate memory effects from interactions between the environment subsystems, leading to non-Markovian dynamics. We later define and simulate a non-Markovian $K$-collision map in Sec.~\ref{sec:Non-markovian dynamics}.   

\section{Simulating Lindbladian Dynamics using quantum collision models}\label{sec:Lindbladian Dynamics Simulation}
The Lindblad master equation describes the time evolution of a quantum system undergoing dissipative dynamics in the presence of an environment. It assumes that the underlying system is weakly coupled to the environment at all times so that the Born-Markov and secular approximations hold~\cite{breuer2002theory}. The Lindblad operator is, in fact, the generator of any quantum Markov semigroup~\cite{Lindblad1976}. For a system with Hamiltonian $H_S$, the Lindblad master equation, describing the reduced state of the system $\rho_S$, is given by  
\begin{equation}
    \mathcal{L}[\rho_S]\equiv \diffp{\rho_S}{t}= -i[H_S, \rho] + \sum_j \left( A_j \rho A_j^\dagger - \frac{1}{2} \{A_j^\dagger A_j, \rho\}\right).
    \label{eq: lindbladian map}
\end{equation}
Here, the evolution comprises two distinct parts: a unitary component governed by the system Hamiltonian $H_S$ and a dissipative component described by the so-called quantum jump operators $A_j$, obtained from the interaction between the system and the environment. Note that for a $d$-dimensional system, $A_j\in \mathbb{C}^{d\times d}$, are not necessarily Hermitian. Simulating the Lindblad dynamics for time $t$ on a quantum computer essentially means implementing the map $e^{\mathcal{L}t}$.

We can now analyze the complexity of simulating Lindblad dynamics using the Markovian $K$-collision map from Sec.~\ref{subsec:K-collision-approx}. As discussed in the previous section, quantum collision models simulate open system dynamics by discretizing the continuous system-environment interaction into a sequence of brief collisions between the system and independent sub-environments. The correspondence between Lindblad dynamics and the collision model has been derived earlier~\cite{ bruneau2014repeated, cattaneo2021collision, pocrnic2023quantum}. We will follow the constructions and error analysis of the recent work by Pocrnic et al.~\cite{pocrnic2023quantum}. The choices of the time of each collision ($\Delta t$) and the total number of collisions ($K$) are crucial for quantum collision models to approximate Lindbladian dynamics. It is only for the right choices that a Markovian $K$-collision map can approximate Lindblad dynamics, and (a slightly modified version of) Algorithm~\ref{algo: collision model} can be used to efficiently estimate $\Tr[O~e^{\mathcal{L}t}[\rho_S]]$ for any observable $O$. 

As in Sec.~\ref{sec:Modified Collision Model}, we consider an $n$-qubit system with Hamiltonian $H_S$, prepared initially in the quantum state $\rho_S$. The environment consists of $m$ discrete, single qubit sub-environments prepared in some state $\rho_{E_j}$, for $j\in [1,m]$. As before, these sub-environments sequentially interact with the system over small but equal time intervals, $\Delta t$, driving its evolution. The system evolves under its free Hamiltonian $H_S$, while the $j$-th sub-environment evolves under its local Hamiltonian $H_{E_j}$. The interaction Hamiltonian $H_{I_j}$ governs the interaction between the system and the $j$-th sub-environment.

While Lindbladian dynamics effectively couple the system to all $m$ environmental subsystems simultaneously, the collision model operates sequentially, with the system interacting with one environment at a time. To reconcile this difference, we renormalize the system Hamiltonian as $H_S \rightarrow \frac{1}{m}H_S$. Furthermore, the Lindblad dynamics is derived from collision maps in a regime where the system-environment coupling parameter $\lambda$ is tuned to satisfy $\lambda^2 \Delta t = 1$. As $\Delta t$ is typically small, intuitively, this results in repeated momentary collisions 
between the system and a strongly coupled sub-environment. This ensures the coupling is strong enough to drive dissipative dynamics even within a short interaction time $\Delta t$. In the remainder of this section, we shall assume that the coupling constant $\lambda$ is diverging, i.e., \ $\lambda \rightarrow 1/\sqrt{\Delta t}$, and we refer the readers to Refs.~\cite{ciccarello2022collisionreview, cattaneo2021collision, pocrnic2023quantum} for detailed discussions. Another standard assumption, for the derivation of Lindblad dynamics from collision models, is that the state of each sub-environment or the interaction Hamiltonians are so chosen that $\forall j,~\Tr_{E_j}[[H_{I_j}, \rho_S \otimes \rho_{E_j}]] = 0$. This is equivalent to considering the overall state of the environment $E$ to be thermal. 

\begin{algorithm}[t]
\SetAlgoCaptionLayout{bottom}
\caption{Algorithm to estimate the expectation value of an observable $O$ with respect to the Lindblad map applied to a quantum state.}\label{algo: lindblad dynamics}
\KwIn{Initial system state in $\rho_S$, $m$ single qubit sub-environment states, each initialized in the state defined in Eq.~\eqref{eq:sub-environment-thermal}, observable $O$, unitaries $\widetilde{U}_1$, \dots. $\widetilde{U}_m$, the total number of repetitions $\nu$, and precision $\eps'$, where the LCU decomposition of each $\widetilde{U}_j=\sum_{k}\alpha_{jk} W_{jk}$, such that $\forall j\in [1, m]$, $\norm{\widetilde{U}_j-e^{-i\Delta t\beta_j\overline{H}_j}}\leq \eps'$.~\\~\\}
    \begin{itemize}
        \item[1.~] Initialize the system and the ancilla in the state $\rho_S$ and $\ket{+}$, respectively.
        \item[2.~] For iterations from $j = 0$ to $K-1$, where $K=m\nu$:
            \begin{itemize}
                \item[a.~] Set $\ell=j(\mathrm{mod~}m)+1$.
                \item[b.~] Initialize the environment register in state $\rho_{E_{\ell}}$.
                \item[c.~] Draw two i.i.d.~samples $X_j$ and $Y_j$ from the ensemble 
                    \begin{equation*}
                        \mathcal{D}_{\ell}=\left\{ W_{\ell k}, \frac{\alpha_{\ell k}}{\alpha^{(\ell)}}\right\},
                    \end{equation*} 
                    where $\alpha^{(\ell)} = \sum_k|\alpha_{\ell k}|$
                \item[d.~] Apply the controlled unitary $X_j^{(c)}$ and the anti-controlled unitary $Y_j^{(a)}$ to the system. 
                \item[e.~] Perform a partial trace over the environment register. 
            \end{itemize}
      \item [3.~]  Measure the joint ancilla and system state on the observable $(\sigma^x \otimes O)$ and record the measurement outcome as $\mu_i$.
      \item [4.~] Repeat Steps 1 to 3 a total of $T$ times.  
      \item [5.~] Compute the final estimate $\mu$ as:
    \begin{equation*}
        \mu = \dfrac{\zeta^2}{T}\sum_{j=1}^{T} \mu_j,
    \end{equation*}
      where $\zeta = \prod_{j=1}^K \alpha^{(j)}$.
      \end{itemize}
  \KwOut{Estimated expectation value $\mu$}  
\end{algorithm} 


Let us now describe the structure of the Hamiltonian we consider that satisfies the above constraints. We assume that the system Hamiltonian $H_S$ (rescaled by $m$) is a linear combination of strings of Pauli operators given by
$$
H_S=\sum_{j=1}^{L_S} \lambda_j P_j/m,
$$
such that $\beta_S=\sum_{j}|\lambda_j|$. In our case, the environment is a discrete sum of $m$ sub-environments, such that $H_{E_j}=\beta_{E_j} \sigma^z$ (equivalent to the number operator up to an energy shift). Each sub-environment is prepared in the single qubit thermal state (at some inverse temperature $\omega$), i.e.,
\begin{equation}
\label{eq:sub-environment-thermal}
\rho_{E_j}=\dfrac{\ket{0}\bra{0}+e^{-\omega}\ket{1}\bra{1}}{1+e^{-\omega}}.
\end{equation}
This state can be prepared efficiently by first preparing the entangled pure state
$$
\ket{\psi}=\dfrac{\ket{00}+e^{-\omega/2}\ket{11}}{\sqrt{1+e^{-\omega}}},
$$
and then tracing out the second qubit. Henceforth, we will assume that preparing $\rho_{E_j}$ is a constant-depth unitary procedure.

\begin{figure*}[ht]
    \centering
    \scalebox{0.8}{
    \includegraphics[width=0.9\textwidth]{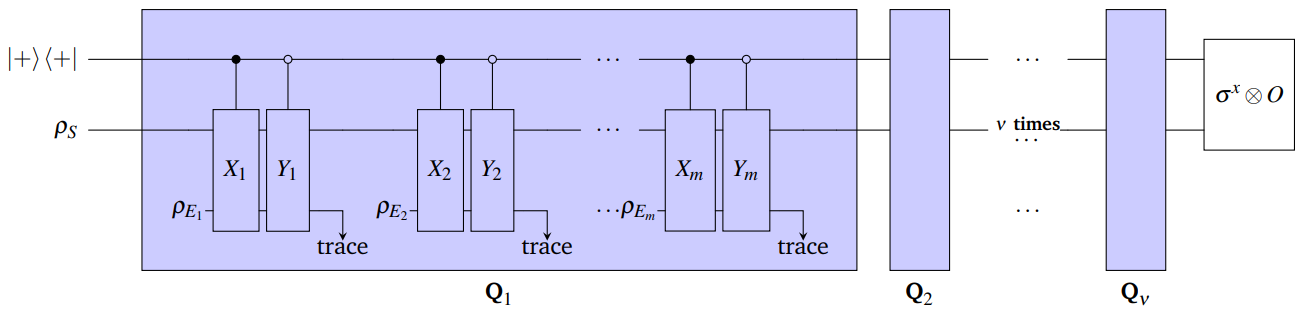}
    }
    \caption{The circuit to estimate the expectation value of an observable $O$ for a system evolved under Lindbladian dynamics. The ancilla qubit and the system is initialized in $|+\rangle \langle +|$ and $\rho_S$ respectively. In each block, the unitaries $X_j $ and $Y_j$ are independently sampled and applied as controlled and anti-controlled operations conditioned on the ancilla qubit. After each interaction, the corresponding environment sub-system $\rho_{E_j}$ is traced out, enforcing the Markovian condition. This process is repeated cyclically over $m$ environments for $\nu$ iterations. Finally, the observable $\sigma^x \otimes O$ is measured to estimate the time-evolved expectation value of $O$.
}
    \label{fig: Lindbladian circuit}
\end{figure*}

We consider that the interaction Hamiltonian corresponding to the $j$-${\mathrm{th}}$ collision (between the system and $j$-${\mathrm{th}}$ sub-environment)  $H_{I_j}$ can also be expressed as a linear combination of some $L_{I_j}$ Pauli operators. For instance, we can express it in terms of the Lindbladian jump operators as 
\begin{equation*}
H_{I_j} = \left(A_j \otimes \sigma^{+} + A_j^\dag \otimes \sigma^{-}\right),    
\end{equation*}
where $\sigma^{\pm}=(\sigma^x\pm i\sigma^y)/2$ are the raising and lowering operators respectively. We assume $\beta_{I_j}$ denotes the total weight of the coefficients in the description of $H_{I_j}$ and that this Hamiltonian has $L_{I_j}$ terms. Note that in some cases, the individual jump operator $A_j$ may itself be unitary~\cite{pocrnic2023quantum}. Overall, the total Hamiltonian corresponding to the $j$-${\mathrm{th}}$ collision is given by:
\begin{equation}
    \label{eq: collision hamiltonian, Lindbladian}
    \overline{H}_j =  \dfrac{1}{\beta_j}\left(\dfrac{1}{m} H_S +  H_{E_j} + \lambda H_{I_j}\right),
\end{equation}
where $\beta_j\leq \beta_S/m+\lambda\beta_{I_j}+\beta{E_j}$, and $\lambda=1/\sqrt{\Delta t}$. Hence, $\overline{H}_j$ also can be expressed as a linear combination of $L_S+L_{I_j}+1$ Pauli operators with total weight at most $\beta_j$.

Let us now discuss the choice of $\Delta t$ and $K$ for a $K$-collision map to be close (in induced $1$-norm) to $e^{\mathcal{L}t}[.]$. The collisions between the system and the $m$ sub-environments occur one by one in a fixed order ($E_1$, $E_2$, $\ldots$, $E_{m}$). Now, to simulate the Lindblad dynamics for a total evolution time of $t$, this sequence is repeated $\nu$ times, with each collision occurring for a time interval $\Delta t = t/\nu$. Thus, in all, there are $K=m\times \nu$ collisions, making it an $(m,\nu)$-collision map. 

As before, the interaction of $j$-${\mathrm{th}}$ sub-environment with the system is given by
$$
    U_j = e^{-i\beta_j \overline{H}_j \Delta t},
$$
and, from Definition \ref{def: collision map}, the $j$-th collision map is
\begin{equation}
    \Phi_j[.] = \text{Tr}_{E_j} \left[ U_j \left(. \otimes \rho_{E_j}\right) U_j^{\dagger} \right].
\end{equation}
The $(m,\nu)$-collision map is defined as follows:

\begin{restatable}[$(m,\nu)$-collision map]{definition}{}
\label{def: m, nu-collision map}
Let $\Phi_1$ to $\Phi_{m}$ be the collision maps as defined in Definition \ref{def: collision map}. Then a $(m, \nu)$-collision map, $\mathcal{M}_{m, \nu}$, is defined as the application of these maps composed $\nu$ times as follows:
    \begin{align}
        \mathcal{M}_{m, \nu}[.] \equiv  \left( \bigcirc_{j=1}^{m} \Phi_j\left[.\right] \right)^{\circ \nu}.        
    \end{align}    
\end{restatable}
\noindent
The value of $\nu$, i.e., the number of times the sequence of $m$ collisions should be repeated so that a $(m,\nu)$-collision map approximates $e^{\mathcal{L}t}$ up to an additive accuracy $\varepsilon$ was found in Ref.~\cite{pocrnic2023quantum}. For this, let us define
\begin{equation}
\label{eq:Gamma}
\Gamma =  \dfrac{\|\mathcal{L}\|^2_{1\to 1}}{m}+\left[\max_{\ell\in [1,m]}\left(\beta_S, \beta_{I_{\ell}}, \beta_{E_{\ell}}\right)\right]^4, 
\end{equation}
where $\|\mathcal{L}\|_{1\to 1}$ is the induced 1-norm of $\mathcal{L}[.]$. Now, we restate the result of \cite{pocrnic2023quantum} here:
\begin{lemma}[Corollary 2.1 of \cite{pocrnic2023quantum}]
\label{lemma: accuracy of collision model (Wiebe)}
For $\eps\in (0,1)$, a $(m, \nu)$-collision map as defined in Definition \ref{def: m, nu-collision map}, with interaction time $\Delta t = t/\nu$,
\begin{align}
    \nu \geq O \bigg( \frac{t^2 m}{\eps} \Gamma\bigg),
\end{align}
and $\lambda \rightarrow \frac{1}{\sqrt{\Delta t}}$ satisfies  
\begin{equation}    
    \norm{e^{\mathcal{L}t} - \mathcal{M}_{m, \nu}[.]}_{1 \to 1} \leq \eps.
\end{equation}
\end{lemma}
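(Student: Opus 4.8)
The plan is to reduce this global, $\nu$-sweep statement to a single-sweep local error estimate and then lift it by a standard telescoping argument. I would write one full sweep of $m$ collisions as $\mathcal{S} \equiv \bigcirc_{j=1}^{m}\Phi_j$, so that $\mathcal{M}_{m,\nu}=\mathcal{S}^{\circ\nu}$ while the target evolution factorizes as $e^{\mathcal{L}t}=(e^{\mathcal{L}\Delta t})^{\circ\nu}$ with $\Delta t=t/\nu$. Since both $\mathcal{S}$ and $e^{\mathcal{L}\Delta t}$ are CPTP maps, hence contractions in the induced $1$-norm, the telescoping identity
\begin{equation*}
\mathcal{S}^{\circ\nu}-(e^{\mathcal{L}\Delta t})^{\circ\nu}=\sum_{k=0}^{\nu-1}\mathcal{S}^{\circ k}\circ\paren{\mathcal{S}-e^{\mathcal{L}\Delta t}}\circ(e^{\mathcal{L}\Delta t})^{\circ(\nu-1-k)}
\end{equation*}
together with submultiplicativity immediately gives $\norm{\mathcal{M}_{m,\nu}-e^{\mathcal{L}t}}_{1\to1}\leq\nu\,\norm{\mathcal{S}-e^{\mathcal{L}\Delta t}}_{1\to1}$. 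Everything then reduces to proving a per-sweep bound $\norm{\mathcal{S}-e^{\mathcal{L}\Delta t}}_{1\to1}=\bigO(\Delta t^2\,m\,\Gamma)$; substituting $\Delta t=t/\nu$ yields a global error $\bigO(t^2 m\Gamma/\nu)$, and requiring this to be at most $\eps$ reproduces exactly $\nu\geq\bigO(t^2 m\Gamma/\eps)$.

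The substance of the argument is therefore the local (single-sweep) expansion, which I would carry out by matching Taylor/Dyson series. The subtlety specific to collision models is the diverging coupling $\lambda\to1/\sqrt{\Delta t}$: in the exponent of $U_j=e^{-i\Delta t(H_S/m+H_{E_j}+\lambda H_{I_j})}$ the interaction term enters at order $\sqrt{\Delta t}$ while the coherent and bath terms enter at order $\Delta t$, so I would organize the expansion of $\Phi_j$ in powers of $\sqrt{\Delta t}$. The zero-mean assumption $\Tr_{E_j}\left[[H_{I_j},\rho_S\otimes\rho_{E_j}]\right]=0$ kills the order-$\sqrt{\Delta t}$ contribution after the partial trace, and, together with the structure of the thermal sub-environment states, the order-$\Delta t^{3/2}$ contribution as well, leaving $\Phi_j=\mathcal{I}+\Delta t\,\mathcal{L}_j+\bigO(\Delta t^2\beta^4)$. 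Here the surviving second-order interaction term $\lambda^2\Delta t^2 H_{I_j}^2\sim\Delta t\,H_{I_j}^2$ reconstructs precisely the dissipative part of the $j$-th Lindblad generator and the rescaled $H_S/m$ term supplies its coherent part; the $\beta^4$ in the remainder is the operator-norm cost of the leading surviving correction, the fourth-order interaction term $(\lambda H_{I_j})^4\Delta t^4\sim\Delta t^2 H_{I_j}^4$.

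With the single-collision expansion in hand, I would compose the $m$ maps within a sweep and compare against $e^{\mathcal{L}\Delta t}=\mathcal{I}+\Delta t\,\mathcal{L}+\tfrac{\Delta t^2}{2}\mathcal{L}^2+\cdots$ using $\mathcal{L}=\sum_j\mathcal{L}_j$. The order-$\Delta t$ terms cancel identically; at order $\Delta t^2$ the ordered product of the sweep produces $\sum_{i<j}\mathcal{L}_i\mathcal{L}_j$, which differs from the symmetric $\tfrac12\mathcal{L}^2$ only by commutator and diagonal terms bounded by $\bigO(\Delta t^2\norm{\mathcal{L}}_{1\to1}^2)$, while the accumulated single-collision remainders contribute $\bigO(m\,\Delta t^2\beta^4)$. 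Adding the two and recalling from Eq.~\eqref{eq:Gamma} that $\Gamma=\norm{\mathcal{L}}_{1\to1}^2/m+[\max_\ell(\beta_S,\beta_{I_\ell},\beta_{E_\ell})]^4$ gives exactly $\norm{\mathcal{S}-e^{\mathcal{L}\Delta t}}_{1\to1}=\bigO(\Delta t^2 m\Gamma)$, which closes the chain.

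The main obstacle I anticipate is the bookkeeping in the local expansion: one must retain every term up to order $\Delta t^2$ under the $\sqrt{\Delta t}$ reorganization, verify that all odd-in-$H_{I_j}$ contributions genuinely vanish under $\Tr_{E_j}$ for the thermal state (not merely the linear one guaranteed by the stated assumption), and bound each surviving higher-order term uniformly in the induced $1$-norm via nested-commutator estimates. This is where the quartic $\beta$-dependence and the $1/m$ weighting inside $\Gamma$ must be tracked carefully. The telescoping step, by contrast, is routine once contractivity of the CPTP maps is invoked.
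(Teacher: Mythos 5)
The paper offers no proof of this lemma to compare against: it is imported verbatim as Corollary~2.1 of Ref.~\cite{pocrnic2023quantum}, which is exactly where the argument you reconstruct lives. Your proposal follows the same route as that reference — a telescoping/contractivity reduction of the $\nu$-sweep error to a single-sweep bound, then a Dyson expansion organized in powers of $\sqrt{\Delta t}$ under the diverging coupling $\lambda=1/\sqrt{\Delta t}$, with the thermal (diagonal) sub-environment states killing the odd-in-$H_{I_j}$ terms at orders $\Delta t^{1/2}$ and $\Delta t^{3/2}$ (you rightly note the stated commutator condition alone only handles the linear term), leaving a per-sweep remainder $\bigO\big(\Delta t^2(\norm{\mathcal{L}}_{1\to 1}^2+m\beta^4)\big)$ whose two pieces reproduce exactly the structure of $\Gamma$ in Eq.~\eqref{eq:Gamma} — so the reconstruction is correct and matches the cited proof in both strategy and the provenance of each term in the bound.
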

\noindent
This demonstrates that the $(m, \nu)$-collision map $\mathcal{M}_{m, \nu}[.]$ provides an accurate approximation of Lindbladian dynamics, provided $\nu$ is as stated in Lemma \ref{lemma: accuracy of collision model (Wiebe)}.  We use Hamiltonian simulation by SA-LCU and Algorithm \ref{algo: lindblad dynamics} to achieve this.

We intend to implement the circuit shown in Fig.~\ref{fig: Lindbladian circuit}. Each sequence of unitaries $Q_i$ corresponds to $m$ collisions between the system and each single-qubit sub-environment $E_1$ through $E_m$. As this is repeated $\nu$ times, there are overall $K=m\nu$ collisions. Since each block $Q_i$ corresponds to collisions with the same set of sub-environments, we do not have $K$ distinct collisions with distinct sub-environments, but rather $\nu$ blocks of $m$-collisions between the system and the $m$ sub-environments. So, to implement the circuit in Fig.~\ref{fig: Lindbladian circuit}, we slightly modify Algorithm~\ref{algo: collision model}. 

We rewrite the $(m,\nu)$-collision map as
\begin{equation}
\label{eq:m,nu map rewritten as K-collision map}
    \mathcal{M}_{m, \nu}[.] =  \bigcirc_{j=0}^{K-1} \Phi_{j(\mathrm{mod~}m)+1} \left[.\right], 
\end{equation}
where $K=m\nu$. The right-hand side of Eq.~\eqref{eq:m,nu map rewritten as K-collision map} is simply a $\Phi_j[.]$ map, composed $K$ times, where the cyclic order of the $\nu$ repetitions is respected. Thus, the problem of simulating Lindbladian dynamics implies implementing a specific $K$-collision map. Just as in Sec.~\ref{subsec:K-collision-approx}, this map can be implemented using a number of near-term Hamiltonian simulation algorithms. In our case, this change is reflected in Step 2a. of Algorithm~\ref{algo: lindblad dynamics}. For the $j$-${\mathrm{th}}$ iteration in Step 2 of Algorithm \ref{algo: collision model}, the $\Phi_{\ell}[.]$ map is implemented, where $\ell=j(\mathrm{mod~}m)+1$, i.e.\ it implements a collision between the system and the single qubit environment $E_{\ell}$. Overall, the correctness of Algorithm~\ref{algo: lindblad dynamics} is similar to Theorem \ref{thm: Algorithm 1 proof}. Formally, we state the results via the following theorem: 
\begin{restatable}[]{theorem}{}
\label{thm: collision map to lindbladian}
Let us consider an observable $O$, an $n$-qubit system prepared in the initial state $\rho_S$, and $m$ single-qubit sub-environments with each initialized in the single-qubit thermal state defined in Eq.~\eqref{eq:sub-environment-thermal}. Let $\eps, \delta \in (0,1)$ and $K=m\nu$, where
\begin{equation}
    \nu = \bigO \left( \frac{t^2 m\norm{O}\Gamma}{\eps}  \right).
    \label{eq: nu value in theorem}
\end{equation}
Then, for $\eps'=\eps/(12K\|O\|)$, Algorithm~\ref{algo: lindblad dynamics} outputs an estimate $\mu$ with a probability of at least $1-\delta$, such that:
\begin{equation}
    \left| \mu - \Tr\left[ O e^{\mathcal{L}t}[\rho_S] \right] \right| \leq \eps,
\end{equation} 
using $T$ runs of the circuit shown in Figure~\ref{fig: Lindbladian circuit},  
where 
\begin{equation}
\label{eq: number of samples}
       T = \bigO\left( \dfrac{\norm{O}^2\log(1/\delta)}{\eps^2}\right).
\end{equation}
Moreover, the circuit depth of each run is
\begin{equation}
\label{eq:circuit-depth-lindblad-single-ancilla}
 \tau_d=\widetilde{\bigO}\left(\dfrac{m^3 t^3 \norm{O}}{\varepsilon}\Gamma \beta^2_{\max}\right),   
\end{equation}
where 
\begin{equation}
\label{eq:beta-max}
    \beta_{\max}=\max_{\ell\in [1, m]} \left(\beta_{I_{\ell}},~ \sqrt{\frac{t}{m^2\nu}}\beta_S,~ \beta_{E_\ell}\sqrt{\frac{t}{\nu}}\right).
\end{equation}
\end{restatable}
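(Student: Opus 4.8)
The plan is to decompose the overall error into three independently controllable pieces via the triangle inequality, and then to reuse the machinery of Theorem~\ref{thm: Algorithm 1 proof} almost verbatim, since Algorithm~\ref{algo: lindblad dynamics} differs from Algorithm~\ref{algo: collision model} only in the cyclic relabelling of sub-environments encoded in Step~2a. Writing the exact $(m,\nu)$-collision map as the $K$-collision map of Eq.~\eqref{eq:m,nu map rewritten as K-collision map} with $K=m\nu$, and letting $\widetilde{\mathcal{M}}_{m,\nu}$ be its approximate (LCU-based) version, I would bound
\begin{align*}
\left|\mu-\Tr\left[O e^{\mathcal{L}t}[\rho_S]\right]\right| &\leq \left|\mu-\Tr\left[O\widetilde{\mathcal{M}}_{m,\nu}[\rho_S]\right]\right| + \left|\Tr\left[O\widetilde{\mathcal{M}}_{m,\nu}[\rho_S]\right]-\Tr\left[O\mathcal{M}_{m,\nu}[\rho_S]\right]\right|\\
&\quad + \left|\Tr\left[O\mathcal{M}_{m,\nu}[\rho_S]\right]-\Tr\left[O e^{\mathcal{L}t}[\rho_S]\right]\right|,
\end{align*}
and allot the budgets $\eps/4$, $\eps/4$, and $\eps/2$ to the three terms, respectively.

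For the last (Lindblad-vs-collision) term I would invoke Lemma~\ref{lemma: accuracy of collision model (Wiebe)}: choosing $\nu$ as in Eq.~\eqref{eq: nu value in theorem} makes $\|e^{\mathcal{L}t}-\mathcal{M}_{m,\nu}\|_{1\to1}\leq \eps/(2\norm{O})$, the extra $\norm{O}$ in $\nu$ exactly compensating, whence by H\"older's inequality $|\Tr[O(e^{\mathcal{L}t}-\mathcal{M}_{m,\nu})[\rho_S]]|\leq \norm{O}\,\|(e^{\mathcal{L}t}-\mathcal{M}_{m,\nu})[\rho_S]\|_1\leq \eps/2$. For the middle (Hamiltonian-simulation) term I would apply Lemma~\ref{lemma: approximate K-collision map distance bound}: with $\eps'=\eps/(12K\norm{O})$ each LCU circuit satisfies $\norm{U_j-\widetilde{U}_j}\leq\eps'$, so the map-level bound $3K\norm{O}\eps'=\eps/4$ holds. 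The first (statistical) term is controlled exactly as in Theorem~\ref{thm: Algorithm 1 proof}: a single run produces an unbiased estimator of $\Tr[O\widetilde{\mathcal{M}}_{m,\nu}[\rho_S]]/\zeta^2$ bounded in $[-\norm{O}\zeta^2,\norm{O}\zeta^2]$, and Hoeffding's inequality yields $\eps/4$ accuracy with probability $1-\delta$ after $T=\bigO(\norm{O}^2\zeta^4\log(1/\delta)/\eps^2)$ runs; once $\zeta=\bigO(1)$ is established below, this reduces to Eq.~\eqref{eq: number of samples}.

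The main work — and the step I expect to be the real obstacle — is the circuit-depth accounting, because the diverging coupling $\lambda\to1/\sqrt{\Delta t}$ couples $\beta_j$ to $\nu$. Using $\Delta t=t/\nu$ and $\lambda=\sqrt{\nu/t}$ in $\beta_j\leq \beta_S/m+\lambda\beta_{I_j}+\beta_{E_j}$, I would compute the LCU ``time'' $\tau_j=\beta_j\Delta t$ term by term and verify that each of the three contributions equals $\sqrt{\Delta t}$ times one of the three entries of $\beta_{\max}$ in Eq.~\eqref{eq:beta-max}, so that $\tau_j=\bigO(\sqrt{t/\nu}\,\beta_{\max})$ uniformly in $j$; landing this uniform simplification exactly on the stated $\beta_{\max}$ is the delicate part. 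I would then fix the truncation parameter $r$ in Lemma~\ref{lemma: lcu of time evolution} through the requirement $\zeta=\prod_{j=1}^K\alpha^{(j)}\leq \exp(K\tau_{\max}^2/r)=\bigO(1)$, which forces $r=\bigO(K\tau_{\max}^2)=\bigO(mt\,\beta_{\max}^2)$ and simultaneously certifies $\zeta=\bigO(1)$ for the statistics above. Finally, the depth per run is $\bigO(Kqr)=\widetilde{\bigO}(Kr)=\widetilde{\bigO}(m^2\nu t\,\beta_{\max}^2)$ — the factor $q=\bigO(\log(\cdots)/\log\log(\cdots))$ being polylogarithmic, and the sub-environment-preparation cost $K\tau_{\rho_E}=\bigO(m\nu)$ subdominant since $\tau_{\rho_E}=\bigO(1)$ — and substituting $\nu$ from Eq.~\eqref{eq: nu value in theorem} yields $\tau_d=\widetilde{\bigO}(m^3t^3\norm{O}\Gamma\beta_{\max}^2/\eps)$, matching Eq.~\eqref{eq:circuit-depth-lindblad-single-ancilla}.
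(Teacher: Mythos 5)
Your proposal is correct and follows essentially the same route as the paper: the paper likewise invokes the Theorem~\ref{thm: Algorithm 1 proof} machinery with $\eps'=\eps/(12K\norm{O})$ to get $\eps/2$ accuracy against the exact $(m,\nu)$-collision map (your explicit $\eps/4+\eps/4$ split is just that step unpacked), then combines Lemma~\ref{lemma: accuracy of collision model (Wiebe)} with H\"older's inequality for the remaining $\eps/2$, and concludes by the triangle inequality. Your circuit-depth accounting — $\beta_j\Delta t=\bigO\bigl(\sqrt{t/\nu}\,\beta_{\max}\bigr)$ term by term under $\lambda=\sqrt{\nu/t}$, the choice $r=\bigO(mt\beta_{\max}^2)$ forcing $\zeta=\bigO(1)$, and the substitution of $\nu$ into $\bigO(Kqr)$ — matches the paper's computation exactly.
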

\begin{proof}
First, from the proof of Theorem \ref{thm: Algorithm 1 proof}, we know that for precision $\eps'=\eps/(12\|O\|K)$ and $\nu$ as chosen in the statement of the Theorem, Algorithm~\ref{algo: lindblad dynamics} outputs $\mu$ such that,
$$
\left| \mu - \Tr\left[ O~ \mathcal{M}_{m, \nu}[\rho_S] \right] \right| \leq \frac{\eps}{2},
$$   
with probability at least $(1-\delta)$. This requires 
$$
T=\bigO\left(\dfrac{\|O\|^2}{\eps^2}\log(1/\delta)\right),
$$
repetitions of the circuit in Fig.~\ref{fig: Lindbladian circuit}. Furthermore, from Lemma~\ref{lemma: accuracy of collision model (Wiebe)}, we know that our choice of $\nu$ ensures 
$$
        \left\| e^{t \mathcal{L}} - \mathcal{M}_{m, \nu}[.] \right\|_{1 \to 1} \leq \dfrac{\eps}{2\norm{O}}.
$$
For any valid initial state of the system, $\rho_S$, we have from the definition of induced-1 norm:
    \begin{equation}
        \left\| e^{t \mathcal{L}}[\rho_S] - \mathcal{M}_{m, \nu}[\rho_S] \right\|_1 \leq \dfrac{\eps}{2\norm{O}}.
    \end{equation}   
Using the tracial version of H\"{o}lder's inequality  (Lemma \ref{thm:holder}) we obtain
    \begin{equation}\label{eq: distance lindbladian-collision map}
        \left| \Tr\left[ O e^{t \mathcal{L}}[\rho_S] \right] - \Tr\left[ O \mathcal{M}_{m, \nu}[\rho_S] \right] \right| \leq \dfrac{\eps}{2}.
    \end{equation}
Then, the triangle inequality gives us
\begin{align}
        \bigg| \mu - &\Tr\left[ O e^{t \mathcal{L}}[\rho_S] \right] \bigg| 
        \nonumber \\&\leq \left| \mu - \Tr\left[ O \mathcal{M}_{m, \nu}[\rho_S] \right] \right|\nonumber \\
        &\quad + \left| \Tr\left[ O \mathcal{M}_{m, \nu}[\rho_S] \right] - \Tr\left[ O e^{t \mathcal{L}}[\rho_S] \right] \right| \nonumber \\
        &\leq \frac{\eps}{2} + \frac{\eps}{2} = \eps.
    \end{align}

\begin{table*}[htbp]
\caption{
   Comparison of the complexities for simulating Lindblad dynamics via the quantum collision model using different near-term Hamiltonian simulation procedures. We consider an $n$-qubit system, prepared in $\rho_S$, with Hamiltonian $H_S$, expressed as a linear combination of $L_S$ strings of Pauli operators, with total weight $\beta_j$. The environment is a discrete sum of $m$ single-qubit sub-environments, each prepared in the (single-qubit) thermal state. The $j$-${\mathrm{th}}$ collision corresponds to the interaction Hamiltonian $H_{I_j}$, which is also a linear combination of strings of Pauli operators of $L_{I_j}$ terms with total weight $\beta_{I_j}$. We implement $m$ collisions between the system and each sub-environment qubit, one by one, such that each block of $m$ collisions is repeated a total of $\nu$ times. For any observable $O$, if $\nu=\bigO(t^2\|O\|m\Gamma/\eps)$, our procedures output an estimate that is an $\eps$-additive accurate estimate of $\Tr[O~e^{\mathcal{L}t}[\rho_S]]$. Here, $L, \beta_{\max}$ and $\Gamma$ are defined in Eq.~\eqref{eq:total-terms-L}, Eq.~\eqref{eq:beta-max}, and Eq.~\eqref{eq:Gamma}, respectively.}
\label{table: lindbladian table}
\centering
\renewcommand\baselinestretch{3}\selectfont
\begin{tabular*}{0.9\textwidth}{l@{\extracolsep{\fill}} ccc}
\hline
Algorithm & Total no. of. qubits & Circuit depth per coherent run & Classical repetitions \\
\hline\hline
$1$st-order Trotter & $n+1$ & $\bigO\left( \dfrac{ L m^3t^3\|O\|^2}{\eps^2} \Gamma\beta_{\max}^2\right)$ & $\bigO\left(\dfrac{\|O\|^2}{\epsilon^2}\right)$ \\

qDRIFT & $n+1$ & $\bigO\left( \dfrac{ m^3 t^3 \|O\|^2}{\eps^2} \Gamma\beta^2_{\max}  \right)$ & $\bigO\left(\dfrac{\|O\|^2}{\eps^2}\right)$\\

$2$nd-order Trotter  & $n+1$ & $\bigO\left(L(mt)^{9/4}\left(\dfrac{\|O\|}{\eps}\right)^{5/4} \Gamma\beta^{3/2}_{\max}\right)$ & $\bigO\left(\dfrac{\|O\|^2}{\eps^2}\right)$ \\

Single-Ancilla LCU & $n+2$ & $\widetilde{\bigO}\left(\dfrac{m^3 t^3 \norm{O}}{\varepsilon}\Gamma \beta^2_{\max}\right)$ & $\bigO\left(\dfrac{\|O\|^2}{\eps^2}\right)$ \\

$2k$-order Trotter $[k>2]$  & $n+1$ & $\widetilde{\bigO}\left(\dfrac{Lm^2t^2\|O\|}{\eps}\Gamma\beta_{\max}\right)$ & $\bigO\left(\dfrac{\|O\|^2}{\eps^2}\right)$ \\[1ex]
\hline
\end{tabular*}
\end{table*}
Now, Theorem \ref{thm: Algorithm 1 proof} gives the circuit depth of a $K$-collision map as
$$
\tau_d=\bigO\left(\beta^2K^2\Delta t^2 \frac{\log(\beta K\|O\|\Delta t/\eps)}{\log\log(\beta K\|O\|\Delta t/\eps)}+K\tau_{\rho_E}\right).
$$
In our case, $K=m\nu$, $\Delta t =t/\nu$, and $\tau_{\rho_E}=\bigO(1)$. Moreover, $\beta$ depends on $t$ and $\eps$ as
\begin{align}
\beta =&\ \max_{\ell\in [1,m]} (\beta_j)
= \max_{\ell} \left(\sqrt{\frac{\nu}{t}}\beta_{I_{\ell}} +\frac{1}{m} \beta_S + \beta_{E_{\ell}} \right)\nonumber\\
\leq&\ \sqrt{\frac{\nu}{t}}\times \bigO\left(\beta_{\max}\right),
\label{eq: beta in lindbladian}
\end{align}
where, in the last line, we have used the fact that $\beta_{E_{\ell}}=1$ for any $\ell\in [1,m]$. Here, $\beta_{\max}$ is as defined in the statement of this Theorem. Substituting these parameters, we obtain
\begin{align*}
    \tau_d=&\ \bigO\left(\beta^2 m^2 t^2 \frac{\log(\beta m t \|O\|/\eps)}{\log\log(\beta m t\|O\|/\eps)}+m\nu\right)\\
    =&\ \bigO\left(\nu m^2 t \beta^2_{\max} \frac{\log(\beta_{\max} m \sqrt{\nu t} \|O\|/\eps)}{\log\log(\beta_{\max} m \sqrt{\nu t}\|O\|/\eps)}+m\nu\right).    
\end{align*}
Finally, substituting $\nu=\bigO(t^2 m\|O\|\Gamma/\eps)$, we obtain
\begin{align}
    \tau_d=&\ \bigO\left( \frac{m^3 t^3 \beta^2_{\max}\|O\|\Gamma}{\eps} \frac{\log(\beta_{\max} m t \Gamma \|O\|/\eps)}{\log\log(\beta_{\max} m t \Gamma \|O\|/\eps)}\right)\\
    =&\ \widetilde{\bigO}\left(\frac{m^3 t^3 \|O\|}{\eps}\beta^2_{\max} \Gamma\right).
\end{align}
This completes the proof.
\end{proof}

There are two primary sources of error in simulating Lindblad dynamics using quantum collision models: the first arises from approximating Lindbladian dynamics by collision models, and the second stems from the simulation of individual collision steps, which depends on the precision of the Hamiltonian simulation technique employed. While the latter can be mitigated by choosing Hamiltonian simulation algorithms with optimal precision dependence, the error coming from the inherent gap between the Lindblad map and the $(m,\nu)$-collision map remains unaffected by the choice of Hamiltonian simulation. Indeed, Lindbladian dynamics can be approximated only if the system strongly couples with the sub-environments with a strength $\lambda=1/\sqrt{\Delta t}=\sqrt{\nu/t}\propto t/\sqrt{\eps}$ that grows stronger with the time we intend to simulate the dynamics. The norm of the $j$-${\mathrm{th}}$ collision Hamiltonian is at most $\beta_j$, which also increases monotonically with $t$, affecting the circuit depth of all methods that simulate Lindblad dynamics using quantum collision models~\cite{cattaneo2021collision}.  
 
In contrast, the state-of-the-art methods (i.e., direct approaches) for simulating Lindbladian dynamics require a cost $\bigO(t~\polylog{t/\varepsilon})$~\cite{childs2017efficient, cleve2017efficient, li2023simulating, ding2024simulating}. However, most of these methods require access to block encodings and use complicated, infeasible controlled operations for early fault-tolerant quantum computers. On the other hand, quantum collision models provide an easy-to-implement approach, not just for Lindbladian maps but also for other open systems dynamics. Moreover, as mentioned in Sec.~\ref{subsec:complexity-comparison-k-collision}, Algorithm \ref{algo: lindblad dynamics} also provides a unified framework to compare the cost of implementing the $(m,\nu)$-collision map using different near-term Hamiltonian simulation techniques. Consequently, in the next section, we compare the complexity of Algorithm \ref{algo: lindblad dynamics} when other near-term Hamiltonian techniques are used to output an $\eps$-additive estimate of $\Tr[O~e^{\mathcal{L}t}[\rho_S]]$.

\subsection{Comparison with other near-term Hamiltonian simulation algorithms}

We will borrow the circuit depths obtained in Sec.~\ref{subsec:complexity-comparison-k-collision} (Table \ref{table: collision model complexity}) for estimating $\Tr[O~e^{\mathcal{L}t}[\rho_S]]$ to $\varepsilon$-additive accuracy. In this case, the parameters are $K=m\nu$, $\Delta t=t/\nu$, and $\beta=\bigO(\beta_{\max}\sqrt{\nu/t})$,
where $\nu=\bigO(t^2\|O\|m\Gamma/\eps)$. From Theorem \ref{thm: collision map to lindbladian}, we know that any Hamiltonian simulation procedure needs to be implemented with precision $\eps'=\bigO(\varepsilon/K\|O\|)$. The circuit depth per coherent run, the total number of qubits needed, and the number of classical repetitions required are outlined in Table \ref{table: lindbladian table}. 

Let us first analyze the circuit depth for the first-order Trotter approach. In the worst case, it would depend on the maximum number of terms in $\overline{H}_{\ell}=H_S+H_{I_{\ell}}+H_{E_{\ell}}$ corresponding to the collisions. Let 
\begin{equation}
\label{eq:total-terms-L}
    L=L_S+\max_{\ell\in [1,m]} L_{I_{\ell}}+1.
\end{equation}  
Then, for the first-order Trotter method, the appropriate substitution of the parameters yields
\begin{equation}
    \tau_d=\bigO\left(\dfrac{L m^3 t^3\|O\|^2}{\eps^2}\Gamma\beta^2_{\max}\right),
\end{equation}
which indicates that the circuit depth is worse than the circuit depth of Hamiltonian simulation by SA-LCU [Eq.~\eqref{eq:circuit-depth-lindblad-single-ancilla}]. This method is, however, qubit-efficient, requiring $n+1$ qubits overall.

The circuit depth of any procedure using qDRIFT to estimate the desired expectation value is given by
\begin{equation}
    \tau_d=\bigO\left(\dfrac{ m^3 t^3\|O\|^2}{\eps^2}\Gamma\beta^2_{\max}\right),
\end{equation}
wherein the advantage over first-order Trotter is in the absence of any dependence on $L$. However, this circuit depth is also worse than Eq.~\eqref{eq:circuit-depth-lindblad-single-ancilla}. The qDRIFT approach also requires $n+1$ qubits overall, which is one less than Hamiltonian simulation by SA-LCU.

For any $2k$-order Trotter method, we also incorporate the additive cost coming from the repeated preparation of the sub-environment register in the single qubit thermal state a total of $K=m\nu$ times (each such state can be prepared in $\bigO(1)$ circuit depth). Overall, we have
\begin{align}
\label{eq:2k-order-trotter-lindblad}
    \hspace{-0.15cm}\tau_d= \bigO\Big(L(mt)^{\frac{3}{2}+\frac{3}{4k}}\left(\dfrac{\|O\|}{\eps}\right)^{\frac{1}{2}+\frac{3}{4k}}\left(\Gamma\beta^2_{\max}\right)^{\frac{1}{2}+\frac{1}{4k}}+m\nu\Big).\hspace{-0.15cm}
\end{align}
However, as mentioned previously, only low-order Trotter methods are preferred for near-term implementation. In particular, for the second-order Trotter method ($k=1$), this becomes
\begin{equation}  \tau_d=\bigO\Big(L(mt)^{9/4}\left(\dfrac{\|O\|}{\eps}\right)^{5/4} \Gamma\beta^{3/2}_{\max}\Big).
\end{equation}
Compared to the circuit depth obtained by Hamiltonian simulation by SA-LCU, the second-order Trotter method has a better dependence on $m,~t$ and $\beta_{\max}$, and a worse dependence on $\norm{O}$ and $1/\eps$, in addition to scaling with $L$. Thus, the circuit depth in Eq.~\eqref{eq:circuit-depth-lindblad-single-ancilla} is shorter in settings where $L\ll \beta_{\max}$, and a high precision of the desired expectation value is demanded. 

In summary, SA-LCU achieves significantly shorter circuit depths than second-order Trotterization for high-precision simulations over short time scales. More precisely, the ratio between the circuit depths per coherent run of second-order Trotter and SA-LCU scales as $O(\eps^{1/4}/t^{3/4})$ (ignoring the dependence on all other parameters). Therefore, for simulating Lindblad dynamics over very long time durations where $\eps^{1/4}/t^{3/4}\ll 1$, second-order Trotterization can offer shorter circuit depths. 

For higher orders of this method ($k>2$), the additive term $m\nu$ starts to dominate, and in such cases, the asymptotic circuit depth is
$$
\tau_d=\widetilde{\bigO}\left(\dfrac{Lm^2t^2\|O\|}{\eps}\Gamma\beta_{\max}\right).
$$
Thus, even at very high Trotter orders, the dependence on $1/\eps$, $\|O\|$, and $\Gamma$ can be no better than Eq.~\eqref{eq:circuit-depth-lindblad-single-ancilla}. However, the dependence on $m, ~t$ and $\beta_{\max}$ is quadratically better. So, the circuit depth in Eq.~\eqref{eq:circuit-depth-lindblad-single-ancilla} is shorter when $L\gg m\beta_{\max} t$. This happens when we wish to simulate Lindblad dynamics for short $t$, and moreover, the maximum number of terms in the underlying Hamiltonians $\overline{H}_j$ ($L$) is substantially large \cite{campbell2019random}. As mentioned before, we have listed the worst-case complexity for Trotterization. It is possible that for specific Hamiltonians, the scaling of the prefactor is better than the worst-case \cite{childs2021theory}.

Finally, qubitization requires $\bigO(\log L)$ ancilla qubits, coherent access to a block encoding of the underlying Hamiltonians $\bar{H_j}$, and sophisticated controlled operations. The circuit depth is given by 
\begin{align}
     \tau_d&=\bigO\left(L\beta mt +m\nu\log\left(m\nu\|O\|/\eps\right)\right)\\
     &=\widetilde{\bigO}\left(\dfrac{Lm^2 t^2\|O\|}{\eps}\Gamma\beta_{\max}\right).
\end{align}
Thus, scaling of the circuit depth is similar to a very high order Trotter (up to logarithmic factors).

From the above discussion, it is clear that any procedure would at least require a circuit depth of $m\nu=\bigO(mt^2\|O\|\Gamma/\eps)$, simply because the sub-environments are prepared a total of $m\nu$ times. This can be seen as a lower bound for the circuit depth of estimating $\Tr[O~e^{\mathcal{L}t}[\rho_S]]$ using incoherent measurements of $O$, and matches with the lower bound of Ref.~\cite{cleve2017efficient}. In the next section, we apply these methods to a concrete problem.

Overall, our methods provide qubit-efficient, end-to-end quantum algorithms for simulating Lindbladian dynamics via the quantum collision model. It is, however, important to distinguish them from direct approaches such as Refs.~\cite{cleve2017efficient, li2023simulating, ding2024simulating, borras2025simulatelindblad}. These methods assume access to specialized oracles such as block encodings \cite{chakraborty2019power, low2019hamiltonian}, i.e.\ unitaries that embed the system Hamiltonian (say $U_{H_S}$) and each of the Lindblad jump operators (say $U_{A_j}$), in their top-left block. The complexity is expressed in terms of the number of queries made to the oracles $U_{H}$ and $U_{A_j}$, with the query complexity scaling as $O(t\cdot \mathrm{polylog}(t/\varepsilon))$ (ignoring dependence on other parameters). The actual circuit depth and gate counts depend on the detailed structure of $H_S$ and $A_j$, making a direct comparison with our end-to-end methods infeasible. Moreover, constructing such block-encodings often requires substantial overhead (in terms of ancilla qubits, multi-qubit controlled operations), rendering these methods impractical for near-term quantum devices.

\begin{figure*}[htp!]
\captionsetup[subfigure]{labelformat=empty}
\subfloat[\quad\quad\quad\quad(a)]{\includegraphics[width=0.9\columnwidth]{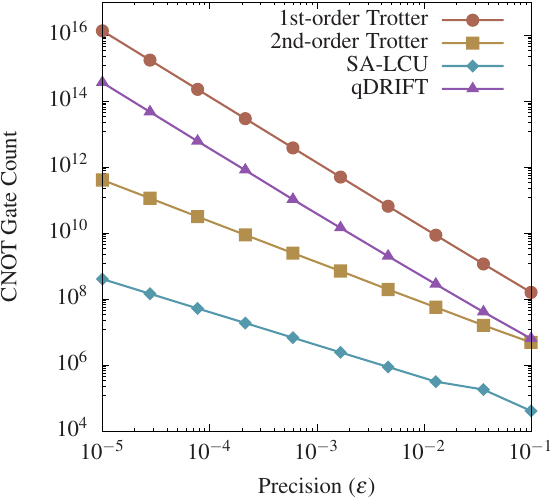}\label{fig:lindbladian_a}}\hspace{1cm}
\subfloat[\quad\quad\quad\quad(b)]{\includegraphics[width=0.9\columnwidth]{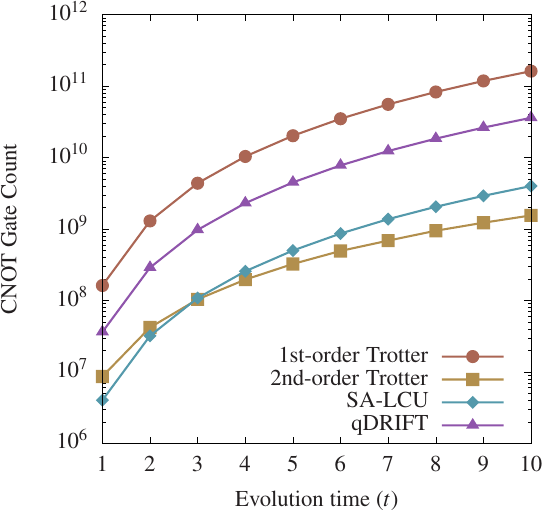}\label{fig:lindbladian_b}}
\caption{We consider the problem of estimating the average transverse-field magnetization of a $10$-qubit Heisenberg XXX model under amplitude damping. The corresponding Lindbladian dynamics can be approximated by quantum collision models. The randomized quantum algorithms we develop for simulating quantum collision models can be used to estimate the desired expectation value. In these plots, we show the CNOT gate count per coherent run of our algorithm for different near-term Hamiltonian simulation procedures: the First-order Trotter method (brown circles), the Second-order Trotter method (olive squares), Hamiltonian simulation by Single-Ancilla LCU (SA-LCU, blue diamonds), and qDRIFT (purple triangles). In (a), we vary the precision ($\eps$) for a fixed evolution time ($t=1$) of the underlying Lindbladian. The Hamiltonian simulation by SA-LCU outperforms the first- and second-order Trotter methods and qDRIFT. In (b), we vary the evolution time ($t$) for a fixed precision, $\eps=0.01$, where second-order Trotter outperforms the other methods. \label{fig:magnetization}}    
\label{fig:br}
\end{figure*}

\subsection{Numerical benchmarking: Ising model under amplitude damping}
We numerically benchmark the performance of the Markovian quantum collision model for simulating Lindblad dynamics by applying it to a concrete problem. We consider the one-dimensional transverse-field Ising model with nearest-neighbor interactions (also known as the Heisenberg XXX model), a widely used testbed for benchmarking Hamiltonian simulation techniques due to its physical significance in condensed matter physics~\cite{sachdev2011quantum}. We look at the dynamics of this system when the environment is a discrete sum of sub-environments, each corresponding to a single-qubit amplitude-damping channel. The $j$-th collision corresponds to the $j$-th sub-environment qubit interacting non-trivially with site $j$ of the system Hamiltonian. Thus, the total number of sub-environments is the same as the number of sites in the Ising chain. Consequently, let us define the system Hamiltonian as follows:
\begin{equation}
\label{eq: ising model}
H_S = -J \sum_{j=1}^{m-1} \sigma_i^z \sigma_{i+1}^z 
      - h \sum_{j=1}^{m} \sigma_i^x,
\end{equation}
where $J$ denotes the coupling strength between nearest neighbors, $h$ is the transverse magnetic field strength, and $\sigma_j^{z}$ and $\sigma_j^{x}$ are Pauli operators acting on site $j$. The total number of sites $m$, is the same as the number of sub-environments. Finally, we accommodate for the fact that the system interacts with each sub-environment one at a time by considering the rescaled system Hamiltonian $H_S/m$, with $\beta_S = (J + h)$. The environment is a discrete sum of $m$ single-qubit number operators, with each sub-environment being in the state $\ket{0}$, corresponding to a thermal state at zero temperature $(\omega\to\infty)$, i.e., $\rho_{E_j}=\ket{0}\bra{0}$, for all $j\in [1,m]$.

We define the interaction Hamiltonian corresponding to the $j$-th collision as
\begin{align}
    H_{I_j} =&\ \sqrt{\gamma} (\mathbb{I}^{j-1} \otimes \hat{\sigma}^+_j \otimes \mathbb{I}^{m-j-1}\otimes \hat{\sigma}^-_a \nonumber \\
    &\qquad + \mathbb{I}^{j-1} \otimes \hat{\sigma}^-_j \otimes \mathbb{I}^{m-j-1}\otimes \hat{\sigma}^+_a),
    \label{eq:lindbladian-amplitude-interaction-ham}
\end{align}
where $a$ denotes the environment qubit, $\gamma$ denotes the damping strength and $\sigma_-$ denotes the lowering operator. We estimate the average transverse-field magnetization,
\begin{equation}
\label{eq: Magnetization operator}
M_z=\dfrac{1}{m}\sum_{j=1}^{m}\sigma^{z}_{j},
\end{equation}
with respect to the reduced state $e^{\mathcal{L}t}[\rho_S]$, i.e., we obtain $\mu$ such that
$$
\left|\mu-\Tr\left[M_z~e^{\mathcal{L}t}[\rho_S]\right]\right|\leq \eps.
$$
Note that the Lindblad master equation dynamics, which we numerically simulate, is given by Eq.~\eqref{eq: lindbladian map}. That is, we have:
\begin{equation}
    \mathcal{L}[\rho_S]\equiv \diffp{\rho_S}{t}= -i[H_S, \rho] + \sum_j \left( A_j \rho A_j^\dagger - \frac{1}{2} \{A_j^\dagger A_j, \rho\}\right)
    \label{eq: lindbladian-map-amp-damp},
\end{equation}
where the jump operator for the amplitude damping on $j\text{th}$ qubit,
\begin{equation}
    A_j = \sqrt{\gamma} \: \mathbb{I}^{j-1} \otimes \sigma_{-} \otimes \mathbb{I}^{n-j}.
    \label{eq:jump-operator-amp-damp}
\end{equation}

We perform numerical benchmarking for estimating the desired expectation value on a $10$-qubit ($m=10$) transverse field Ising model under amplitude damping via the quantum collision model using different Hamiltonian simulation procedures (first and second order Trotter methods, SA-LCU, and qDRIFT). In Fig.~\ref{fig:lindbladian_a}, we compare the CNOT gate counts per coherent run to fix the Lindblad evolution time to $t=1$ and estimate $\mu$ for different values of $\eps$. In Fig.~\ref{fig:lindbladian_b}, we fix $\eps=0.01$ and vary $t$ instead. To obtain these plots, we set the coupling strength $J=1$ and the transverse magnetic field strength $h=0.1$ in the system Hamiltonian $H_S$. The strength of the amplitude damping channel for each interaction Hamiltonian is fixed to $\gamma=1$. The time of each collision, $\Delta t$, depends on the precision $\eps$ and $t$. Hence, $\Delta t$ differs for different values of $\eps$ and $t$ in these figures. 

To obtain the CNOT gate count for each Hamiltonian simulation procedure, we construct the entire circuit on Qiskit using a fully-connected circuit architecture. This corresponds to a composition of Hamiltonian simulations. For this purpose, we use the Solovay-Kitaev theorem (available in Qiskit) to decompose the circuit into a basis comprising single-qubit rotations and CNOT gates. The usual gate optimizations available on Qiskit are applied to all the circuits to obtain a non-trivial total CNOT count. Finally, for a fair comparison, we choose the number of Trotter steps based on the tighter commutator bounds~\cite{childs2021theory} for the first and second-order Trotter methods. 

Fig.~\ref{fig:lindbladian_a} shows that, for a fixed $t$, the Hamiltonian simulation by the SA-LCU method performs better than the first and second-order Trotter methods and qDRIFT. This is because it has a better dependence on the precision. On the other hand, in Fig.~\ref{fig:lindbladian_b}, the second-order Trotter method outperforms the rest when $\eps$ is fixed and $t$ is increased.

\begin{figure*}[!t]
    \centering
    \includegraphics[width=0.8\textwidth]{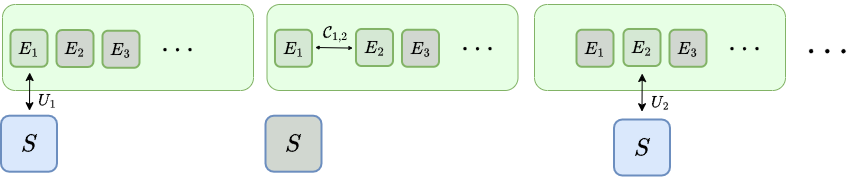}
    \caption{Non-Markovian evolution via the collision model illustrating the interleaved dynamics between system-environment and the additional intra-environment interactions. The system $S$ (blue) interacts with the environmental subsystems $E_i$ (green) through unitary operations $U_i$, while only the adjacent environmental subsystems interact via channel $C_{i,i+1}$. This sequential structure may create a propagating chain of correlations, where information flows not only between the system and environment but also via nearest-neighbor interactions. The three panels represent consecutive time steps of the evolution, demonstrating how correlations may build up and propagate through the environmental subsystems, capturing the memory effects and non-Markovian behavior of the quantum dynamics.}
    \label{fig:collision-second}
\end{figure*}
\section{Simulation of non-Markovian quantum collision models}\label{sec:Non-markovian dynamics}

In the quantum algorithm we develop in Sec.~\ref{sec:Modified Collision Model}, the interactions correspond to memoryless collisions: each sub-environment qubit interacts for $\Delta t$ before being traced out. Thus, a $K$-collision map can only generate Markovian dynamics. Within the collision model framework, this translates to each sub-environment interacting with the system in isolation without influencing other sub-environments. For instance, Lindbladian dynamics adhere to a strict Markovian assumption, where the environmental dynamics occur on much faster timescales than the system, effectively ensuring that the environment remains unaffected by the interaction. On the other hand, non-Markovian dynamics naturally arise when the collisions are memory-retaining: interactions between the sub-environments ensure some information about the prior collisions is retained~\cite{ciccarello2022collisionreview}. In this section, we extend the $K$-collision map to the non-Markovian framework, incorporating interactions between the different sub-environments. In particular, we consider interactions that preserve the CPTP nature of the maps, such that they can be seamlessly composed to obtain the reduced dynamics of the system. To this end, we discuss next the framework introduced by Ciccarello et al.~\cite{Ciccarello2013PRA}, which provides a simple way to incorporate sub-environment interactions. 

As in Sec.~\ref{sec:Modified Collision Model}, we consider an $n$-qubit system described by the Hamiltonian $H_S$ and prepared in the state $\rho_S$ from the Hilbert space $\mathcal{H}_S$. It interacts with an environment made up of $m$ discrete sub-environments described by the Hamiltonian $H_{E_j}$ and prepared in states $\rho_{E_j}$, for $j\in [1,m]$.  Each sub-environment lives in the Hilbert space $\mathcal{H}_{E_j}$ and the overall environment belongs to $\otimes_{j=1}^m \mathcal{H}_{E_j}$. For simplicity, we assume that the dimension of $\mathcal{H}_{E_j}$ is the same for all $j\in [1, m]$. Also, the interaction Hamiltonian corresponding to the $j$-${\mathrm{th}}$ collision between the system and the $j$-${\mathrm{th}}$ sub-environment is $H_{I_j}$. So, the total Hamiltonian for the $j$-${\mathrm{th}}$ collision remains the same as in Sec.~\ref{sec:Modified Collision Model}, i.e., 
$$
H_{j}=H_S+H_{I_j}+H_{E_j}.
$$
However, now, we assume that a collision between the system and the environment is followed by a collision between consecutive sub-environments via some CPTP quantum channel $\mathcal{C}_{i,j}$ as follows: Initially, the system in the state $\rho_S$ interacts with $E_1$ prepared in the state $\rho_{E_1}$ for $\Delta t$. Then, $E_1$ interacts with the next sub-environment $E_2$, prepared in $\rho_{E_2}$ via the quantum channel $\mathcal{C}_{1,2}$. It is only after this intra-environment interaction that $E_1$ is traced out. Thus, a single iteration now consists of two collisions: a system-sub-environment collision followed by a collision between two consecutive sub-environments. 

This sequence of interactions continues for some $K$ iterations, as depicted in Fig.~\ref{fig:collision-second}, and leads to a non-Markovian $K$-collision map. At the $j$-${\mathrm{th}}$ iteration, the system and $j$-th sub-environment interact for time $\Delta t$. Subsequently, a new environment state $\rho_{E_{j+1}}$ is initialized, and $E_j$ interacts with $E_{j+1}$ via the CPTP channel $\mathcal{C}_{j, j+1}$, following which $E_j$ is then traced out. Ciccarello et al. \cite{Ciccarello2013PRA} demonstrated that when $\mathcal{C}_{j,j+1}$ is the partial swap operation between consecutive sub-environment qubits, the corresponding collision model leads to a non-Markovian master equation in the limit where the number of collisions $K\to\infty$. Or, more precisely, for $p\in [0,1]$ and two states $\rho_j$ and $\sigma_{j+1}$ of the same dimension,
\begin{align}
    \label{eq:non-markovian-partial-swap}
    &\hspace{-0.5cm}\mathcal{C}_{j, j+1}[\rho_j\otimes \sigma_{j+1}]=\nonumber\\ 
    &(1 - p) (\rho_{j}\otimes\sigma_{j+1}) + S_{j,j+1}(\rho_j\otimes\sigma_{j+1})\text{S}_{j,j+1}^{\dagger},
\end{align}
where $S_{j,j+1}$ swaps the two states $\rho_j$ and $\sigma_{j+1}$. 
The parameter $p$ provides a handle over the degree of non-Markovianity, ranging from no information transfer (memoryless collisions) for $p=0$ to perfect swapping of information between consecutive sub-environments for $p=1$. 

We now define the map corresponding to the $j$-${\mathrm{th}}$ iteration (similar to Definition \ref{def: collision map}) with two sub-environment registers (instead of one as in the Markovian case). There is one subtlety in defining the maps, namely, which sub-environment register interacts with the system and which one gets traced out. At the $j$-${\mathrm{th}}$ iteration, if $j$ is odd (even), the second (first) sub-environment register stores $\rho_{E_{j+1}}$, while the first (second) sub-environment register interacts with the system, and is ultimately traced out. For $j\in [1,m]$, let us then define, 
\begin{equation}
\label{eq:nm-system-env-collision}
U_{S_j}=\bar{U}_j\otimes I_E,    
\end{equation}
where $\bar{U}_{j}=e^{-i\Delta t \beta_j\overline{H}_j}$ denotes the collision between the system $S$ and the $j$-${\mathrm{th}}$ sub-environment. Here $\overline{H}_j$ is the normalized hamiltonian to make it's decomposition a convex combination of pauli operators and $\beta_j$ is the normalizing factor. 
Furthermore, we define the CPTP map,
\begin{equation}
\label{eq:nm-env-env-collision}
    \mathcal{V}_{j,j+1}=I_S\otimes C_{j,j+1},
\end{equation}
where $C_{j,j+1}$ is the partial swap operation defined in Eq.~\eqref{eq:non-markovian-partial-swap}. The collision map is applied to a composite state of the system and the two sub-environments $E_j$ and $E_{j+1}$, with the state of the latter prepared in $\rho_{E_{j+1}}$. As explained above, the sub-environment register storing $\rho_{E_{j+1}}$ changes depending on whether $j$ is even or odd. Thus, we can now formally define the $j$-${\mathrm{th}}$ non-Markovian collision map $\Phi^{\mathcal{N}}_j$ as:
\begin{equation}
    \label{eq:nm-collision-map}
        \Phi^{\mathcal{N}}_j[\cdot] \equiv 
        \begin{cases}
\Tr_{E_{j}}\big[\mathcal{V}_{j,j+1}\big[U_{S_j}\left(\cdot\otimes\cdot\otimes \rho_{E_{j+1}}\right)U_{S_j}^{\dagger}\big]\big],~ j~\text{is odd,}\\\vspace{-0.5cm}\\
\Tr_{E_{j}}\big[\mathcal{V}_{j,j+1}\big[U_{S_j}\left(\cdot\otimes \rho_{E_{j+1}} \otimes\cdot\right)U_{S_j}^{\dagger}\big]\big],~j~\text{is even.}
        \end{cases}     
\end{equation}

Note that irrespective of whether $j$ is odd or even, the $j$-${\mathrm{th}}$ sub-environment $E_j$ is traced out [it is the first (second) sub-environment register when $j$ is odd (even)]. Let us now look at the overall map for $K$ iterations of system-environment and environment-environment collisions. It is essentially equivalent to composing the $\Phi^{\mathcal{N}}_j$[.] map a total of $K-1$ times, following which the reduced state is the composite state of the system and the $K$-th sub-environment. At this stage, the sub-environment $K$ does not interact with the next sub-environment. So, the final map is an interaction between the system and this final sub-environment, according to $U_{S_K}$ for time $\Delta t$, followed by the tracing out of $E_K$, leaving only the transformed state of the system. Thus, the non-Markovian $K$-collision map can be defined as
\begin{equation}
    \label{eq:nm-k-collision}
\mathcal{N}_K[\cdot]\equiv\Tr_{E_K}\left[U_{S_K}\left(\bigcirc_{j=1}^{K-1}\Phi^{\mathcal{N}}_j[\cdot]\right)U^{\dag}_{S_K}\right].
\end{equation}
\vspace{-0.4cm}
\begin{algorithm*}[t]
\SetAlgoCaptionLayout{bottom}
    \caption{Algorithm to estimate the expectation value of an observable $O$ with respect to the non-Markovian $K$-collision map applied to a quantum state.}\label{algo: nm-dynamics}
  \KwIn{Initial system state in $\rho_S$, sub-environment states $\rho_{E_1}$ to $\rho_{E_K}$, observable $O$, unitaries $\widetilde{U}_1$, \dots. $\widetilde{U}_K$, and precision $\eps'$, where the LCU decomposition of each $\widetilde{U}_j=\sum_{k}\alpha_{jk} W_{jk}$, such that $\forall j\in [1, K]$, $\norm{\widetilde{U}_j-e^{-i\Delta t\overline{H}_j}}\leq \eps'$.~\\~\\}
    \begin{itemize}
        \item[1.~] Initialize the system, the first environment register, and the ancilla in states $\rho_S$, $\rho_{E_1}$, and $\ket{+}$, respectively.
        \item[2.~] For iterations from $j = 1$ to $K-1$:
            \begin{itemize}
                \item[a.~] Draw two i.i.d.~samples $X_j$ and $Y_j$ from the ensemble 
                    \begin{equation*}
                        \mathcal{D}_{j}=\left\{ W_{j \ell}, \frac{\alpha_{j \ell}}{\alpha^{(j)}}\right\},
                    \end{equation*} 
                    where $\alpha^{(j)} = \sum_{\ell}|\alpha_{j \ell}|$
                \item[b.~] Apply the controlled unitary $X_j^{(c)}$ and the anti-controlled unitary $Y_j^{(a)}$, controlled on the ancilla qubit with the target being the system register and the first (second) environment register if $j$ is odd (even).
            \item[c.~] If $j$ is odd, initialize the second environment register in the state $\rho_{E_{j+1}}$;  otherwise, initialize the first environment register in $\rho_{E_{j+1}}$.
                \item[d.~] With probability $p$ apply the swap gate $S_{j,j+1}$ between the two environment registers.
                \item[e.~] If $j$ is odd, perform a partial trace over the first environment register; otherwise, perform a partial trace over the second environment register. 
            \end{itemize}
      \item[4.~] Draw two i.i.d.~samples $X_K$ and $Y_K$ from the ensemble 
                    \begin{equation*}
                        \mathcal{D}_{K}=\left\{ W_{K \ell}, \frac{\alpha_{K \ell}}{\alpha^{(K)}}\right\},
                    \end{equation*} 
                    where $\alpha^{(K)} = \sum_{\ell}|\alpha_{K \ell}|$.
        \item[5.~] Apply the unitary $X_K^{(c)}$ and the anti-controlled unitary $Y_K^{(a)}$, controlled on the ancilla with the target being the system and the first (second) environment register if $K$ is odd (even). 
        
      \item[6.~] If $K$ is odd, perform a partial trace over the first environment register; otherwise, perform a partial trace over the second environment register.
      
      \item [7.~]  Measure the joint ancilla and system state on the observable $(\sigma^x \otimes O)$ and record the measurement outcome as $\mu_i$.
      
      \item [8.~] Repeat Steps 1 to 7 a total of $T$ times.
    
      \item [9.~] Compute the final estimate $\mu$ as:
    \begin{equation*}
        \mu = \dfrac{\zeta^2}{T}\sum_{j=1}^{T} \mu_j,
    \end{equation*}
      where $\zeta = \prod_{j=1}^K \alpha^{(j)}$.
      \end{itemize}
  \KwOut{Estimated expectation value $\mu$}  
\end{algorithm*} 

To develop quantum algorithms for implementing the non-Markovian $K$-collision map $\mathcal{N}_K[.]$, there are a few things to consider. First, it is possible to implement the partial swap $C_{j,j+1}$ efficiently in a probabilistic manner: we apply $S_{j,j+1}$ with probability $p$ and with probability $1-p$, we do not perform any operation. Second, as in the Markovian case, the system environment collisions $U_{S_j}$ can be implemented with any Hamiltonian simulation procedure. We assume that the swap operation can be implemented perfectly so that the only source of error is the underlying Hamiltonian simulation procedure. That is, for each iteration, we implement an approximate map
\begin{equation}
\label{eq: approximate non-markovian collision map}
\widetilde{\Phi}^{\mathcal{N}}_j[\cdot]\equiv \begin{cases}
\Tr_{E_{j}}\big[\mathcal{V}_{j,j+1}\big[\widetilde{U}_{S_j}\left(\cdot\otimes \cdot \otimes \rho_{E_{j+1}}\right)\widetilde{U}_{S_j}^{\dagger}\big]\big],~j~\text{is odd,}\\\vspace{-0.5cm} \\
\Tr_{E_{j}}\big[\mathcal{V}_{j,j+1}\big[\widetilde{U}_{S_j}\left(\cdot\otimes \rho_{E_{j+1}} \otimes \cdot\right)\widetilde{U}_{S_j}^{\dagger}\big]\big],~j~\text{is even.}
\end{cases}     
\end{equation}
Analogously, for the overall map, we define
\begin{equation}
\label{eq: approximate non-markovian K-collision map}
\widetilde{\mathcal{N}}_K[.]\equiv\Tr_{E_K}\left[\widetilde{U}_{S_K}\left(\bigcirc_{j=1}^{K-1}\widetilde{\Phi}^{\mathcal{N}}_j[.]\right)\widetilde{U}^{\dag}_{S_K}\right].    
\end{equation}
Next, we show that the required precision for a Hamiltonian simulation procedure to estimate $\Tr[O~\mathcal{N}_K[\rho_S]]$ with $\eps$-additive accuracy remains the same as in the Markovian case. For this, we prove the following Lemma:
\begin{restatable}[Bounds on the non-Markovian approximate collision Map]{lemma}{}\label{lemma: nm-approximate K-collision map distance bound}
    Let $O$ be an observable and  $\widetilde{\mathcal{N}}_K$ be the approximate non-Markovian $K$-collision map in Eq.~\eqref{eq:approx-K-collision-map}, 
    where
    \begin{equation}
        \max_{1\leq j \leq K}\norm{U_j - \widetilde{U}_j} \leq \frac{\eps}{3K\norm{O}}.
    \end{equation}
    Then, the expectation value of $O$ with respect to the state transformed under the approximate map is $\eps$-close to the expectation value under the exact map. That is, for any $\rho$,
    \begin{equation}
        \left|\Tr\left[O~\mathcal{N}_K[\rho]\right] - \Tr\left[O~ \widetilde{\mathcal{N}}_K[\rho]\right]\right| \leq \eps.
    \end{equation}
\end{restatable}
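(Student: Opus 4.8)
\emph{Proof proposal.} The plan is to follow the same strategy as the Markovian Lemma~\ref{lemma: approximate K-collision map distance bound}, exploiting the fact that every operation in the non-Markovian collision map other than the system--environment unitaries is CPTP and therefore contractive under the trace norm. The central observation is that the intra-environment collision $\mathcal{V}_{j,j+1}=I_S\otimes C_{j,j+1}$ is implemented exactly (the partial swap is assumed perfect), so it contributes \emph{no} error and merely propagates any pre-existing error without amplification. Thus the only source of error is the approximation $U_{S_j}\mapsto\widetilde{U}_{S_j}$ in each of the $K$ system--environment collisions, exactly as in the memoryless case.

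First I would bound the error of a single iteration map. Writing $\xi_j=\norm{U_j-\widetilde{U}_j}$ and $\xi_{\max}=\max_j\xi_j$, note that $U_{S_j}=\bar{U}_j\otimes I_E$ differs from $\widetilde{U}_{S_j}$ only through the system factor, and tensoring with the identity preserves the spectral norm, so $\norm{U_{S_j}-\widetilde{U}_{S_j}}=\xi_j$. Applying Theorem~\ref{thm:distance-expectation} to the joint system--environment state gives $\norm{U_{S_j}\sigma U_{S_j}^{\dagger}-\widetilde{U}_{S_j}\sigma\widetilde{U}_{S_j}^{\dagger}}_1\leq 3\xi_j$ for any input $\sigma$. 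Since both the swap channel $\mathcal{V}_{j,j+1}$ and the subsequent partial trace are CPTP, they are contractive in the trace norm, so $\norm{\Phi^{\mathcal{N}}_j[\rho]-\widetilde{\Phi}^{\mathcal{N}}_j[\rho]}_1\leq 3\xi_j\leq 3\xi_{\max}$, uniformly in $\rho$; the odd/even bookkeeping of which register is traced out is immaterial to this bound.

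Next I would compose the $K-1$ iteration maps using Lemma~\ref{thm: Bounds on the distance between the composition of maps} to obtain $\norm{\bigcirc_{j=1}^{K-1}\Phi^{\mathcal{N}}_j[\rho]-\bigcirc_{j=1}^{K-1}\widetilde{\Phi}^{\mathcal{N}}_j[\rho]}_1\leq 3(K-1)\xi_{\max}$, and then handle the final collision separately. Writing the outputs of the two compositions as $\Psi$ and $\widetilde{\Psi}$, a triangle-inequality split---add and subtract $\Tr_{E_K}[U_{S_K}\widetilde{\Psi}\,U_{S_K}^{\dagger}]$---combined with unitary invariance of the trace norm and Theorem~\ref{thm:distance-expectation} applied to the state $\widetilde{\Psi}$ yields an additional $3\xi_{\max}$, for a total of $\norm{\mathcal{N}_K[\rho]-\widetilde{\mathcal{N}}_K[\rho]}_1\leq 3K\xi_{\max}$. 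A final application of Theorem~\ref{thm:distance-expectation} converts this into the expectation-value bound $3K\norm{O}\xi_{\max}$, and substituting the hypothesis $\xi_{\max}\leq\eps/(3K\norm{O})$ closes the argument. The only genuinely new point compared to the Markovian proof---and the step I would be most careful about---is verifying that the probabilistically implemented partial swap is a bona fide CPTP map (a convex combination of the identity channel and conjugation by the unitary $S_{j,j+1}$), so that its contractivity is legitimate and the asymmetric, two-register structure of $\mathcal{N}_K$ never inflates the error beyond the $K$ approximated unitaries.
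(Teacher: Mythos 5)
Your proposal is correct and follows essentially the same route as the paper's proof: per-collision error $3\xi_j$ on the joint state via Theorem~\ref{thm:distance-expectation} (after noting $\norm{U_{S_j}-\widetilde{U}_{S_j}}=\xi_j$), contractivity of the CPTP partial swap and partial trace, Lemma~\ref{thm: Bounds on the distance between the composition of maps} for the first $K-1$ iterations, and a separate triangle-inequality treatment of the final collision. The only cosmetic difference is that the paper handles that last step with Theorem~\ref{thm:dist-unitary-cptp} (contributing $2\xi_{\max}$ rather than your $3\xi_{\max}$), but both arguments land at $\leq 3K\xi_{\max}$ and hence $\eps$ after substituting $\xi_{\max}=\eps/(3K\norm{O})$.
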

\begin{proof}
We consider the error between the operations performed on the state $\rho$ under $U_{S_j}$ and $\widetilde{U}_{S_j}$. More precisely, let $\norm{U_j-\widetilde{U}_j}\leq \xi_j$. Then, immediately, we have
$\|U_{S_j}-\widetilde{U}_{S_j}\|\leq \xi_j$. As before, we denote the maximum error in any of the Hamiltonian simulation procedures in the definition of the approximate $K$-collision map by $\xi_{\max}$, i.e., $\xi_{\max}=\max_{1\leq j \leq K} \xi_{j}$. Then, using Theorem~\ref{thm:distance-expectation} for any quantum state $\rho$, we have
    \begin{align}
        &\big\| U_{S_j} \rho  U_{S_j}^{\dagger} - \widetilde{U}_{S_j} \rho \widetilde{U}_{S_j}^{\dagger}\big\|_1 
        \leq 3 \xi_j \leq 3\xi_{\max}.
    \end{align}
Now, since partial trace and $\mathcal{V}_{j,j+1}$ are CPTP maps, we can make use of the distance between the composition of CPTP maps (Theorem \ref{thm: Bounds on the distance between the composition of maps}) to get
\begin{widetext}
    \begin{align}
        \left\| \bigcirc_{j=1}^{K-1} \Phi^{\mathcal{N}}_j[\rho] - \bigcirc_{j=1}^{K-1} \widetilde{\Phi}^{\mathcal{N}}_j[\rho] \right\|_1 \leq 3 (K-1) \xi_{\max}.
    \end{align}
For the final step, we need to bound
    \begin{align}
        \norm{\left[\mathcal{N}_K[\rho]\right] - \widetilde{\mathcal{N}}_K[\rho]}_1 &= \norm{ \Tr_{E_K}\left[ U_{S_K}\left(\bigcirc_{j=1}^{K-1}\Phi^{\mathcal{N}}_j[\rho]\right)U^{\dag}_{S_K}\right]-\Tr_{E_K}\left[\widetilde{U}_{S_K}\left(\bigcirc_{j=1}^{K-1}\widetilde{\Phi}^{\mathcal{N}}_j[\rho]\right)\widetilde{U}^{\dag}_{S_K}\right]}_1\nonumber\\
        &\leq \norm{U_{S_K}\left(\bigcirc_{j=1}^{K-1}\Phi^{\mathcal{N}}_j[\rho]\right)U^{\dag}_{S_K}-\widetilde{U}_{S_K}\left(\bigcirc_{j=1}^{K-1}\widetilde{\Phi}^{\mathcal{N}}_j[\rho]\right)\widetilde{U}^{\dag}_{S_K}}_1\nonumber\\
        &\leq \norm{\bigcirc_{j=1}^{K-1}\Phi^{\mathcal{N}}_j[\rho]-\bigcirc_{j=1}^{K-1}\widetilde{\Phi}^{\mathcal{N}}_j[\rho]}_1+ 2\norm{U_{S_K}-\widetilde{U}_{S_K}}
        \leq 3(K-1)\xi_{\max}+2\xi_{\max} < 3 K\xi_{\max},
    \end{align}
\end{widetext}
where we have used Theorem~\ref{thm:dist-unitary-cptp} to arrive at the third line from the second. Finally, as in the proof of Lemma~\ref{lemma: approximate K-collision map distance bound}, we can use Theorem~\ref{thm:distance-expectation} and set $\xi_{\max} = \eps/(3 K \|O\|)$ and get:
    \begin{equation}
        \left| \Tr\left[O~ \mathcal{N}_K[\rho] \right] - \Tr\left[O~ \widetilde{\mathcal{N}}_K[\rho] \right] \right| \leq \eps.
    \end{equation}
\end{proof}


This implies that for estimating the desired expectation value, we need to implement a Hamiltonian simulation procedure with precision $\eps'=\bigO(\eps/K\|O\|)$, which is the same as in the Markovian case. Thus, we can develop a randomized quantum algorithm that can incorporate any near-term Hamiltonian simulation procedure. In Algorithm~\ref{algo: nm-dynamics}, we use Hamiltonian simulation by the SA-LCU method to estimate the desired expectation value. We show the circuit corresponding to each run of the algorithm in Fig.~\ref{fig: single ancilla non markovian collision model circuit}. We now need two sub-environment registers, along with the system register and a single qubit ancilla register. For the first $K-1$ iterations, a composition of the map $\widetilde{\Phi}^{\mathcal{N}}[.]$ is implemented. Note that at odd (even) iterations, the second (first) register stores the state of the subsequent sub-environment. Following the interaction between the two sub-environment registers, the first (second) environment register is traced out. 

\begin{figure*}[!t]
    \scalebox{0.8}{
    \includegraphics[width=1\textwidth]{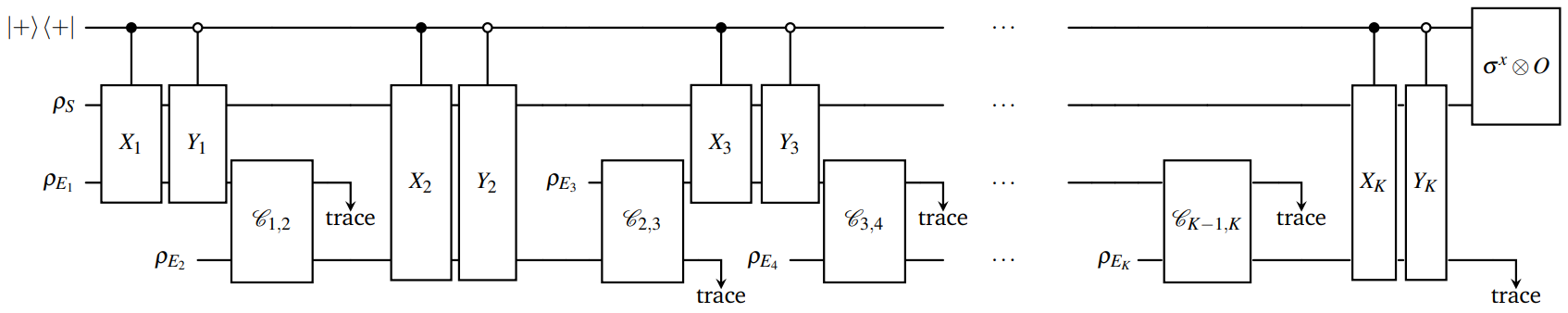}
    }
    \centering
    \caption{
Quantum circuit corresponding to each run of Algorithm \ref{algo: nm-dynamics}, simulating a non-Markovian $K$-collision map, using Hamiltonian simulation by SA-LCU. The algorithm applies controlled and anti-controlled sampled unitaries ($X_j, Y_j$) for the interaction between the system and each sub-environment, followed by an interaction between consecutive sub-environments using the channel ($\mathcal{C}_{j, j+1}$). This sequence is repeated for $K$ collisions. At the end of the process, the ancilla qubit and the system are measured.
}\label{fig: single ancilla non markovian collision model circuit}
\end{figure*}

Following an approach similar to Theorem~\ref{thm: Algorithm 1 proof}, we prove the correctness of Algorithm~\ref{algo: nm-dynamics} in the Appendix (Theorem~\ref{thm: Algorithm 3 proof}, Appendix~\ref{sec:app-correctness-algo-nm}). In step 2d.~of Algorithm~\ref{algo: nm-dynamics}, we implement a map that in expectation value is simply $\mathcal{C}_{j,j+1}$. Interestingly, if we assume that the cost of implementing a swap gate between two consecutive sub-environments is constant, the circuit depth of the procedure using the different simulation techniques is the same as those listed in Table~\ref{table: collision model complexity}. Thus, this is a unified framework to simulate non-Markovian collisions using near-term Hamiltonian simulation procedures. Again, we do not require many ancilla qubits or need access to block encodings.

It would be interesting to explore whether continuous-time non-Markovian master equations can be approximated by this collision model for finite $K$. This would require obtaining the number of collisions $K$ for which it is $\eps$-close (in, say, induced $1$-norm) to the generator of the underlying non-Markovian master equation. While this problem has been investigated in the Markovian setting (closeness of the $K$-collision map and exponential of the Lindbladian~\cite{cattaneo2021collision, pocrnic2023quantum}), very little is known in the non-Markovian case. Ciccarello et al.~\cite{Ciccarello2013PRA}, showed that the collision model we consider gives rise to a non-Markovian master equation for $K\to\infty$ and $p=e^{-\lambda t}$, where $\lambda$ is a continuous parameter, determining the \textit{memory rate}. However, we leave the question of the precise scaling of the error in this approximation open with (finite) $K$.

\section{Discussion and Outlook}\label{sec:Discussion and Outlook}
In this paper, we developed randomized quantum algorithms for simulating open quantum system dynamics using quantum collision models designed for early fault-tolerant quantum computers. Our approach enables estimating expectation values of observables for a system that undergoes an arbitrary number of collisions, encompassing both memoryless (Markovian) and memory-retaining (non-Markovian) interactions. Thus, it provides a unified framework for simulating a broad range of open-system dynamics. 

A key advantage of quantum collision models is that they naturally decompose the environment into discrete sub-environments, making system-environment interactions efficiently implementable on near-term quantum devices. Indeed, our methods for simulating both Markovian and non-Markovian dynamics avoid the need for block encodings, significantly reducing ancilla requirements.

We rigorously analyzed the cost of implementing $K$ Markovian collisions, where the system sequentially interacts with different sub-environments before they are traced out, by leveraging various near-term Hamiltonian simulation techniques such as Trotterization, qDRIFT, and Hamiltonian simulation by SA-LCU. This provided a unified framework to simulate Lindblad dynamics using quantum collision models. We undertook a detailed comparison of the end-to-end complexities of these techniques for simulating Lindblad dynamics, which revealed that Hamiltonian simulation by the SA-LCU method outperforms the first and second-order Trotter methods and qDRIFT. However, overall, $2k$-order Trotter methods offer the most competitive circuit depth, scaling as $\widetilde{\bigO}(t^2 / \varepsilon)$, for large $k$. Notably, this matches known lower bounds for implementing $e^{\mathcal{L}t}$ using quantum collision models~\cite{cleve2017efficient, cattaneo2021collision}.

While there have been several direct approaches to simulating Linbladian dynamics~\cite{cleve2017efficient, li2023simulating, ding2024simulating}, they are resource-demanding and hence are not implementable on early fault-tolerant machines. Quantum collision models provide a simple, easy-to-implement route. It would be interesting to explore whether it is possible to exploit recently developed extrapolation techniques (used to improve the circuit depth of Trotterization~\cite{watson2024exponentiallyreducedcircuitdepths}, qDRIFT~\cite{watson2024randomly}, and even more general quantum algorithms~\cite{chakraborty2025quantum}), in conjunction with our randomized quantum algorithm to further improve the overall circuit depth. While the $t^2/\eps$-dependence on the circuit depth is unavoidable for quantum collision models, extrapolation techniques might lead to optimal complexity even with the first-order Trotter method and qDRIFT. Thus, better direct approaches must be investigated to simulate Lindblad dynamics on early fault-tolerant quantum computers with even shorter circuit depths. In this regard, during the preparation of this manuscript, we became aware of Ref.~\cite{kato2024exponentially}. There, the authors assume that the Lindblad jump operators can be expressed as a linear combination of strings of Pauli operators and then write down $e^{\mathcal{L}t}$ as an LCU (much like the LCU decomposition of $e^{-iHt}$ in SA-LCU~\cite{wan2022randomized, wang2023qubit, chakraborty2023implementing}). The authors estimate the expectation value of $O$ with respect to $e^{\mathcal{L}t}$ with $\eps$-additive accuracy in $\bigO(\|O\|^2/\eps^2)$ classical repetitions and circuit depth $\bigO(t^2\log(t/\eps))$. The procedure requires $n+4$ qubits overall and has an exponentially better circuit depth than any procedure making use of quantum collision models.

Another advantage of quantum collision models is that their applicability goes beyond the simulation of Markovian dynamics. Indeed, we extended our framework to non-Markovian interactions. In particular, we considered the setting of~\cite{Ciccarello2013PRA}, wherein a collision between the system and a sub-environment is followed by an interaction between consecutive sub-environments, thereby retaining some memory of prior interactions. We developed a randomized method to simulate arbitrary such collisions using near-term Hamiltonian simulation procedures. The circuit depth for implementing $K$ such collisions scales similarly to the Markovian case - once again, no block encoding is required, and the number of ancilla qubits needed is minimal. Thus, this provides a general method to implement such dynamics resource-efficiently on quantum devices. Much like the Markovian case, our approach stands in contrast to the direct approaches (such as Ref.~\cite{li2023succinct}), which are resource-demanding and hence unsuitable for near-term implementation.

Finally, an interesting advance would be to simulate Lindblad master equations with time-dependent decay parameters via quantum collision models. This would require establishing convergence between such time-dependent Lindblad maps and quantum collision models generated by time-dependent Hamiltonians. One could then use quantum algorithms for simulating time-dependent Hamiltonians \cite{berry2020timedependent, chen2021quantumalgorithm,watkins2024time-dependent,fang2025time} to construct end-to-end simulation protocols for a broader class of open-system dynamics, including both Markovian and non-Markovian cases.

Overall, our results demonstrate that quantum collision models can be used to simulate a wide range of open systems dynamics, both under Markovian and non-Markovian environments, using early fault-tolerant quantum computers. 

\begin{acknowledgments}
We acknowledge funding from the Ministry of Electronics and Information Technology (MeitY), Government of India, under Grant No. 4(3)/2024-ITEA. SC also acknowledges support from Fujitsu Ltd, Japan and IIIT Hyderabad.
\end{acknowledgments}

\bibliography{apssamp}{} 

\newpage
\widetext
\setcounter{equation}{0}
\setcounter{figure}{0}
\setcounter{table}{0}
\setcounter{algocf}{0}
\setcounter{section}{0}
\setcounter{theorem}{0}
\renewcommand{\theequation}{A\arabic{equation}}
\renewcommand{\thetable}{A\arabic{table}}
\renewcommand{\thefigure}{A\arabic{figure}}
\renewcommand{\thetheorem}{A\arabic{theorem}}
\renewcommand{\thelemma}{A\arabic{lemma}}
\renewcommand{\thealgocf}{A\arabic{algocf}}
\renewcommand{\thecorollary}{A\arabic{corollary}}
\renewcommand{\thesection}{A - \Roman{section}}
\newpage
\appendix
\begin{center}
\textbf{\huge Appendix}
\end{center}

In the Appendix, we provide a comprehensive list of mathematical symbols, an LCU decomposition of unitaries, prove some results concerning distances between quantum states that we make use of in the main article, and formally demonstrate the correctness of Algorithm~\ref{algo: nm-dynamics}.

\section{List of mathematical symbols and variables}
\label{sec:symbols}

\setlength{\extrarowheight}{2pt}
\begin{table}[htbp]
\scriptsize
\caption{Consolidated list of mathematical symbols}
\label{table:notation-table}
{\centering
\renewcommand\baselinestretch{1.20}\selectfont
\begin{tabular*}{\textwidth}{c@{\extracolsep{\fill}} llcll}
\hline
\textbf{Symbol} & \textbf{Definition} & \textbf{Reference} & \textbf{Symbol} & \textbf{Definition} & \textbf{Reference} \\
\hline\hline
$n$ & Number of qubits in the system & ~~~--- & $t$ & Evolution time & ~~~--- \\
$\varepsilon$ & Precision/additive accuracy & ~~~--- & $O$ & some arbitrary Observable & ~~~--- \\
$g(n) = \bigO(f(n))$ & Big O notation & ~~~--- & $\widetilde{\bigO}(f(n))$ & Big O notation (hiding polylog factors) & ~~~--- \\
$\Tr[A]$ & Trace of operator $A$ & ~~~--- & $\mathbb{E}[A]$ & Expectation value of operator $A$ & ~~~--- \\
$\Pr[X]$ & Probability of event $X$ & ~~~--- & $\|X\|_p$ & Schatten $p$-norm of operator $X$ & ~~~--- \\
$\sigma_j(X)$ & $j$-th singular value of $X$ & ~~~--- & $\|X\|$ & Spectral norm of operator $X$ & ~~~--- \\
$\|\mathcal{M}\|_{1 \to 1}$ & Induced 1-norm of superoperator & ~~~--- & $\sigma^x, \sigma^y, \sigma^z$ & Pauli matrices & ~~~--- \\
$\sigma^{\pm}$ & Raising/lowering operators & ~~~--- & $I$ & Identity operator & ~~~--- \\
$|0\rangle, |1\rangle$ & Computational basis states & ~~~--- & $|+\rangle$ & Plus state: $(|0\rangle + |1\rangle)/\sqrt{2}$ & ~~~--- \\
$\mathcal{H}_S$ & System Hilbert space & Sec.~\ref{subsec:K-collision-approx} & $\mathcal{H}_E$ & Environment Hilbert space & Sec.~\ref{subsec:K-collision-approx} \\
$\mathcal{H}_{E_j}$ & $j$-th sub-environment Hilbert space & Sec.~\ref{subsec:K-collision-approx} & $H_S$ & System Hamiltonian & Eq.~\eqref{eq:total-Hamiltonian} \\
$H_{E_j}$ & Hamiltonian of $j$-th sub-environment & Eq.~\eqref{eq:total-Hamiltonian} & $H_{I_j}$ & Interaction Hamiltonian for $j$-th collision & Eq.~\eqref{eq:total-Hamiltonian} \\
$H$ & Total Hamiltonian & Eq.~\eqref{eq:total-Hamiltonian} & $H_j$ & Total Hamiltonian for $j$-th collision & Eq.~\eqref{eq:j-th collision Hamiltonian} \\
$\overline{H}_j$ & Normalized Hamiltonian for $j$-th collision & Eq.~\eqref{eq: Collision unitary} & $m$ & Number of sub-environments & Sec.~\ref{subsec:K-collision-approx} \\
$\rho_S$ & State of the system & Sec.~\ref{subsec:K-collision-approx} & $\rho_{E_j}$ & State of the $j$-th sub-environment & Sec.~\ref{subsec:K-collision-approx} \\
$P_{i,j}$ & Pauli operators in Hamiltonian decomp. & Eq.~\eqref{eq:j-th collision Hamiltonian} & $h_{i,j}$ & Coefficients in Pauli decomposition & Eq.~\eqref{eq:j-th collision Hamiltonian} \\
$L_j$ & Number of terms in the total Hamiltonian $H_j$ & Eq.~\eqref{eq:j-th collision Hamiltonian} & $L$ & Maximum number of terms in $H_j$, across collisions & Eq.~\eqref{eq:j-th collision Hamiltonian} \\
$\beta_j$ & Total weight of the Pauli coefficients: $\sum_{i=1}^{L_j} h_{i,j}$ & Eq.~\eqref{eq:j-th collision Hamiltonian} & $\beta$ & Maximum weight: $\max_j \beta_j$ & Theorem~\ref{thm: Algorithm 1 proof} \\
$\Delta t$ & Time duration of each collision & Sec.~\ref{subsec:K-collision-approx} & $K$ & Total number of collisions & Sec.~\ref{subsec:K-collision-approx} \\
$U_j$ & Time evolution unitary for $j$-th collision & Eq.~\eqref{eq: Collision unitary} & $\widetilde{U}_j$ & Approximate time evolution operator & Eq.~\eqref{eq: approximate collision map} \\
$\Phi_j$ & $j$-th collision map & Definition~\ref{def: collision map} & $\widetilde{\Phi}_j$ & Approximate $j$-th collision map & Eq.~\eqref{eq: approximate collision map} \\
$\mathcal{M}_K$ & Markovian $K$-collision map & Definition~\ref{def: K-collision map} & $\widetilde{\mathcal{M}}_K$ & Approx. Markovian $K$-collision map & Eq.~\eqref{eq:approx-K-collision-map} \\
$\mu$ & Algorithm output estimate & Eq.~\eqref{eq: error between experiment and the collision map} & $\widetilde{U}$ & LCU approximation of time evolution & Eq.~\eqref{eq: index set LCU decomposition of time evolution operator} \\
$\alpha_i$ & LCU coefficients & Eq.~\eqref{eq: index set LCU decomposition of time evolution operator} & $W_i$ & LCU unitaries & Eq.~\eqref{eq: index set LCU decomposition of time evolution operator} \\
$\alpha$ & Total LCU weight: $\sum_i |\alpha_i|$ & Lemma~\ref{lemma: lcu of time evolution} & $r$ & LCU parameter to control the error & Lemma~\ref{lemma: lcu of time evolution} \\
$q$ & Taylor series truncation parameter & Lemma~\ref{lemma: lcu of time evolution} & $X_j, Y_j$ & i.i.d. sampled unitaries in SA-LCU algorithm & Eq.~\eqref{eq: probablity emsemble} \\
$X_j^{(c)}$ & Controlled version of $X_j$ & Eq.~\eqref{eq: control and anti-control notation} & $Y_j^{(a)}$ & Anti-controlled version of $Y_j$ & Eq.~\eqref{eq: control and anti-control notation} \\
$\mathcal{D}_j$ & Sampling ensemble & Eq.~\eqref{eq: probablity emsemble} & $\alpha^{(j)}$ & LCU weight for $j$-th collision & Algorithm~\ref{algo: collision model} \\
$\zeta$ & Product of LCU weights: $\prod_{j=1}^K \alpha^{(j)}$ & Algorithm~\ref{algo: collision model} & $T$ & Number of classical repetitions & Eq.~\eqref{eq: number of classical repetitions} \\
$\tau_d$ & Circuit depth per coherent run & Eq.~\eqref{eq: circuit depth per run} & $\tau_{\rho_E}$ & Max. circuit depth to prepare the environment in $\rho_E$ & Eq.~\eqref{eq: circuit depth per run} \\
$\tau_{\rho_{E_j}}$ & Circuit depth for preparing $\rho_{E_j}$ & Eq.~\eqref{eq: circuit depth per run} & $\delta$ & Failure probability & Theorem~\ref{thm: Algorithm 1 proof} \\
$\mathcal{L}$ & Lindbladian superoperator & Eq.~\eqref{eq: lindbladian map} & $A_j$ & Lindblad jump operators & Eq.~\eqref{eq: lindbladian map} \\
$\omega$ & Inverse temperature & Eq.~\eqref{eq:sub-environment-thermal} & $\lambda$ & System-environment coupling & Sec.~\ref{sec:Lindbladian Dynamics Simulation} \\
$\nu$ & Repetitions of collision sequence & Def.~\ref{def: m, nu-collision map} & $\mathcal{M}_{m,\nu}$ & $(m,\nu)$-collision map & Def.~\ref{def: m, nu-collision map} \\
$\Gamma$ & Complexity parameter & Eq.~\eqref{eq:Gamma} & $\beta_S$ & System Hamiltonian weight & Sec.~\ref{sec:Lindbladian Dynamics Simulation} \\
$\beta_{I_j}$ & Interaction Hamiltonian weight & Sec.~\ref{sec:Lindbladian Dynamics Simulation} & $\beta_{E_j}$ & Environment Hamiltonian weight & Sec.~\ref{sec:Lindbladian Dynamics Simulation} \\
$\beta_{\max}$ & Maximum weight parameter & Eq.~\eqref{eq:beta-max} & $L_S$ & Terms in system Hamiltonian & Eq.~\eqref{eq:total-terms-L} \\
$L_{I_j}$ & Terms in interaction Hamiltonian & Eq.~\eqref{eq:total-terms-L} & $M_z$ & Avg. transverse magnetization & Eq.~\eqref{eq: Magnetization operator} \\
$J$ & Coupling strength in Ising model & Eq.~\eqref{eq: ising model} & $h$ & Transverse magnetic field strength & Eq.~\eqref{eq: ising model} \\
$\gamma$ & Damping strength & Eq.~\eqref{eq:lindbladian-amplitude-interaction-ham} & $\mathcal{C}_{i,j}$ & CPTP channel between sub-environments & Eq.~\eqref{eq:non-markovian-partial-swap} \\
$S_{j,j+1}$ & Swap operation between sub-environments & Eq.~\eqref{eq:non-markovian-partial-swap} & $p$ & Swap probability parameter & Eq.~\eqref{eq:non-markovian-partial-swap} \\
$U_{S_j}$ & System-environment collision unitary & Eq.~\eqref{eq:nm-system-env-collision} & $\mathcal{V}_{j,j+1}$ & Environment-environment interaction & Eq.~\eqref{eq:nm-env-env-collision} \\
$\Phi_j^{\mathcal{N}}$ & Non-Markovian collision map & Eq.~\eqref{eq:nm-collision-map} & $\mathcal{N}_K$ & Non-Markovian $K$-collision map & Eq.~\eqref{eq:nm-k-collision} \\
$\widetilde{\Phi}_j^{\mathcal{N}}$ & Approx. non-Markovian collision map & Eq.~\eqref{eq: approximate non-markovian collision map} & $\widetilde{\mathcal{N}}_K$ & Approx. non-Markovian $K$-collision map & Eq.~\eqref{eq: approximate non-markovian K-collision map} \\
\hline
\end{tabular*}}
\end{table}
\newpage

\section{LCU decomposition of Unitaries}
\label{appendix: LCU decomposition of Unitaries}
We provide an LCU decomposition of the time evolution operator, which we incorporate into the simulation of the quantum collision models. This has been proven in Refs.~\cite{wan2022randomized, wang2023qubit, chakraborty2023implementing}, and we restate the result here for completeness. 

Consider any Hamiltonian $H$ which is a convex combination of Pauli operators, i.e.\, $H=\sum_{l=1}^{L}p_l P_l$, where $P_l$ is a sequence of Pauli operators, and $\sum_l p_l = 1$. We decompose $e^{-i\tau H}$ as an approximate linear combination of unitaries, for which we use ideas from the Truncated Taylor series method by Berry et al.\cite{berry2015simulating}, as well as the LCU decomposition of \cite{wan2022randomized}. We can write
\begin{align}
e^{-i\tau H}=\left(e^{-iH\tau/r}\right)^r,
\end{align}
where $r$ (to be selected later) is a parameter such that $r>t$. If each segment $S_r=e^{-iH\tau/r}$ has an (approximate) LCU decomposition $\sum_{m} c_m U_m$, such that $\norm{S_r-\sum_{m} c_m U_m}\leq \eps/r$ then,
\begin{align}
S=(S_r)^r &= \sum_{m_1 m_2\cdots m_r} c_{m_1} c_{m_2}\cdots c_{m_r} U_{m_1}U_{m_2}\cdots U_{m_r}=\sum_{j}\alpha_j W_j,
\end{align}
is $\eps$-close to $e^{-iHt}$, i.e.\ $\norm{e^{-iHt}-S}\leq \eps$. First note that by truncating the $S_r=e^{-i H\tau/r}$ to $q$ terms, we obtain
\begin{align}
\widetilde{S}_r=\sum_{k=0}^{q} \dfrac{(-i\tau H/r)^k}{k!}.
\end{align}
Then by choosing some 
\begin{align}
q=\bigO\left(\dfrac{\log(r/\eps)}{\log\log(r/\eps)}\right),
\end{align}
we ensure that $\norm{S_r - \widetilde{S}_r}\leq \eps/r$.

Now, we obtain the LCU decomposition of $\widetilde{S}_r$, similar in spirit to Ref.~\cite{wan2022randomized}. This gives us, 

\begin{align}
&\widetilde{S}_r=\sum_{k=0}^{q} \dfrac{(-i\tau H/r)^k}{k!}\\
&=\sum_{k=0,~k\in \mathrm{even}}^{q} \dfrac{1}{k!}(-i\tau H/r)^k \left(I- \dfrac{i\tau H/r}{k+1}\right)\\
&=\sum_{k=0,~k\in \mathrm{even}}^{q} \dfrac{1}{k!}\left(-i\tau /r \sum_{\ell=1}^{L} p_{\ell} P_{\ell}\right)^k \times \left(I- \dfrac{i\tau /r}{k+1}\left(\sum_{m=1}^{L} p_{m} P_{m}\right)\right)\\
&= \sum_{k=0,~k\in \mathrm{even}}^{q} \dfrac{(-i\tau /r)^k}{k!} \times \sum_{\ell_1,\ell_2,\cdots \ell_k = 1}^{L} p_{\ell_1}p_{\ell_2}\cdots p_{\ell_k} P_{\ell_1}P_{\ell_2}\cdots P_{\ell_k} \nonumber \times \sum_{m=1}^{L} p_m \left(I- \dfrac{i\tau P_m /r}{k+1}\right)\\
&=\sum_{k=0,~k\in \mathrm{even}}^{q} \dfrac{(-i\tau /r)^k}{k!}\sqrt{1+\left(\dfrac{\tau /r}{k+1}\right)^2}\times \sum_{\ell_1,\ell_2,\cdots \ell_k, m = 1}^{L} p_{\ell_1}p_{\ell_2}\cdots p_{\ell_k} p_m P_{\ell_1}P_{\ell_2}\cdots P_{\ell_k} e^{-i\theta_{m} P_m},
\end{align}

where $e^{-i\theta_{m}P_m}$ is a Pauli rotation operator, defined as follows:

\begin{align}
e^{i\theta_{m}P_m}=\dfrac{1}{\sqrt{1+\left(\dfrac{\tau /r}{k+1}\right)^2}}\left(I- \dfrac{i\tau P_m /r}{k+1}\right),
\end{align}
such that
\begin{align}
\theta_m = \arccos\left(\left[1+\left(\dfrac{\tau /r}{k+1}\right)^2\right]^{-1/2}\right).
\end{align}
Thus, $\widetilde{S}_r=\sum_{j\in M} c_j U_j$, where the index set $M$ can be defined as 
\begin{align}
M =\{(k, &\ell_1, \ell_2,\cdots \ell_k, m):\nonumber
\\&(i)\quad 0\leq k \leq K; \nonumber
\\&(ii)\quad \ell_1,\ell_2, \cdots \ell_k, m \in \{1,2,\cdots, L\}\}.
\end{align}

Also,
\begin{align}
c_j = \dfrac{(\tau /r)^k}{k!} \sqrt{1+\left(\dfrac{\tau /r}{k+1}\right)^2}p_{\ell_1}p_{\ell_2}\cdots p_{\ell_k} p_m,
\end{align}
while
\begin{align}
U_j=(-i)^k P_{\ell_1}P_{\ell_2}\cdots P_{\ell_k} e^{i\theta_{m} P_m}.
\end{align}

The sum of the coefficients
\begin{align}
&\sum_{j\in M} |c_j| = \sum_{k=0,~k\in \mathrm{even}}^{q}  \dfrac{(\tau /r)^k}{k!}\sqrt{1+\left(\dfrac{\tau /r}{k+1}\right)^2} \times \sum_{\ell_1,\ell_2,\cdots \ell_k, m = 1}^{L} p_{\ell_1}p_{\ell_2}\cdots p_{\ell_k} p_m\\
&=\sum_{k=0,~k\in \mathrm{even}}^{q}  \dfrac{(\tau /r)^k}{k!}\sqrt{1+\left(\dfrac{\tau /r}{k+1}\right)^2}\\
&\leq \sum_{k=0,~k\in \mathrm{even}}^{\infty}  \dfrac{(\tau /r)^k}{k!} \sqrt{1+\left(\dfrac{\tau /r}{k+1}\right)^2} \nonumber \\
&= \sum_{k=0}^{\infty}  \dfrac{(\tau /r)^{2k}}{(2k)!}\sqrt{1+\left(\dfrac{\tau /r}{2k+1}\right)^2}\\
&\leq \sum_{k=0}^{\infty}  \dfrac{(\tau /r)^{2k}}{k!} = e^{\tau ^2/r^2}.
\end{align}
Finally, in order to write down $S$ as an LCU, we write $S=\widetilde{S}_r^r$. That is,
\begin{align}
S&=\left(\sum_{j\in M} c_j U_j\right)^r=\sum_{j_1, j_2,\cdots j_r \in M} c_1 c_2 \cdots c_r U_{j_1} U_{j_2}\cdots U_{j_r} =\sum_{m}\alpha_m W_m,
\end{align}
where $|\alpha| = \sum_{m}|\alpha_m|=(\sum_{j\in M} |c_j|)^r\leq e^{\tau ^2/r}$.

\section{Distances between quantum states}

In this section, we prove some results concerning the distance between operators/ CPTP maps applied to quantum states. First, consider that there exist two operators $P$ and $Q$ such that $\norm{P-Q}\leq \gamma$. We demonstrate that the expectation value of $O$ with respect to $P\rho P^{\dag}$ is not far off from the expectation value of $O$ with respect to $Q\rho Q^{\dag}$, for any density matrix $\rho$. More precisely, we prove that
$$
\left|\Tr[O~P\rho P^{\dag}]-\Tr[O~Q\rho Q^{\dag}]\right|\leq 3\norm{P}\norm{O}\gamma.
$$
The result was proven in Refs.~\cite{chakraborty2023implementing, chakraborty2025quantum}, and we state this here for completeness. Let us recall the tracial version of H\"{o}lder's inequality, which is stated below for completeness:
~\\
\begin{lemma}[Tracial version of H\"{o}lder's inequality \cite{ruskai1972inequalities}]
\label{thm:holder}
Define two operators $A$ and $B$ and parameters $p,q\in [1,\infty]$ such that $1/p+1/q =1 $. Then the following holds:
$$
\Tr[A^{\dag}B]\leq \norm{A}_p \norm{B}_q.
$$
\end{lemma}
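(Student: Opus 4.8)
The plan is to reduce this operator inequality to the classical (scalar) H\"{o}lder inequality for sequences, using the singular values of $A$ and $B$ as the bridge. Since the right-hand side $\norm{A}_p\norm{B}_q$ is nonnegative, it suffices to bound $\abs{\Tr[A^{\dag}B]}$, and that is the quantity I would target throughout.

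The first and central step would be to establish von Neumann's trace inequality: writing $\sigma_1(A)\geq\sigma_2(A)\geq\cdots$ and $\sigma_1(B)\geq\sigma_2(B)\geq\cdots$ for the singular values of $A$ and $B$ in non-increasing order, one has
\begin{equation*}
\abs{\Tr[A^{\dag}B]}\leq \sum_i \sigma_i(A)\,\sigma_i(B).
\end{equation*}
To prove this I would insert the singular value decompositions $A=U_A\Sigma_A V_A^{\dag}$ and $B=U_B\Sigma_B V_B^{\dag}$ into the trace, collect the singular-vector unitaries into a single unitary factor, and then maximize the modulus of the resulting expression over that unitary freedom. The cleanest way to finish is to observe that the matrix of squared moduli of the unitary entries is doubly stochastic, so by Birkhoff's theorem the extremum is attained at a permutation; the rearrangement inequality then forces the aligned (sorted) pairing $\sum_i\sigma_i(A)\sigma_i(B)$.

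With the trace inequality in hand, the second step is immediate: I would apply the scalar H\"{o}lder inequality to the nonnegative sequences $\{\sigma_i(A)\}$ and $\{\sigma_i(B)\}$ with conjugate exponents $p,q$, giving
\begin{equation*}
\sum_i \sigma_i(A)\,\sigma_i(B)\leq\Big(\sum_i \sigma_i(A)^p\Big)^{1/p}\Big(\sum_i \sigma_i(B)^q\Big)^{1/q}=\norm{A}_p\,\norm{B}_q,
\end{equation*}
where the final equality is just the definition of the Schatten norms. Chaining the two displays yields the claim. The boundary pairs $(p,q)=(1,\infty)$ and $(\infty,1)$ I would treat as limiting cases of the scalar H\"{o}lder step, or directly from the bound $\abs{\Tr[A^{\dag}B]}\leq\sigma_{\max}(B)\sum_i\sigma_i(A)$.

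I expect the only genuine obstacle to be the rigorous justification of von Neumann's trace inequality, i.e.\ arguing that aligning the singular vectors maximizes $\abs{\Tr[A^{\dag}B]}$. The Birkhoff/doubly-stochastic argument is the most transparent route, but it needs mild care to accommodate degenerate singular values and the rectangular (or infinite-dimensional, trace-class) setting. Once that step is secured, everything downstream is routine.
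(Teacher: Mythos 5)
Your proposal is correct, but note that the paper does not actually prove this lemma: it is imported verbatim as a known result with a citation to Ruskai, so there is no in-paper argument to compare against, and supplying the standard proof as you do is the right instinct. Your route --- von Neumann's trace inequality $\left|\Tr[A^{\dag}B]\right|\leq\sum_i\sigma_i(A)\sigma_i(B)$ followed by scalar H\"{o}lder on the singular-value sequences --- is exactly the textbook proof, and your observation that bounding the modulus suffices also quietly repairs a cosmetic imprecision in the paper's statement (the trace may be complex, so the inequality is properly about $\left|\Tr[A^{\dag}B]\right|$). One small correction to your sketch of the von Neumann step: inserting the SVDs $A=U_A\Sigma_AV_A^{\dag}$, $B=U_B\Sigma_BV_B^{\dag}$ gives $\Tr[A^{\dag}B]=\Tr[\Sigma_A U\Sigma_B W]$ with \emph{two} residual unitaries $U=U_A^{\dag}U_B$ and $W=V_B^{\dag}V_A$, not one, so the extremization is over pairs. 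The standard repair is elementary: bound
\begin{equation*}
\left|\Tr[\Sigma_A U\Sigma_B W]\right|\leq\sum_{i,j}\sigma_i(A)\,\sigma_j(B)\,|U_{ij}||W_{ji}|\leq\sum_{i,j}\sigma_i(A)\,\sigma_j(B)\,\frac{|U_{ij}|^2+|W_{ji}|^2}{2},
\end{equation*}
note that $(|U_{ij}|^2)$ and $(|W_{ji}|^2)$ are each doubly stochastic, and apply your Birkhoff/rearrangement argument to each term separately; degeneracies and the rectangular case go through unchanged since sorted pairing is still optimal at permutations. With that adjustment, and the limiting cases $(p,q)=(1,\infty),(\infty,1)$ handled directly as you indicate (which is in fact the only instance the paper ever uses, cf.\ its Eq.~(A31)), your proof is complete and fully adequate for the paper's purposes.
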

Here $\norm{X}_p$ corresponds to the Schatten $p$-norm of the operator $X$. For the special case of $p=\infty$ and $q=1$, the statement of Lemma \ref{thm:holder} can be rewritten as
\begin{equation}
\label{eq:holder-special-case}
\Tr[A^{\dag}B] = \|A^{\dag}B\|_1 \leq \norm{A}_{\infty} \norm{B}_1=\norm{A} \norm{B}_1.
\end{equation}
Now we are in a position to formally state the main result.
~\\
\begin{theorem}
\label{thm:distance-expectation}
Suppose $P$ and $Q$ are operators such that $\norm{P-Q}\leq \gamma$ for some $\gamma\in [0,1]$. Furthermore, let $\rho$ be any density matrix and $O$ be some Hermitian operator with spectral norm $\norm{O}$. Then, if $\norm{P}\geq 1$, the following holds:
$$
\left|\Tr[O~P\rho P^{\dag}]-\Tr[O~Q\rho Q^{\dag}]\right| \leq 3\norm{O}\norm{P}\gamma.
$$
\end{theorem}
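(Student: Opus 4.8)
The plan is to reduce the statement to a trace-norm estimate on the operator $P\rho P^{\dag} - Q\rho Q^{\dag}$ and then pass from the trace norm to the expectation value using the special case of the tracial Hölder inequality in Eq.~\eqref{eq:holder-special-case}. First I would set $\Delta = P - Q$, so that $\norm{\Delta} \leq \gamma$ by hypothesis, and substitute $Q = P - \Delta$ to obtain the algebraic identity
$$
P\rho P^{\dag} - Q\rho Q^{\dag} = P\rho\Delta^{\dag} + \Delta\rho P^{\dag} - \Delta\rho\Delta^{\dag},
$$
where the common term $P\rho P^{\dag}$ cancels and only the three cross/quadratic terms survive.

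Next I would bound the trace norm of each of these three terms separately. The key tool is the submultiplicativity of the trace norm under left and right multiplication by bounded operators, namely $\norm{ABC}_1 \leq \norm{A}\,\norm{B}_1\,\norm{C}$, combined with the fact that $\norm{\rho}_1 = 1$ for any density matrix. This yields $\norm{P\rho\Delta^{\dag}}_1 \leq \norm{P}\gamma$, $\norm{\Delta\rho P^{\dag}}_1 \leq \norm{P}\gamma$, and $\norm{\Delta\rho\Delta^{\dag}}_1 \leq \gamma^2$, so the triangle inequality gives
$$
\norm{P\rho P^{\dag} - Q\rho Q^{\dag}}_1 \leq 2\norm{P}\gamma + \gamma^2.
$$

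I would then invoke the two hypotheses $\gamma \in [0,1]$ and $\norm{P} \geq 1$ to absorb the quadratic term: since $\gamma^2 \leq \gamma \leq \norm{P}\gamma$, the right-hand side is at most $3\norm{P}\gamma$. Finally, applying Eq.~\eqref{eq:holder-special-case} with the Hermitian operator $O$ (so that $\Tr[O\,X]$ is bounded by $\norm{O}\norm{X}_1$), I would conclude
$$
\left|\Tr[O~P\rho P^{\dag}] - \Tr[O~Q\rho Q^{\dag}]\right| \leq \norm{O}\,\norm{P\rho P^{\dag} - Q\rho Q^{\dag}}_1 \leq 3\norm{O}\norm{P}\gamma,
$$
which is the desired bound.

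The proof is essentially routine, so there is no serious obstacle; the only point requiring care is the handling of the quadratic-in-$\gamma$ term $\norm{\Delta\rho\Delta^{\dag}}_1 \leq \gamma^2$, which is precisely where the assumptions $\gamma \leq 1$ and $\norm{P} \geq 1$ are used. Without the normalization $\norm{P} \geq 1$, one could not fold the $\gamma^2$ contribution into a clean linear bound, so I expect that step to be the one most worth stating explicitly rather than the submultiplicativity estimates, which are standard.
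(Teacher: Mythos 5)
Your proof is correct and takes essentially the same route as the paper's: both reduce the claim via the tracial H\"{o}lder inequality to bounding $\norm{P\rho P^{\dag}-Q\rho Q^{\dag}}_1$, both reach the identical intermediate estimate $2\norm{P}\gamma+\gamma^2$, and both absorb the quadratic term using $\gamma^2\leq\gamma\leq\norm{P}\gamma$ exactly as you highlight. The only (cosmetic) difference is that you expand $Q=P-\Delta$ directly into the three terms $P\rho\Delta^{\dag}+\Delta\rho P^{\dag}-\Delta\rho\Delta^{\dag}$, whereas the paper telescopes by adding and subtracting $P\rho Q^{\dag}$ and then separately bounds $\norm{Q}\leq\norm{P}+\gamma$; your version bypasses that extra step but yields the same bound.
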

\begin{proof}
Using Lemma \ref{thm:holder} with $p=\infty$ and $q=1$, we obtain

\begin{align}\label{eq:robustness-exp-inequality}
|\Tr[O~P\rho P^{\dag}]-&\Tr[O~Q\rho Q^{\dag}]|\leq \norm{O}\cdot \|P\rho P^{\dag}-Q\rho Q^{\dag}\|_1
\end{align}

For the second term in the RHS of the above equation, we can successively apply the tracial version of H\"{o}lder's inequality (Lemma \ref{thm:holder} with $p=\infty$ and $q=1$) the triangle inequality to obtain:

\begin{align}
\bigg\|P\rho P^{\dag}-Q\rho Q^{\dag}\bigg\|_1&=\norm{P\rho P^{\dag}-P\rho Q^{\dag}+P\rho Q^{\dag}-Q\rho Q^{\dag}}_1\\
                &\leq \norm{P\rho}_1\norm{P-Q}+\norm{P-Q}\norm{\rho Q}_1\\
                &\leq \norm{P}\norm{P-Q}+\norm{Q}\norm{P-Q}~~~~~~~~~~~~~~~~[\text{~As~}\norm{\rho}_1=1]\\
                &\leq \left(\|P\|+\|Q\|\right)\cdot \norm{P-Q}\\
                &\leq \left(\|P\|+\|Q-P+P\|\right)\cdot \norm{P-Q}\\
                &\leq \left(\|P\|+\norm{P-Q}+\norm{P}\right)\cdot \norm{P-Q}\\
                &\leq 2\norm{P}\norm{P-Q}+\norm{P-Q}^2.
\end{align}

Now, substituting this upper bound back in the RHS of Eq.~\eqref{eq:robustness-exp-inequality}, we obtain

\begin{align}
\left|\Tr[O~P\rho P^{\dag}] - \Tr[O~Q\rho Q^{\dag}]\right| &\leq \big\|O\big\| \big\|P-Q\big\|^2 + 2\big\|O\big\|\big\|P\big\|\big\|P-Q\big\|\\
&\leq \big\|O\big\| \big\|P-Q\big\|^2+2\big\|O\big\|\big\|P\big\|\big\|P-Q\big\|\nonumber\\
&\quad\leq \gamma^2\big\|O\big\|+2\big\|O\big\|\big\|P\big\|\gamma \nonumber\\
&\quad\leq 3\gamma\big\|O\big\|\big\|P\big\|
\end{align}
    
\end{proof}
Next, we bound the distance between two quantum states that have been transformed by a composition of two Completely Positive Trace Preserving (CPTP) maps. We have the following Lemma:
\begin{lemma}[Distance between quantum states obtained by applying a composition of CPTP maps]
\label{thm: Bounds on the distance between the composition of maps}
Let \( \{\mathcal{A}_i\}_{i=1}^K \) and \( \{\mathcal{B}_i\}_{i=1}^K \) be two sets of maps acting on any density operator \( \rho \), such that each \( \mathcal{A}_i \) and \( \mathcal{B}_i \) are CPTP maps. Assume that for all $i\in [1,K]$ , the following bound holds:
    \begin{equation}\label{eq: bound on composition initial assumption}
        \norm{\mathcal{A}_i[\rho] - \mathcal{B}_i[\rho]} _1\leq \eps.
    \end{equation}
Then, the compositions of these maps satisfy:
    \begin{equation}
        \norm{\bigcirc_{i=1}^K \mathcal{A}_i[\rho] - \bigcirc_{i=1}^K \mathcal{B}_i[\rho]}_1 \leq K \eps.
    \end{equation}
    \end{lemma}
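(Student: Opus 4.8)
The plan is to prove the bound by induction on the number of maps $K$, swapping one map at a time in a single-step ``hybrid'' argument. The base case $K=1$ is exactly the hypothesis Eq.~\eqref{eq: bound on composition initial assumption}, so the work is all in the inductive step, which reduces the $K$-fold statement to the $(K-1)$-fold statement plus one application of the per-step bound.

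For the inductive step I would write $\rho_{\mathcal{A}} = \bigcirc_{i=1}^{K-1}\mathcal{A}_i[\rho]$ and $\rho_{\mathcal{B}} = \bigcirc_{i=1}^{K-1}\mathcal{B}_i[\rho]$ for the states produced by the first $K-1$ compositions, so that the full compositions are $\mathcal{A}_K[\rho_{\mathcal{A}}]$ and $\mathcal{B}_K[\rho_{\mathcal{B}}]$. Inserting the hybrid term $\mathcal{A}_K[\rho_{\mathcal{B}}]$ and applying the triangle inequality for the trace norm splits the target quantity as $\norm{\mathcal{A}_K[\rho_{\mathcal{A}}] - \mathcal{A}_K[\rho_{\mathcal{B}}]}_1 + \norm{\mathcal{A}_K[\rho_{\mathcal{B}}] - \mathcal{B}_K[\rho_{\mathcal{B}}]}_1$.

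The two terms are then controlled by two distinct facts. For the first term I would invoke the contractivity of CPTP maps under the trace norm, namely $\norm{\mathcal{A}_K[X]}_1 \leq \norm{X}_1$ for any Hermitian $X$ (a standard consequence of data processing / monotonicity); applying this to $X = \rho_{\mathcal{A}} - \rho_{\mathcal{B}}$, which is Hermitian and traceless, bounds the term by $\norm{\rho_{\mathcal{A}} - \rho_{\mathcal{B}}}_1 \leq (K-1)\eps$ by the induction hypothesis. For the second term, since every $\mathcal{B}_i$ is CPTP, the intermediate state $\rho_{\mathcal{B}}$ is itself a legitimate density operator, so the hypothesis Eq.~\eqref{eq: bound on composition initial assumption} applies to it directly and bounds the term by $\eps$. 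Summing the two contributions yields $(K-1)\eps + \eps = K\eps$, closing the induction.

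There is no serious obstacle here; the only subtlety worth flagging is that the per-step bound must be invoked not at the initial state $\rho$ but at the \emph{evolved} state $\rho_{\mathcal{B}}$. This is legitimate precisely because the hypothesis is stated uniformly over all density operators, and because a composition of CPTP maps again produces a valid density operator. The contractivity of CPTP maps under the trace norm is the other essential ingredient; both are standard and can simply be cited, so the argument reduces to the bookkeeping of the telescoping above.
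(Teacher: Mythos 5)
Your proof is correct and is essentially the paper's argument: the paper telescopes through the hybrid compositions $\mathcal{B}_K\cdots\mathcal{B}_{j+1}\mathcal{A}_j\cdots\mathcal{A}_1[\rho]$, bounding each of the $K$ differences by $\eps$ using the per-step hypothesis at an evolved (still valid) density operator together with trace-norm contractivity of CPTP maps — which is exactly your induction unrolled, with the only cosmetic difference being that you swap the last map and lean on contractivity of $\mathcal{A}_K$, while the paper swaps from the outermost map inward and leans on contractivity of the $\mathcal{B}$'s. If anything, your single-hybrid step is stated more carefully than the paper's, whose displayed intermediate bound $\norm{\mathcal{A}_j[\rho^{\mathcal{A}}_{j-1}] - \mathcal{B}_j[\rho^{\mathcal{B}}_{j-1}]}_1 \leq \eps$ conflates the two evolved states (a harmless slip, since the actual telescoping terms share the same inner state).
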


\begin{proof}
Expanding the composition, we write:
\begin{align}
\norm{\bigcirc_{i=1}^K \mathcal{A}_i[\rho] - \bigcirc_{i=1}^K \mathcal{B}_i[\rho]}_1 =\norm{\mathcal{A}_K \mathcal{A}_{K-1} \dots \mathcal{A}_1[\rho] - \mathcal{B}_K \mathcal{B}_{K-1} \dots \mathcal{B}_1[\rho]}_1.
\end{align}
    
Adding and subtracting intermediate terms iteratively, and using the  triangle inequality, we obtain:
    \begin{align}
    \norm{\mathcal{A}_K \mathcal{A}_{K-1} \dots \mathcal{A}_1[\rho] - \mathcal{B}_K \mathcal{B}_{K-1} \dots \mathcal{B}_1[\rho]}_1 &\leq \norm{\mathcal{A}_K \mathcal{A}_{K-1} \dots \mathcal{A}_1[\rho] - \mathcal{B}_K \mathcal{A}_{K-1} \dots \mathcal{A}_1[\rho]}_1 + \nonumber \\
    &~~~~\norm{\mathcal{B}_K \mathcal{A}_{K-1} \dots \mathcal{A}_1[\rho] - \mathcal{B}_K \mathcal{B}_{K-1}\dots \mathcal{A}_1[\rho]}_1
    +\dots + \nonumber \\
    &~~~~\norm{\mathcal{B}_K\dots \mathcal{B}_2 \mathcal{A}_1[\rho] - \mathcal{B}_K\dots \mathcal{B}_2\mathcal{B}_1[\rho]}_1. \label{eq: bounds on the composition intermediate terms}
    \end{align}
    \vspace{5pt}
Now, let $ \rho^{\mathcal{A}}_k = \bigcirc_{i=1}^k \mathcal{A}_i[\rho] $ and $ \rho^{\mathcal{B}}_k = \bigcirc_{i=1}^k \mathcal{B}_i[\rho] $, representing the intermediate states obtained after the application of $k$ maps. Then due to the initial assumption Eq.~\eqref{eq: bound on composition initial assumption}
    \begin{equation}
        \norm{\mathcal{A}_{j}[\rho^{\mathcal{A}}_{j-1}] - \mathcal{B}_{j}[\rho^{\mathcal{B}}_{j-1}]}_1 \leq \eps
    \end{equation}
Using the contractivity of completely positive maps, 
    \begin{equation}
    \|\mathcal{B}_{j+1} \mathcal{A}_{j}[\rho^{\mathcal{A}}_{j-1}] - \mathcal{B}_{j+1}\mathcal{B}_{j}[\rho^\mathcal{B}_{j-1}]\|_1\leq \eps,
    \end{equation}
Using this repeatedly, we bound each term on the right side of the Eq.~\eqref{eq: bounds on the composition intermediate terms} by $\eps$. Thus, by summing the contributions across all maps, we obtain:
    \begin{equation}
    \left\|\bigcirc_{i=1}^K \mathcal{A}_i[\rho] - \bigcirc_{i=1}^K \mathcal{B}_i[\rho]\right\|_1 \leq K \eps.
    \end{equation}
\end{proof}

Now, consider two copies of the quantum state $\rho$, such that the CPTP map $\mathcal{A}$ has been applied to one copy, while another CPTP map $\mathcal{B}$ has been applied to the second copy to obtain $\mathcal{A}[\rho]$ and $\mathcal{B}[\rho]$, respectively. Now consider unitaries $U$ and $\widetilde{U}$ such that they are close (in spectral norm). Then, we find the distance ($1$-norm) between the quantum states obtained by applying $U$ to $\mathcal{A}[\rho]$, and $\widetilde{U}$ to $\mathcal{A}[\rho]$, via the following theorem:

\begin{theorem}[Distance between quantum states]
\label{thm:dist-unitary-cptp}
Suppose $U$ and $\widetilde{U}$ are unitary while $\mathcal{A}$, $\mathcal{B}$ are CPTP maps. Then for any density operator $\rho$,
$$
\norm{U~\mathcal{A}[\rho]U^{\dag} - \widetilde{U}~\mathcal{B}[\rho] \widetilde{U}^{\dag}}_1 \leq 2\norm{U-\widetilde{U}}+\norm{\mathcal{A}[\rho]-\mathcal{B}[\rho]}_1.
$$
~\\
\begin{proof}
    Using triangle inequality, and Tracial version of H\"{o}lder's inequality (Lemma \ref{thm:holder}), we obtain
    \begin{align}
        \norm{U~\mathcal{A}[\rho]U^{\dag} - \widetilde{U}~\mathcal{B}[\rho] \widetilde{U}^{\dag}}_1 &\leq \norm{U~\mathcal{A}[\rho]U^{\dag} - U~\mathcal{A}[\rho] \widetilde{U}^{\dag}}_1 + \norm{U~\mathcal{A}[\rho] \widetilde{U}^{\dag} - \widetilde{U}~\mathcal{B}[\rho] \widetilde{U}^{\dag}}_1\\
        &\leq\norm{U~\mathcal{A}[\rho]U^{\dag} - U~\mathcal{A}[\rho] \widetilde{U}^{\dag}}_1+\norm{U~\mathcal{A}[\rho] - \widetilde{U}~\mathcal{B}[\rho]}_1\cdot \norm{\widetilde{U}}\\
        &=\norm{U~\mathcal{A}[\rho]U^{\dag} - U~\mathcal{A}[\rho] \widetilde{U}^{\dag}}_1+\norm{U~\mathcal{A}[\rho] - \widetilde{U}~\mathcal{B}[\rho]}_1
    \end{align}
For the first term in the RHS we can use Lemma \ref{thm:holder} to obtain
\begin{align}
    \norm{U~\mathcal{A}[\rho]U^{\dag} - U~\mathcal{A}[\rho] \widetilde{U}^{\dag}}_1 &\leq \norm{U~\mathcal{A}[\rho]}_1\cdot \norm{U-\widetilde{U}}\\
    &\leq \norm{U}\cdot \norm{\mathcal{A}[\rho]}_1\cdot \norm{U-\widetilde{U}}\\
    &\leq \norm{U-\widetilde{U}}.
\end{align}
On the other hand, for the second term,
\begin{align}
    \norm{U~\mathcal{A}[\rho] - \widetilde{U}~\mathcal{B}[\rho]}_1=\norm{U~\mathcal{A}[\rho]-U~ \mathcal{B}[\rho]+U~\mathcal{B}[\rho]-\widetilde{U}\mathcal{B}[\rho]}_1.
\end{align}
By using triangle inequality once again to the RHS, followed by Lemma \ref{thm:holder}, we obtain
\begin{align}
    \norm{U~\mathcal{A}[\rho] - \widetilde{U}~\mathcal{B}[\rho]}_1&\leq\norm{U}\cdot\norm{\mathcal{A}[\rho]- \mathcal{B}[\rho]}_1+\norm{U-\widetilde{U}}\cdot \norm{\mathcal{B}[\rho]}_1\\
    &\leq \norm{\mathcal{A}[\rho]-\mathcal{B}[\rho]}_1+\norm{U-\widetilde{U}}.
\end{align}
So, overall, we have
$$
\norm{U~\mathcal{A}[\rho]U^{\dag} - \widetilde{U}~\mathcal{B}[\rho] \widetilde{U}^{\dag}}_1 \leq 2\norm{U-\widetilde{U}}+\norm{\mathcal{A}[\rho]-\mathcal{B}[\rho]}_1.
$$
This completes the proof.
\end{proof}
    
\end{theorem}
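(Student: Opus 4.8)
The plan is to reduce the bound to two rounds of the triangle inequality combined with the tracial Hölder inequality (Lemma~\ref{thm:holder}, in the form of Eq.~\eqref{eq:holder-special-case}), exploiting two elementary facts throughout: every unitary has unit spectral norm, $\norm{U}=\norm{\widetilde{U}}=1$, and every CPTP map sends a density operator to a density operator, so $\norm{\mathcal{A}[\rho]}_1=\norm{\mathcal{B}[\rho]}_1=1$. These are exactly the ingredients that make all the spurious prefactors collapse to $1$.

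First I would insert the hybrid term $U\,\mathcal{A}[\rho]\,\widetilde{U}^{\dag}$ and split by the triangle inequality:
\[
\norm{U\mathcal{A}[\rho]U^{\dag} - \widetilde{U}\mathcal{B}[\rho]\widetilde{U}^{\dag}}_1 \leq \norm{U\mathcal{A}[\rho]U^{\dag} - U\mathcal{A}[\rho]\widetilde{U}^{\dag}}_1 + \norm{U\mathcal{A}[\rho]\widetilde{U}^{\dag} - \widetilde{U}\mathcal{B}[\rho]\widetilde{U}^{\dag}}_1.
\]
For the first summand, I would factor the difference $(U^{\dag}-\widetilde{U}^{\dag})$ out on the right and apply Hölder with $p=\infty$, $q=1$, bounding it by $\norm{U\mathcal{A}[\rho]}_1\,\norm{U^{\dag}-\widetilde{U}^{\dag}} \leq \norm{U}\,\norm{\mathcal{A}[\rho]}_1\,\norm{U-\widetilde{U}} = \norm{U-\widetilde{U}}$. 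For the second summand, factoring $\widetilde{U}^{\dag}$ out on the right and using $\norm{\widetilde{U}}=1$ reduces it cleanly to $\norm{U\mathcal{A}[\rho] - \widetilde{U}\mathcal{B}[\rho]}_1$.

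The remaining step is to bound $\norm{U\mathcal{A}[\rho]-\widetilde{U}\mathcal{B}[\rho]}_1$. Here I would insert the second hybrid term $U\mathcal{B}[\rho]$ and apply the triangle inequality once more, obtaining $\norm{U\big(\mathcal{A}[\rho]-\mathcal{B}[\rho]\big)}_1 + \norm{\big(U-\widetilde{U}\big)\mathcal{B}[\rho]}_1$. Bounding the first piece by $\norm{U}\,\norm{\mathcal{A}[\rho]-\mathcal{B}[\rho]}_1 = \norm{\mathcal{A}[\rho]-\mathcal{B}[\rho]}_1$ and the second by $\norm{U-\widetilde{U}}\,\norm{\mathcal{B}[\rho]}_1 = \norm{U-\widetilde{U}}$ (again via Hölder and the two unit-norm facts) gives $\norm{U\mathcal{A}[\rho]-\widetilde{U}\mathcal{B}[\rho]}_1 \leq \norm{\mathcal{A}[\rho]-\mathcal{B}[\rho]}_1 + \norm{U-\widetilde{U}}$. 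Collecting the three contributions yields the claimed bound $2\norm{U-\widetilde{U}} + \norm{\mathcal{A}[\rho]-\mathcal{B}[\rho]}_1$.

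There is no genuinely hard step; the argument is a telescoping bookkeeping exercise. The only point requiring slight care is the consistent choice, at each stage, of which factor to peel off, so that every residual factor is either a unitary (spectral norm one) or a CPTP image of a density operator (trace norm one). This is what guarantees that the prefactors all reduce to $1$ and that the two unitary-difference terms combine to produce exactly the factor of $2$ in front of $\norm{U-\widetilde{U}}$.
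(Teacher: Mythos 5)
Your proposal is correct and follows essentially the same route as the paper's proof: the same two hybrid-term insertions ($U\,\mathcal{A}[\rho]\,\widetilde{U}^{\dag}$ and then $U\,\mathcal{B}[\rho]$), each followed by the triangle inequality and the tracial H\"{o}lder inequality, with the unit spectral norm of the unitaries and the unit trace norm of the CPTP images collapsing all prefactors to $1$. No gaps; nothing further to add.
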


\section{Correctness of Algorithm \ref{algo: nm-dynamics}}
\label{sec:app-correctness-algo-nm}

In this section, we formally prove the correctness of the algorithm (Algorithm \ref{algo: nm-dynamics}) for simulating non-Markovian collisions. 

\begin{restatable}[]{theorem}{}
\label{thm: Algorithm 3 proof}
Let $\varepsilon, \delta \in (0,1)$. Then, for $\eps'=\varepsilon/(6K\norm{O})$, 
with probability at least $1 - \delta$, Algorithm \ref{algo: nm-dynamics} outputs $\mu$, such that
$$
\left|\mu-\Tr[O~\mathcal{N}_K[\rho_S]]\right|\leq \varepsilon,
$$
using $T$ repetitions of the circuit shown in Figure \ref{fig: single ancilla non markovian collision model circuit}, where
\begin{align}
       T = \bigO\left( \frac{\|O\|^2 \ln(2/\delta)}{\eps^2} \right).
\end{align} 
Each such coherent run has a circuit depth of 
\begin{equation}
    \tau_d = \bigO\left(\beta ^2K^2\Delta t^2 \dfrac{\log(\beta  K \norm{O}\Delta t/\eps)}{\log\log(\beta  K \norm{O}\Delta t/\eps)}+K\tau_{\rho_E}\right)
\end{equation}
where, $\beta = \max_j \beta_j$, and $\tau_{\rho_E}=\max_{j}\tau_{\rho_{E_j}}$, where $\tau_{\rho_{E_j}}$ is the circuit depth of the unitary preparing the sub-environment in the state $\rho_{E_j}$.
\end{restatable}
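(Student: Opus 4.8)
The plan is to mirror the proof of Theorem~\ref{thm: Algorithm 1 proof}, tracking the joint state of the ancilla, system, and the two sub-environment registers through one coherent run of the circuit in Fig.~\ref{fig: single ancilla non markovian collision model circuit}, and then taking the expectation over the two sources of randomness in Algorithm~\ref{algo: nm-dynamics}: the i.i.d.\ sampling of the LCU unitaries $X_j,Y_j$ from $\mathcal{D}_j$, and the Bernoulli$(p)$ decision to apply the swap $S_{j,j+1}$. First I would initialize the ancilla and system in $\ket{+}\bra{+}\otimes\rho_S$ and the first environment register in $\rho_{E_1}$, and introduce (as in the Markovian case) a shorthand $\Phi^{(PQ),\mathcal{N}}_j$ for the superoperator that applies $P$ on the left and $Q^{\dagger}$ on the right of the system-environment collision, follows it with the intra-environment interaction $\mathcal{V}_{j,j+1}$, and traces out $E_j$ according to the odd/even bookkeeping of Eq.~\eqref{eq:nm-collision-map}. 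The crucial observation is that $\mathcal{V}_{j,j+1}=I_S\otimes\mathcal{C}_{j,j+1}$ acts as the identity on the ancilla and hence preserves the $2\times 2$ block structure in the ancilla basis $\{\ket{0},\ket{1}\}$; moreover, because the swap is applied to the environment registers uniformly rather than controlled on the ancilla, averaging over the Bernoulli variable replaces the conjugation $S_{j,j+1}(\cdot)S_{j,j+1}^{\dagger}$ with the CPTP channel $\mathcal{C}_{j,j+1}$ of Eq.~\eqref{eq:non-markovian-partial-swap}.

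With this setup I would argue by induction on the iteration index that, after $j$ iterations, the state takes the block form
$$
\rho_j=\frac{1}{2}\sum_{a,b\in\{0,1\}}\ket{a}\bra{b}\otimes\bigg(\bigcirc_{i=1}^{j}\Phi^{(P_a Q_b),\mathcal{N}}_i\bigg)[\rho_S],
$$
where $P_a,Q_b\in\{X_i,Y_i\}$ are fixed by the (anti-)controlled structure of Eq.~\eqref{eq: control and anti-control notation}, and the alternating assignment of the two environment registers ensures that exactly $E_i$ is discarded at each step. The final system-environment collision $U_{S_K}$ followed by tracing out $E_K$ (steps~4--6 of Algorithm~\ref{algo: nm-dynamics}) is handled precisely as the outermost layer of the $\mathcal{N}_K$ map in Eq.~\eqref{eq:nm-k-collision}. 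Measuring $\sigma^x\otimes O$ annihilates the two diagonal blocks, and taking the expectation over the LCU sampling (using $\mathbb{E}[X_j]=\mathbb{E}[Y_j]=\widetilde{U}_j/\alpha^{(j)}$) together with the swap randomness yields
$$
\mathbb{E}[\mu_k]=\frac{1}{\zeta^2}\Tr\big[O\,\widetilde{\mathcal{N}}_K[\rho_S]\big],\qquad \zeta=\prod_{j=1}^{K}\alpha^{(j)},
$$
so that $\mu=(\zeta^2/T)\sum_k\mu_k$ is an unbiased estimator of $\Tr[O\,\widetilde{\mathcal{N}}_K[\rho_S]]$, where $\widetilde{\mathcal{N}}_K$ is the approximate map of Eq.~\eqref{eq: approximate non-markovian K-collision map}.

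The remainder follows the template of Theorem~\ref{thm: Algorithm 1 proof} verbatim. Since each outcome obeys $|\zeta^2\mu_k|\le\zeta^2\norm{O}$, Hoeffding's inequality guarantees $|\mu-\Tr[O\,\widetilde{\mathcal{N}}_K[\rho_S]]|\le\eps/2$ with probability at least $1-\delta$ once $T=\bigO(\norm{O}^2\zeta^4\log(1/\delta)/\eps^2)$. Choosing $r=\bigO(\beta^2\Delta t^2K)$ in Lemma~\ref{lemma: lcu of time evolution} forces $\zeta=\bigO(1)$, giving the stated $T$ and, via the $\bigO(Kqr)$ gate count per run plus the $K\tau_{\rho_E}$ environment-preparation overhead, the claimed circuit depth $\tau_d$ (identical to the Markovian case because the swap is assumed constant depth). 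Finally, invoking Lemma~\ref{lemma: nm-approximate K-collision map distance bound} with $\eps'=\eps/(6K\norm{O})$ bounds $|\Tr[O\,\widetilde{\mathcal{N}}_K[\rho_S]]-\Tr[O\,\mathcal{N}_K[\rho_S]]|\le\eps/2$, and the triangle inequality closes the argument.

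The main obstacle I anticipate is the induction step: one must verify that the probabilistic, ancilla-independent swap interleaved between consecutive collisions does not destroy the coherence of the off-diagonal blocks $\ket{0}\bra{1}$ and $\ket{1}\bra{0}$ that carry the measured signal, and that the alternating odd/even assignment of the two environment registers composes correctly so that precisely $E_j$ is discarded at each iteration. Establishing that the expectation over the swap and the sampling commutes with the partial traces—so that a single run indeed averages to $\widetilde{\mathcal{N}}_K$ rather than to some mismatched ordering of the channels $\mathcal{C}_{j,j+1}$ and the collision maps—is the technical heart of the proof.
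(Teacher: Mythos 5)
Your proposal is correct and follows essentially the same route as the paper's proof: the same four-block induction in the ancilla basis with the maps $\Phi^{\mathcal{N}^{(PQ)}}_j$ absorbing $\mathcal{V}_{j,j+1}$ and the odd/even register bookkeeping, the same unbiased-estimator identity $\mathbb{E}[\mu_k]=\Tr\big[O\,\widetilde{\mathcal{N}}_K[\rho_S]\big]/\zeta^2$, the same Hoeffding bound with $r=\bigO(K\beta^2\Delta t^2)$ forcing $\zeta=\bigO(1)$, and the same final appeal to Lemma~\ref{lemma: nm-approximate K-collision map distance bound} with $\eps'=\eps/(6K\norm{O})$ plus the triangle inequality. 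The technical point you flag as the "heart" of the argument—that the Bernoulli-averaged, ancilla-independent swap realizes $\mathcal{C}_{j,j+1}$ and, acting trivially on the ancilla, preserves the off-diagonal blocks carrying the signal—is exactly what the paper handles (somewhat more implicitly) by folding the channel into the collision-map definition and observing that mismatched cross terms vanish at each induction step.
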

\begin{proof}
The proof is similar to Theorem \ref{thm: Algorithm 1 proof}. We first initialize the system and ancilla registers. Subsequently, we prepare the first environment register and apply the operations $X^{(c)}_1$ and $Y^{(a)}_1$ obtained by sampling $X_1$ and $Y_1$ from $\mathcal{D}_1$. Additionally, we also initialize another environment register that interacts with the first environment register via the partial swap operation represented by the map $\mathcal{V}_{1, 2}$. After this interaction, the first environment register is traced out, leaving the combined system, ancilla and the second environment registers in a state ready for subsequent interactions.

We define the map $\Phi^{\mathcal{N}~(PQ)}_j$ as:
\begin{equation}
    \widetilde{\Phi}^{\mathcal{N}^{(PQ)}}_j[.]\equiv \begin{cases}
\Tr_{E_{j}}\left[\mathcal{V}_{j,j+1}\big[P(.\otimes . \otimes \rho_{E_{j+1}})Q^{\dagger}\big]\right],~$j$~\text{is odd}\\~\\
\Tr_{E_{j}}\left[\mathcal{V}_{j,j+1}\big[P(.\otimes \rho_{E_{j+1}} \otimes .)Q^{\dagger}\big]\right],~ $j$~\text{is even}
\end{cases}     
\end{equation}
Here, the map $\widetilde{\Phi}_j^{\mathcal{N}^{(PQ)}}[.]$ represents applying operator $P$ from the left and $Q^{\dag}$ from the right, then applying the interaction between environment register followed by tracing out of the first (second) environment register $E_j$ if $j$ is odd (even). Thus, the state of the combined system-ancilla register after the initial collision can succinctly be expressed as:
\begin{align}
    \rho_1 = &\frac{1}{2} \bigg[
    |0\rangle\langle0| \otimes \Phi^{\mathcal{N}^{(Y_1Y_1)}}_{1}[\rho_S ]
    \nonumber + |0\rangle\langle1| \otimes \Phi^{\mathcal{N}^{(Y_1X_1)}}_{1}[\rho_S ] + |1\rangle\langle0| \otimes \Phi^{\mathcal{N}^{(X_1Y_1)}}_{1}[\rho_S ]
    \nonumber + |1\rangle\langle1| \otimes \Phi^{\mathcal{N}^{(X_1X_1)}}_{1}[\rho_S ]
    \bigg].
    \label{eq: expanded form of rho 1 in Algo 3}
\end{align}
We use the definition of the controlled and anti-controlled operator to simplify the combined state of the system and ancilla. 

To continue the process, we perform the next collision step analogously. We apply the next set of unitaries (obtained by sampling from $\mathcal{D}_2$), then we simulate the intra-environmental interaction by first initializing the next environment register in the state $\rho_{E_3}$ and interacting it with the previous environment and then tracing out the previous environment. At this stage, the cross terms involving different operators, such as $\Phi^{\mathcal{N}^{(Y_2 Y_2)}}_{2}\Phi^{\mathcal{N}^{(X_1X_1)}}_{1}$ vanish. Thus, after tracing out the appropriate environment register, the state simplifies neatly to:
    \begin{align}
    \rho_2 = &\frac{1}{2} 
    \bigg[
        |0\rangle\langle0| \otimes \Phi^{\mathcal{N}^{(Y_2Y_2)}}_{2}\Phi^{\mathcal{N}^{(Y_1Y_1)}}_{1} \left[\rho_S \right] + |0\rangle\langle1| \otimes \Phi^{\mathcal{N}^{(Y_2X_2)}}_{2}\Phi^{\mathcal{N}^{(Y_1X_1)}}_{1} \left[\rho_S \right] \nonumber \\      
        &+|1\rangle\langle0| \otimes \Phi^{\mathcal{N}^{(X_2Y_2)}}_{2}\Phi^{\mathcal{N}^{(X_1Y_1)}}_{1} \left[\rho_S \right] + |1\rangle\langle1| \otimes \Phi^{\mathcal{N}^{(X_2X_2)}}_{2}\Phi^{\mathcal{N}^{(X_1X_1)}}_{1} \left[\rho_S \right] \bigg]
    \end{align}

We continue this $K-1$ times, where in each step the mismatched terms will cancel out and we will be left with $K-1$ composition of the map as follows: 
    \begin{align}
    \label{eq:rho-K-one-run-nm}
        \rho_{K-1} 
        = &\frac{1}{2} 
        \bigg[
            |0\rangle\langle0| \otimes \bigcirc_{j=1}^{K-1} \Phi^{\mathcal{N}^{(Y_jY_j)}}_j \left[\rho_S \right]+ 
            |0\rangle\langle1| \otimes \bigcirc_{j=1}^{K-1} \Phi^{\mathcal{N}^{(Y_jX_j)}}_j \left[\rho_S \right] \nonumber \\&+  
            |1\rangle\langle0| \otimes \bigcirc_{j=1}^{K-1} \Phi^{\mathcal{N}^{(X_jY_j)}}_j \left[\rho_S \right]+
            |1\rangle\langle1| \otimes \bigcirc_{j=1}^{K-1} \Phi^{\mathcal{N}^{(X_jX_j)}}_j \left[\rho_S \right]
        \bigg]
    \end{align}

For the last collision we apply the control and anti-control operators $X_K^{(c)}$ and $Y_K^{(a)}$ and trace out the first (second) environment register if $K$ is odd (even). Resulting in the final state to be 
\begin{align}
    \label{eq:rho-K-one-run-nm}
        \rho_{K} 
        = &\frac{1}{2} 
        \bigg[
            |0\rangle\langle0| \otimes \Tr_{E_K}\left[Y_K\big(\bigcirc_{j=1}^{K-1} \Phi^{\mathcal{N}^{(Y_jY_j)}}_j \left[\rho_S \right]\big)Y_K^{\dag} \right]+ 
            |0\rangle\langle1| \otimes \Tr_{E_K}\left[Y_K\big(\bigcirc_{j=1}^{K-1} \Phi^{\mathcal{N}^{(Y_jX_j)}}_j \left[\rho_S \right]\big)X_K^{\dag} \right] \nonumber \\&+  
            |1\rangle\langle0| \otimes \Tr_{E_K}\left[X_K\big(\bigcirc_{j=1}^{K-1} \Phi^{\mathcal{N}^{(X_jY_j)}}_j \left[\rho_S \right]\big)Y_K^{\dag} \right]+
            |1\rangle\langle1| \otimes \Tr_{E_K}\left[X_K\big(\bigcirc_{j=1}^{K-1} \Phi^{\mathcal{N}^{(X_jX_j)}}_j \left[\rho_S \right]\big)X_K^{\dag} \right]
        \bigg]
    \end{align}
Finally, we measure the ancilla and system register with the observable $\sigma^x \otimes O$. This constitutes one run of Algorithm \ref{algo: nm-dynamics}. Now measuring the ancilla on $\sigma^x$, the first and last terms of Eq.~\eqref{eq:rho-K-one-run-nm} disappear, and so, the output of the $k$-${\mathrm{th}}$ run, 
    \begin{align}
    \mu_k = \frac{1}{2} \Tr &\Bigg[
        O~\bigg[\Tr_{E_K}\left[Y_K\big(\bigcirc_{j=1}^{K-1} \Phi^{\mathcal{N}^{(Y_jX_j)}}_j \left[\rho_S \right]\big)X_K^{\dag} \right] + \Tr_{E_K}\left[Y_K\big(\bigcirc_{j=1}^{K-1} \Phi^{\mathcal{N}^{(Y_jX_j)}}_j \left[\rho_S \right]\big)X_K^{\dag} \right]
        \bigg]
    \Bigg]
    \end{align}
Then, by the linearity of expectation, we have 
        \begin{align}
        \mathbb{E}[\mu_k]&=\frac{1}{\zeta^2} \left[ \Tr_{E_K}\left[\widetilde{U}_{S_K}\big(\bigcirc_{j=1}^{K-1} \widetilde{\Phi}^{\mathcal{N}}_j \left[\rho_S \right]\big)\widetilde{U}_{S_K}^{\dag} \right] \right]= \frac{1}{\zeta^2} \Tr\left[O 
                \widetilde{\mathcal{N}}_K \left[ \rho_S \right]\right],
        \end{align}
where $\zeta$ is as defined in Algorithm \ref{algo: nm-dynamics}. Thus, the outcome of each run is a random variable that in expectation value estimates the desired quantity (upto a multiplicative factor of $1/\zeta^2$). 

Since the observable $O$ has eigenvalues bounded within $[-\norm{O},\norm{O}]$, each individual outcome $\mu_k$ satisfies:
    \begin{align}
        -{\norm{O}\zeta^2 \leq \zeta^2\mu_k\leq \norm{O}\zeta^2}.
    \end{align}
After performing the experiment for $T$ independent runs, we have a collection of random variables $\{\mu_k\}_{k=1}^T$. Then, from Hoeffding's inequality,
$$
\mu=\dfrac{\zeta^2}{T}\sum_{k=1}^{T}\mu_k,
$$
satisfies
 \begin{align*}
        \Pr\Bigg[\bigg|\mu - \Tr\big[O~\widetilde{\mathcal{N}}_K[ \rho_S ] \big]\bigg| \geq \eps/2\Bigg] \leq 2\exp\left[-\dfrac{T\eps^2}{8 \zeta^4\norm{O}^2}\right].
        \end{align*}       
Thus, with probability at least $1-\delta$,
        \begin{equation} \label{eq:error in expectation and imprecise map-nm}
        \left|\mu - \Tr\left[O\widetilde{\mathcal{N}}_K[\rho_S ] \right]\right| 
        \leq \eps/2,
        \end{equation}
for 
\begin{equation*}
        \label {eq:number-of-repititions-nm}
        T\geq \dfrac{8\norm{O}^2\ln(2/\delta)\zeta^4}{\eps^2}.
\end{equation*}

Now from the statement of the Lemma, for any $j\in [1,K]$.
$$
\norm{U_j-\widetilde{U}_j}\leq \eps'= \dfrac{\varepsilon}{6K\norm{O}}.
$$
Then using Lemma \ref{lemma: nm-approximate K-collision map distance bound} and the triangle inequality, we obtain
\begin{align} 
\label{eq:error in expectation and true map-nm}
        \bigg|\mu - \Tr\big[O~\mathcal{N}_K [\rho_S] \big] \bigg|
        &\leq \bigg|\mu - \Tr\big[O\widetilde{\mathcal{N}}_K[\rho_S ] \big]\bigg|+ \bigg|\Tr\big[O\widetilde{\mathcal{N}}_K[\rho_S ] \big]
        - \Tr\big[O\mathcal{N}_K[\rho_S ] \big]\bigg|\\
        &\leq \eps/2+\eps/2=\eps.
        \end{align}
In order to estimate the circuit depth of Algorithm~\ref{algo: nm-dynamics} and the number of classical repetitions $T$, we need to find $\zeta$. Analogous to the Theorem \ref{thm: Algorithm 1 proof}, we use the Lemma \ref{lemma: lcu of time evolution} for each collision unitary and choose the maximum repetitions to be $r = \bigO(K\beta^2 \Delta t^2)$, which ensure $\zeta = \bigO(1)$ 
Consequently, the number of classical repetitions needed is
    \begin{equation}
        T =  \bigO\left( \dfrac{\norm{O}^2\log(1/\delta)}{\eps^2}\right).
    \end{equation}
Similarly, analogous to Theorem \ref{thm: Algorithm 1 proof}, for the appropriate choices of $r$ and $q$, we have the overall circuit depth per coherent run as
    \begin{equation}
        \tau_d = \bigO\left(\beta ^2K^2\Delta t^2 \dfrac{\log(\beta  K \norm{O}\Delta t/\eps)}{\log\log(\beta  K \norm{O}\Delta t/\eps)} + K\tau_{\rho_E}\right),
    \end{equation}
where the additive term $K \tau_{\rho_E}$ again appears as in each run of the circuit, on account of preparing the individual sub-environments, a total of $K$ times. This completes the proof.
\end{proof}
\end{document}